\newtheorem{theorem}{Theorem}
\newtheorem{corollary}[theorem]{Corollary}
\newtheorem{proposition}[theorem]{Proposition}
\newtheorem{lemma}[theorem]{Lemma}
\theoremstyle{definition}
\newtheorem{definition}[theorem]{Definition}
\newtheorem{remark}[theorem]{Remark}
\newtheorem{step}{Step}
\renewcommand{\subset}{\subseteq}
\newcommand{\cliquename}{{\sc{Clique}}\xspace}
\newcommand{\domsetname}{{\sc{Dominating Set}}\xspace}
\newcommand{\mdsname}{{\sc{MDS}}\xspace}
\newcommand{\midsname}{{\sc{MIDS}}\xspace}
\newcommand{\kDS}{$k$-{\sc{Dominating Set}}\xspace}
\newcommand{\kClique}{$k$-{\sc{Clique}}\xspace}
\newcommand{\name}[1]{\textsc{#1}\xspace}
\newcommand{\decnamedefn}[3]{
\begin{tabbing} \name{#1}\\
\emph{Input:} \hspace{1.2cm} \= \parbox[t]{14cm}{#2} \\
\emph{Question:}             \> \parbox[t]{14cm}{#3} \\
\end{tabbing}
}
\newcommand{\parnamedefn}[4]{
\begin{tabbing}
\name{#1}\\
\emph{Input:} \hspace{1.2cm} \= \parbox[t]{14cm}{#2} \\
\emph{Parameter:}            \> \parbox[t]{14cm}{#3} \\
\emph{Question:}             \> \parbox[t]{14cm}{#4} \\
\end{tabbing}
}
\newcommand{\ds}{\name{Dominating Set}}
\newcommand{\rbds}{\name{Red-Blue Dominating Set}}
\newcommand{\crbds}{\name{Colourful Red-Blue Dominating Set}}
\newcommand{\clique}{\name{Clique}}
\newcommand{\wth}{$W[2]$-hard}
\newcommand{\YES}{\textup{\textsf{YES}}\xspace}
\begin{document}

  \date{}

  \author{
    Marek Cygan\thanks{Institute of Informatics,
      University of Warsaw, Poland, \texttt{cygan@mimuw.edu.pl}}
    \and
    Geevarghese Philip\thanks{The Institute of Mathematical Sciences, Chennai, India,
      \texttt{gphilip@imsc.res.in}}
    \and
    Marcin Pilipczuk\thanks{Institute of Informatics,
      University of Warsaw, Poland, \texttt{malcin@mimuw.edu.pl}}
    \and
    Micha\l{} Pilipczuk
      \thanks{Faculty of Mathematics, Computer Science and Mechanics,
      University of Warsaw, Poland, \texttt{michal.pilipczuk@students.mimuw.edu.pl}}
    \and
    Jakub Onufry Wojtaszczyk
    \thanks{
      \texttt{onufry@mimuw.edu.pl} }
  }

  \title{Dominating Set is Fixed Parameter Tractable in Claw-free Graphs\thanks{The authors from the University of Warsaw were partially supported by Polish Ministry of Science grant no. N206 567140 and Foundation for Polish Science. }}

  \maketitle

\begin{abstract}
  We show that the \ds{} problem parameterized by solution size is
  fixed-parameter tractable (FPT) in graphs that do not contain
  the claw (\(K_{1,3}\), the complete bipartite graph on four
  vertices where the two parts have one and three vertices,
  respectively) as an \emph{induced} subgraph. We present an
  algorithm that uses $2^{O(k^2)} n^{O(1)}$ time and polynomial
  space to decide whether a claw-free graph on \(n\) vertices has
  a dominating set of size at most \(k\). Note that this
  parameterization of \ds{} is \wth{} on the set of all graphs,
  and thus is unlikely to have an FPT algorithm for graphs in
  general.

  The most general class of graphs for which an FPT algorithm was
  previously known for this parameterization of \ds{} is the class
  of \(K_{i,j}\)-free graphs, which exclude, for some fixed
  \(i,j\in\mathbb{N}\), the complete bipartite graph \(K_{i,j}\)
  as a \emph{subgraph}. For \(i,j\ge 2\), the class of claw-free
  graphs and any class of \(K_{i,j}\)-free graphs are not
  comparable with respect to set inclusion. We thus \emph{extend}
  the range of graphs over which this parameterization of \ds{} is
  known to be fixed-parameter tractable.

  We also show that, in some sense, it is the presence of the claw
  that makes this parameterization of the \ds{} problem hard. More
  precisely, we show that for any \(t\ge 4\), the \ds{} problem
  parameterized by the solution size is \wth{} in graphs that
  exclude the \(t\)-claw \(K_{1,t}\) as an induced subgraph. Our
  arguments also imply that the related \name{Connected Dominating
    Set} and \name{Dominating Clique} problems are \wth{} in these
  graph classes. 

  Finally, we show that for any \(t\in\mathbb{N}\), the \clique{}
  problem parameterized by solution size, which is \(W[1]\)-hard
  on general graphs, is FPT in \(t\)-claw-free graphs. Our results
  add to the small and growing collection of FPT results for graph
  classes defined by excluded \emph{subgraphs}, rather than by
  excluded \emph{minors}.
  

\end{abstract}

\newcommand{\naglowek}[1]{\noindent {\bf{#1}} }
\newcommand{\claw}[4]{\ensuremath{G[\{#1,#2,#3,#4\}]}}

\section{Introduction} \label{sec:intro}

A {\em dominating set} of a graph $G=(V,E)$ is a set $S\subseteq
V$ of vertices of $G$ such that every vertex in $V\setminus S$ is
adjacent to some vertex in $S$. The \domsetname problem is defined
as:

\decnamedefn{\domsetname}{A graph $G=(V,E)$ and a non-negative
  integer $k$.}{Does $G$ have a dominating set with \emph{at most}
  $k$ vertices?}

A {\em clique} in a graph $G=(V,E)$ is a set $C\subseteq V$ of
vertices of $G$ such that there is an edge in \(G\) between any
two vertices in \(C\). The \cliquename problem is defined as:

\decnamedefn{\cliquename}{A graph $G=(V,E)$ and a non-negative
  integer $k$.}{Does $G$ contain a clique with \emph{at least} $k$
  vertices?}

The \domsetname and \cliquename problems are both classical
NP-hard problems, belonging to Karp's original
list~\cite{Karp1972} of 21 NP-complete problems. These problems
were later shown to be NP-hard even in very restricted graph
classes, such as the class of planar graphs with maximum
degree~$3$~\cite{GareyJohnson1979} for \domsetname, and the class
of \(t\)-interval graphs for any \(t\ge 3\) for
\cliquename~\cite{ButmanHermelinLewensteinRawitz2010}. Hence,
unless \(\text{P}=\text{NP}\), there is no polynomial-time
algorithm that solves these problems even in such restricted graph
classes.

Parameterized
algorithms~\cite{downey-fellows,FlumGroheBook,NiedermeierBook}
constitute one approach towards solving NP-hard problems in
``feasible'' time. Each parameterized problem comes with an
associated {\em parameter}, which is usually a non-negative
integer, and the goal is to find algorithms that solve the problem
in polynomial time {\em when the parameter is fixed}, where the
degree of the polynomial is independent of the parameter. More
precisely, if $k$ is the parameter and $n$ the size of the input,
then the goal is to obtain an algorithm that solves the problem in
time \(f(k)\cdot n^{c}\) where $f$ is some computable function and
$c$ is a constant independent of $k$. Such an algorithm is called
a fixed-parameter-tractable (FPT) algorithm, and the class of all
parameterized problems that have FPT algorithms is called FPT; a
parameterized problem that has a fixed-parameter-tractable
algorithm is said to be (in) FPT.

Together with this revised notion of tractability, parameterized
complexity theory offers a corresponding notion of intractability
as well, captured by the concept of {\em $W$-hardness}. In brief,
the theory defines a hierarchy of complexity classes $\text{FPT} \subset
\text{W[1]} \subset \text{W[2]} \cdots \subset \text{XP}$, where each
inclusion is believed to be strict --- on the basis of evidence
similar in spirit to the evidence for believing that
\(\text{P}\ne\text{NP}\) ---
and XP is the class of all parameterized problems that can be
solved in $O(n^{f(k)})$ time where $n$ is the input size, $k$ the
parameter, and $f$ is some computable
function~\cite{downey-fellows,FlumGroheBook}.

A natural parameter for both \domsetname and \cliquename is $k$,
the size of the solution being sought. Natural parameterized
versions of these problems are thus the \kDS and \kClique
problems, defined as follows:
 
\parnamedefn{\kDS}{A graph $G=(V,E)$, and a non-negative integer
  $k$.}{$k$}{Does $G$ have a dominating set with \emph{at most} $k$ vertices?}

\parnamedefn{\kClique}{A graph $G=(V,E)$ and a non-negative
  integer $k$.}{$k$}{Does $G$ contain a clique with \emph{at least} $k$
  vertices?}

It turns out that both the \domsetname and \cliquename problems,
with these parameterizations, are still hard to solve. More
precisely, \kDS is the canonical W[2]-hard problem, and \kClique
is the canonical W[1]-hard problem~\cite{downey-fellows}. Thus
there are no FPT algorithms that solve these problems unless
\(\text{FPT}=\text{W[2]}\) and \(\text{FPT}=\text{W[1]}\),
respectively, which are both considered unlikely.

These problems do become easier in the parameterized sense when
the input is restricted to certain classes of graphs. Thus, the
\kDS problem has FPT algorithms in planar
graphs~\cite{domset-planar}, graphs of bounded
genus~\cite{domset-genus}, nowhere-dense classes of
graphs\cite{DawarKreutzer2009}, $K_{h}$-topological-minor-free
graphs and graphs of bounded degeneracy~\cite{domset-degenerate},
and in \(K_{i,j}\)-free graphs~\cite{domset-philip}. It is easily
observed that \kClique has an FPT algorithm in \emph{any} class of
graphs characterized by a finite set of excluded minors or
excluded subgraphs; this includes all the classes mentioned above
and many more.

A number of powerful tools that yield FPT algorithms are based on
encoding problems in terms of formulas in different logics.  Much
effort has gone into understanding the parameterized complexity of
evaluating logic formulas on {\em{sparse}} graphs, where the
length of the formula is the parameter.  A stellar example is the
celebrated theorem by Courcelle~\cite{courcelle} which states that
\emph{any} problem that can be expressed in Monadic Second-Order
Logic has FPT algorithms when restricted to graphs of bounded
treewidth.  Similarly, a sequence of papers gives FPT algorithms
for problems expressible in First-Order Logic on graph classes of
bounded degree~\cite{seese:bounded-degree}, bounded local
treewidth~\cite{frick-grohe:bounded-local-tw}, excluding a
minor~\cite{flum-grohe:exclude-minor}, locally excluding a
minor~\cite{dawar:FO-local-minor}, and classes of bounded
expansion~\cite{kral:FO-bounded-exp}. Note that the existence of a
clique (resp. dominating set) of size $k$ can be expressed as a
first order formula of length $O(k^2)$ (resp.~$O(k)$), and so both
\kClique and \kDS are FPT on the aforementioned classes of sparse
graphs.

The \emph{claw} is the complete bipartite graph \(K_{1,3}\), which
has a single vertex in one part and three in the other part of the
bipartition. \emph{Claw-free} graphs are undirected graphs which
exclude the claw as an induced subgraph. Equivalently, an
undirected graph is claw-free if it does not contain a vertex with
three pairwise nonadjacent neighbours. Claw-free graphs are a
generalization of \emph{line} graphs, and they have been
extensively studied from the graph-theoretic and algorithmic
points of view --- see the survey by Faudree et
al.~\cite{FaudreeFlandrinRyjacek1997} for a summary of the main
results. More recently, Chudnovsky and
Seymour~\cite{chudnovsky:clawfree1,chudnovsky:clawfree2,chudnovsky:clawfree3,chudnovsky:clawfree4,chudnovsky:clawfree5,chudnovsky:clawfree6}
developed a structure theory for this class of graphs, analogous
to the celebrated graph structure theorem for minor-closed graph
families proved earlier by Robertson and
Seymour~\cite{RobertsonSeymour2003}. While some problems which are
NP-hard in general graphs (e.g.: Maximum Independent Set) become
solvable in polynomial time in claw-free
graphs~\cite{FaudreeFlandrinRyjacek1997}, it turns out that both
\domsetname~\cite{HedetniemiLaskar1988} and
\cliquename~\cite{FaudreeFlandrinRyjacek1997} are NP-hard on
claw-free graphs.

\paragraph{Our Results.}

$K_{i,j}$ denotes the complete bipartite graph on $i+j$ vertices
where one piece of the partition has $i$ vertices and the other
part has $j$. A graph is said to be {\em $K_{i,j}$-free} if it
does not contain $K_{i,j}$ as a (not necessarily induced)
subgraph. To the best of our knowledge, \(K_{i,j}\)-free graphs
are the most general graph classes currently
known~\cite{domset-philip} to have an FPT algorithm for the \kDS
problem. Observe that in the interesting case when \(i,j\ge 2\),
the class of claw-free graphs is not comparable --- with respect
to set inclusion --- with any class of \(K_{i,j}\)-free graphs: a
\(K_{i,j}\)-free graph can contain a claw, and a claw-free graph
can contain a \(K_{i,j}\) as a subgraph. In the main result of
this paper, we show that \kDS is FPT in claw-free graphs:

\begin{theorem}\label{thm:ds-claw-free-fpt}
  The \kDS problem can be solved in \(2^{O(k^2)} n^{O(1)}\) time
  and using \(n^{O(1)}\) space.
\end{theorem}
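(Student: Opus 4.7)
The plan is to apply a depth-bounded branching algorithm, where at each of $k$ branching steps the claw-freeness assumption lets us reduce from up to $n$ candidate ``dominators'' to only $2^{O(k)}$ of them; this yields the promised $2^{O(k^2)} n^{O(1)}$ bound, with polynomial space because the branching tree is explored depth-first.

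At the top level, maintain a partial dominating set $D$ with $|D| \le k$ together with the set $U$ of vertices not yet dominated by $D$. If $U = \emptyset$ we return \YES; if $|D| = k$ and $U \neq \emptyset$ we reject this branch; otherwise, pick a ``pivot'' $v \in U$. Any extension of $D$ to a solution must contain a vertex of $N[v]$, so one branches over the choice of this dominator. The naive branching factor $|N[v]|$ can be as large as $n$, so the crux of the proof is to narrow this down to $2^{O(k)}$ candidates using claw-freeness.

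To achieve the reduction, I would maintain a ``witness set'' $T \subseteq V$ of size $O(k)$ — for instance, a carefully chosen collection of representatives of $U$ whose cardinality is bounded because, in a claw-free graph admitting a dominating set of size $k$, no ``hard-to-dominate'' substructure can be too big: every vertex of the solution is adjacent to at most $2$ pairwise non-adjacent vertices of such an independent-like witness, forcing $|T| = O(k)$. Each candidate $u \in N[v]$ is then labelled by its adjacency pattern $N(u) \cap T \subseteq T$, giving at most $2^{|T|} = 2^{O(k)}$ labels. The structural lemma to prove is that within each label class it suffices to keep a constant number of representatives and discard the rest: two candidates with the same pattern are interchangeable in any hypothetical solution, because a vertex supposedly dominated only by one of them would, together with the other candidate and $v$, witness a configuration forbidden by claw-freeness (three pairwise non-adjacent vertices sharing a common neighbour is exactly a claw).

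The main obstacle I expect is proving a clean interchangeability lemma. The naive formulation fails whenever a candidate $u$ has ``private'' neighbours of its own that no other candidate in its label class reaches, so one must either enrich $T$ when such private neighbours appear during the search, or add reduction rules that eliminate ``dominated'' candidates whose neighbourhood in $U$ is contained in another candidate's. A likely intermediate step is to recast the task as a (coloured) \rbds instance, where the small witness set $T$ plays the role of the red side and the label classes play the role of colour classes, so that the branching factor becomes a clean ``one blue per colour''. Ensuring that $T$ remains of size $O(k)$ across all $k$ branching levels, while still keeping the label-class representatives valid, is the delicate inductive invariant that the whole argument has to sustain.
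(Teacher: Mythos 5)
Your plan hinges on an interchangeability lemma --- that two candidates $u_1,u_2\in N[v]$ with the same adjacency pattern on a witness set $T$ of size $O(k)$ can be swapped in any solution --- and this is precisely where the argument breaks, as you yourself half-concede. If $w$ is dominated only by $u_1$, the vertices you can assemble are $u_1$ adjacent to $v$ and $w$, with $wu_2\notin E$; but $v u_2\in E$ (both are in $N[v]$), so no claw is forced, and nothing prevents $u_1$ from having arbitrarily many ``private'' neighbours invisible to $T$. The two patches you float (enriching $T$ on the fly, or discarding candidates whose neighbourhood in $U$ is contained in another's) are not worked out, and keeping $|T|=O(k)$ across $k$ levels while preserving the validity of the representatives is exactly the part that needs a proof and does not obviously have one. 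Also, your bound $|T|=O(k)$ is justified by the ``at most two pairwise non-adjacent neighbours'' argument, which is valid only if $T$ is an independent set --- a constraint your ``representatives of $U$'' construction does not guarantee. (Separately, reducing \emph{to} \rbds{} cannot help: the paper shows \rbds{} is the source of $W[2]$-hardness even on $4$-claw-free graphs, so a clean reduction in that direction would have to exploit claw-freeness in a way your sketch does not specify.)

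There is also a structural mismatch with what actually makes the problem tractable. The paper starts from a maximum independent set $I$ (computable in polynomial time in claw-free graphs, with $|I|\le 2k$ or an immediate answer), partitions $V\setminus I$ into $O(k^2)$ packs according to which one or two vertices of $I$ each vertex sees, and shows the solution takes at most one vertex per pack. Crucially, even after all of its pruning, the choices of representatives in different packs do \emph{not} decouple: they are tied by pairwise independence and domination constraints, and the final step is to show the constraint graph has maximum degree $2$ so that an auxiliary CSP can be solved by dynamic programming along paths and cycles. Your proposal implicitly assumes the choices at different branching levels become independent once representatives are picked, which is false in general; without an analogue of the CSP/degree-reduction phase the algorithm has no way to certify that the $k$ chosen dominators jointly cover everything while remaining (essentially) independent. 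So the gap is not a missing technical detail but the absence of the mechanism that handles the interaction between choices.
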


 We thus
\emph{extend} the range of graphs in which \kDS is FPT, to beyond
classes that can be described as \(K_{i,j}\)-free.

For \(t\in\mathbb{N}\), the \(t\)-claw is the graph
\(K_{1,t}\). Given that \kDS is FPT in claw-free graphs, one
natural question to ask is whether the problem remains FPT in
graphs that exclude larger claws as induced subgraphs. We show
that this is indeed \emph{not} the case; the presence of the
(\(3\)-)claw is what makes the problem W[2]-hard, in the following
sense:

\begin{theorem}\label{thm:ds-t-claw-free-whard}
  For any \(t\ge 4\), the \kDS problem is W[2]-hard in graphs
  which exclude the \(t\)-claw as an induced subgraph.
\end{theorem}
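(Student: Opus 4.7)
My plan is to prove $W[2]$-hardness via a polynomial-time parameterized reduction from the canonical $W[2]$-hard problem \kDS{} on general graphs. Given an instance $(G, k)$, I would construct a graph $G'$ that excludes $K_{1,t}$ as an induced subgraph, together with an integer $k' = f(k)$, such that $G'$ has a dominating set of size at most $k'$ iff $G$ has one of size at most $k$.

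The natural starting point is a selector/demand bipartite skeleton: for each vertex $v \in V(G)$ introduce a \emph{selector} $r_v$, for each $u \in V(G)$ a \emph{demand} $d_u$, and an edge $r_v d_u$ iff $u \in N_G[v]$. Placing the selectors in a clique ensures that any dominating set of $G'$ consisting of selectors is already clique-like --- which is precisely what makes the same reduction also yield hardness for \name{Connected Dominating Set} and \name{Dominating Clique} --- and the selector-only dominating sets of $G'$ biject with the dominating sets of $G$. Unfortunately, high-degree vertices of $G$ create induced $K_{1, \deg_G(v)+1}$ stars at the corresponding selectors, so this skeleton by itself is not $K_{1,t}$-induced-free.

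To destroy these forbidden induced stars I would thicken the skeleton with additional edges among the demands (and, where needed, small \emph{buffer} cliques attached to each selector) so that for every vertex $w \in V(G')$ the neighborhood $N_{G'}(w)$ can be covered by at most $t - 1$ cliques, giving independence number at most $t - 1$ and excluding every induced $K_{1,t}$. The correctness proof then splits into three parts: a case analysis over the vertex types of $G'$ verifying the absence of induced $K_{1,t}$; a forward direction lifting a dominating set $D$ of $G$ to $\{r_v : v \in D\}$ together with a bounded number of auxiliary vertices needed to dominate the buffer structure; and a reverse direction projecting a dominating set $D'$ of $G'$ back to $V(G)$ via a vertex-gadget accounting argument.

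The main obstacle is controlling the parameter: a naive splitting of each high-degree selector into one copy per buffer clique of size $t-2$ would blow $k'$ up by a factor of $\max_v \deg_G(v)$, which may be $\Theta(n)$, so the reduction would no longer be parameterized. Overcoming this requires either sharing buffer gadgets across selectors or reducing instead from a cleverer $W[2]$-hard source problem (such as a restricted variant of \kDS{} in which the required fan-out is already bounded in terms of $k$). That balance --- simultaneously forbidding $K_{1,t}$ and keeping $k' = f(k)$ while maintaining a faithful dominating set correspondence --- is where the proof will demand the most care.
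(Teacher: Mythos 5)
Your plan has the right general shape --- an incidence structure between ``selectors'' and ``demands'', followed by cliquification to kill large induced stars --- and this is indeed the spirit of the paper's reduction. But as written it has a genuine gap that you have only half-identified. You flag the parameter blow-up caused by buffer cliques, but you do not address the more fundamental soundness problem: the moment you add edges among the demands to bound the independence number of each selector's neighbourhood, a \emph{single} demand vertex may dominate an enormous set of demands ``for free'', and nothing in your construction prevents a dominating set of \(G'\) from consisting of a few demand vertices that have no meaningful preimage in \(G\). Your ``reverse direction \ldots via a vertex-gadget accounting argument'' is exactly the step that needs a concrete mechanism, and none is supplied. Likewise, your fallback of reducing from a variant of \kDS{} ``in which the required fan-out is already bounded'' is not known to be \wth{} (bounding degrees in terms of \(k\) typically makes domination FPT), so that escape route does not obviously exist.

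The paper resolves both difficulties at once by changing the source problem. It reduces from \rbds{} (already \wth{}) via an intermediate \emph{colourful} version \crbds{}, obtained by taking \(k\) colour-coded copies of the red side. In the final graph both \(R\) and \(B\) are made cliques (so every neighbourhood is a union of at most three cliques, giving \(4\)-claw-freeness immediately, with no buffer gadgets and no parameter change), and for each colour class \(R_i\) a single new vertex \(v_i\) adjacent exactly to \(R_i\) is added. These \(k\) vertices \(v_i\) force any size-\(k\) dominating set to spend its entire budget taking one vertex from each \(R_i\cup\{v_i\}\); this simultaneously guarantees that \(R\) is dominated, leaves no budget for cheating inside the demand clique \(B\), and lets one assume without loss of generality that the solution lies in \(R\). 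That budget-exhaustion gadget is the missing idea in your proposal; without it (or something playing the same role) the reverse direction of your reduction fails.
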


Our third and final result is to show that --- as might perhaps be
expected --- excluding a claw of any size renders the \kClique
problem FPT:

\begin{theorem}\label{thm:clique-fpt}
  For any \(t\ge 3\), the \kClique problem is FPT in graphs which
  exclude the \(t\)-claw as an induced subgraph.
\end{theorem}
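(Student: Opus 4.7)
The plan is to prove this via a straightforward application of Ramsey's theorem, exploiting the fact that in a $K_{1,t}$-free graph the neighborhood of every vertex has bounded independence number. Concretely, if $G$ excludes $K_{1,t}$ as an induced subgraph, then for every vertex $v$ the subgraph $G[N(v)]$ contains no independent set of size $t$, since any such set together with $v$ would form an induced copy of $K_{1,t}$. Let $R = R(k-1, t)$ denote the classical Ramsey number, so that every graph on at least $R$ vertices contains either a clique of size $k-1$ or an independent set of size $t$. The number $R$ depends only on $k$ and $t$, and the standard inductive proof of Ramsey's theorem is constructive, giving a polynomial-time procedure that explicitly returns one of the two promised substructures.

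The algorithm begins by checking whether some vertex $v$ of $G$ has degree at least $R$. If so, apply the constructive Ramsey procedure to $G[N(v)]$. Since $G[N(v)]$ admits no independent set of size $t$, the procedure must output a clique $C$ of size $k-1$; then $C \cup \{v\}$ is a clique of size $k$ in $G$, and we answer \YES.

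Otherwise every vertex of $G$ has degree smaller than $R$, so $G$ has maximum degree bounded by a constant depending only on $k$ and $t$. Any $k$-clique lies entirely in the closed neighborhood of any one of its vertices, so the algorithm enumerates every pair $(v, S)$ with $v \in V(G)$ and $S \subseteq N(v)$ of size $k-1$, and checks whether $S \cup \{v\}$ induces a clique. The number of such pairs is at most $n \cdot \binom{R-1}{k-1}$, and each check takes $O(k^2)$ time, so the total running time is $f(k, t) \cdot n^{O(1)}$ for a suitable computable function $f$, which yields the desired fixed-parameter tractability.

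There is no serious obstacle here; the only technical point to verify is the constructive, polynomial-time nature of Ramsey's theorem. This follows from the usual recursive proof: pick an arbitrary vertex $u$ of the input graph, partition the rest into $N(u)$ and its complement, recurse into whichever side has size at least $R(k-2, t)$ or $R(k-1, t-1)$ respectively, and extend either a clique or an independent set accordingly; the base cases $k-1 = 1$ and $t = 1$ are trivial.
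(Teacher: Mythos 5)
Your proposal is correct and follows essentially the same route as the paper: apply Ramsey's theorem to vertex neighbourhoods (which have independence number less than $t$ by $t$-claw-freeness), answer \YES{} immediately if some vertex has Ramsey-large degree, and otherwise brute-force over $(k-1)$-subsets of the now-bounded neighbourhoods. The only cosmetic differences are that you use $R(k-1,t)$ on $N(v)$ and append $v$ where the paper uses $R(t,k)$ directly, and that you invoke a constructive Ramsey procedure where the paper simply returns \YES{} without exhibiting the clique; both choices are sound.
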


\paragraph{Recent Developments.}
Building on the structural characterization for claw-free graphs
developed recently by Chudnovsky and Seymour, Hermelin et
al.~\cite{HermelinMnichLeeuwenWoeginger2011} have developed a
faster FPT algorithm for the \kDS{} problem on claw-free graphs
which runs in \(9^{k}n^{O(1)}\) time. They have also shown that
the problem has a polynomial kernel on \(O(k^{4})\) vertices on
claw-free graphs.

\paragraph{Organization of the rest of the paper.}

We describe the basic notation used in this paper in the next paragraph.
We prove Theorem~\ref{thm:ds-claw-free-fpt} in
Section~\ref{s:alg}, Theorem~\ref{thm:ds-t-claw-free-whard} in
Section~\ref{s:hardness}, and Theorem~\ref{thm:clique-fpt} in
Section~\ref{s:clique}. We conclude and list some open problems in
Section~\ref{s:conclusions}.

\paragraph{Notation.}

In this paper all graphs are undirected. In Section \ref{s:alg} we silently assume
that the input instance is a claw-free graph $G=(V,E)$ together with a parameter $k$.
For any vertex set $X \subset V$, by $G[X]$ we denote the subgraph induced by $X$. For any $v \in V$ by $N(v)$ we denote
the set of neighbours of $v$, and by $N[v] = \{v\} \cup N(v)$ the closed neighbourhood of $v$. We extend this notation
to sets of vertices $X \subset V$: $N[X] = \bigcup_{v \in X} N[v]$, $N(X) = N[X] \setminus X$.

In our proofs we often look at groups of four vertices and deduce (non)existence of some edges by the fact that
these four vertices do not induce a claw ($K_{1,3}$). By saying that quadruple $\claw{v}{x}{y}{z}$ risk a claw
we mean that we use the fact that we cannot have at once $vx,vy,vz \in E$ and $xy,yz,xz \notin E$.

By \mdsname{} we mean minimum dominating set. We sometimes look at dominating sets that
are also independent sets (in other words, inclusion-maximal independent sets).
By \midsname{} we mean minimum independent dominating set. It is well-known that
in claw-free graphs the sizes of \mdsname{} and \midsname{} coincide; we prove
this result in a bit stronger form in Section \ref{ss:indset}.

For vertex sets \(A,B\subseteq V\) of a graph \(G=(V,E)\), we say
that \(A\) is a dominating set of \(B\) if every vertex in \(B \setminus A\)
has at least one neighbour in \(A\).

\section{Finding minimum dominating set in claw-free graphs}\label{s:alg}

In this section we prove Theorem \ref{thm:ds-claw-free-fpt}, i.e.,
we present an algorithm that checks whether a given claw-free graph $G=(V,E)$
has a dominating set of size at most $k$. The algorithm runs in $2^{O(k^2)} n^{O(1)}$ time
and uses polynomial space.

The general idea of the algorithm is as follows. In Section \ref{ss:indset}
we find (in polynomial time) the largest independent set $I$ in $G$.
It turns out to be of size $O(k)$.
We branch --- if a solution intersects $I$, we guess the intersection and
reduce $k$. From now on we assume that the solution is disjoint with $I$.

In Section \ref{ss:packs} we learn that the set $I$ introduces a
structure of $O(k^2)$ {\em packs} on the remaining vertices of $G$. In
Section \ref{ss:structure} we branch again, guessing the layout of the
solution within the packs. It turns out that at most one vertex of the
solution can lie within each pack.

In Section \ref{ss:overview} we start eliminating vertices. We introduce
a notation to mark vertices that are sure to be dominated, no matter how we
choose our solution, and vertices which are sure not to be included in any
solution. We show several simple rules to move vertices to these groups.
Then, in Section \ref{ss:$1$-packs}, we perform a thorough analysis of a
more difficult type of packs --- the $1$-packs --- and significantly prune
the vertices to consider in them.

In a perfect world, all the pruning would leave us only with a single
possible solution (or at most $f(k)$ possible solutions, which we could
directly check). This is not, however, the case --- we can be left with a
large number of potential solutions. The trick we use is to notice our
choices of vertices included in the solution from each pack are close to
independent, which will allow us to use dynamic programming approach to solve
the problem, formalized as an auxiliary CSP introduced in Section
\ref{ss:dp}. We will need to simplify the constraints before this works,
and the simplification occurs in Section \ref{ss:degred}.

The algorithm is rather complex, and involves a number of technical details.
Thus, we included a more detailed summary of what actually happens in
Section \ref{ss:summary}.
The best way to get an idea what really happens would probably be to read
and understand all the definitions and statements of the algorithm in
Sections \ref{ss:indset}--\ref{ss:degred}, then go over the summary in
Section \ref{ss:summary}, and finally come back and fill in all the proofs.

\subsection{Maximum independent set}\label{ss:indset}

We start with a folklore fact showing that the sizes of a minimum dominating set (\mdsname{}) and a minimum independent dominating set (\midsname{})
coincide in claw-free graphs.

\begin{lemma}\label{lem:nv-nw-clique}
  Let $G=(V,E)$ be a claw-free graph and let $vw \in E$.
  Then $G[(N[w] \setminus N[v]) \cup w]$ is a clique.
\end{lemma}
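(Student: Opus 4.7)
The plan is to verify the clique property one pair at a time, using the \tcf{} (here \(3\)-claw-free) assumption on the induced subgraph on exactly four carefully chosen vertices, namely $v$, $w$, and the two vertices whose adjacency is in question.

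Concretely, I would pick two distinct vertices $x,y \in (N[w]\setminus N[v])\cup\{w\}$ and argue they are adjacent. The easy case is when one of them equals $w$: if $x=w$, then $y\in N[w]\setminus N[v]$ and $y\ne w$, so $y\in N(w)$, which gives $xy\in E$ directly.

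The main case has $x,y\in N[w]\setminus N[v]$ with $x\ne y$. From $x,y\in N[w]$ and $x,y\ne w$ we get $wx,wy\in E$; together with the given edge $vw$, this means $w$ is adjacent to each of $v$, $x$, $y$. The condition $x,y\notin N[v]$ gives both $x\ne v, y\ne v$ (so all four vertices are distinct) and $vx,vy\notin E$. Now I would apply the ``risk-a-claw'' principle to the quadruple $\claw{w}{v}{x}{y}$: since $w$ dominates the other three but the two edges $vx$ and $vy$ are already known to be absent, to avoid inducing a claw the remaining potential edge $xy$ must be present, i.e.\ $xy\in E$.

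There is no real obstacle here; the lemma is essentially an immediate application of the defining local structure of claw-free graphs, and the entire argument is a single four-vertex case analysis. The only thing to be careful about is bookkeeping — confirming that $v$ does not accidentally coincide with $x$ or $y$ (guaranteed because $v\in N[v]$ whereas $x,y\notin N[v]$) — so that the four vertices really are distinct and the induced subgraph $\claw{w}{v}{x}{y}$ is genuinely on four vertices before invoking claw-freeness.
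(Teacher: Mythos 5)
Your proof is correct and follows essentially the same route as the paper's: both arguments consider a pair $x,y$ in the set, dispose of the case involving $w$ by noting $w$ is adjacent to everything in $N[w]$, and then force the edge $xy$ by observing that $\{w,v,x,y\}$ would otherwise induce a claw centred at $w$ (since $vx,vy\notin E$ because $x,y\notin N[v]$). Your extra care about the distinctness of the four vertices is a minor tightening of the same argument, not a different approach.
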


\begin{proof}
  Assume that there are some two vertices $x,y \in w \cup (N[w] \setminus N[v])$
  with no edge between them. The vertex $w$ is connected to all the other vertices,
  as they are in $N[w]$, so $x,y \neq w$.
  We have $wv \in E$ (from our assumptions) and $wx, wy \in E$ (as $x,y \in N[w] \setminus w$).
  However $xy \not\in E$ from their definition, and $vx, vy \not\in E$ as $x,y \in N[w] \setminus
  N[v]$. Thus $G[\{w,v,x,y\}]$ is a claw, contradicting the assumption on $G$.
\end{proof}

\begin{proposition}\label{prop:mds-vs-mids}
  Let $D$ be any dominating set in a claw-free graph $G$ and let $I_D \subset D$ be any independent
  set of vertices in $D$. Then there exists an independent dominating set $D'$ such that
  $|D'| \leq |D|$ and $I_D \subset D'$.
\end{proposition}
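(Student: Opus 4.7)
The plan is to iteratively transform $D$ into an independent set while maintaining domination and preserving $I_D$, using Lemma~\ref{lem:nv-nw-clique} as the main tool. I would define a single swap operation and use the number of edges in $G[D]$ as the monovariant.

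Suppose $D$ contains an edge. Since $I_D$ is independent, any such edge has at least one endpoint outside $I_D$; call the edge $vw$ with $w \in D \setminus I_D$. Define the set of \emph{private vertices} of $w$ as $P = N[w] \setminus N[D \setminus \{w\}]$, i.e., vertices whose only dominator in $D$ is $w$. If $P = \emptyset$, just delete $w$: the resulting set still dominates, strictly shrinks, and still contains $I_D$ because $w \notin I_D$. Otherwise pick any $u \in P$ and replace $w$ by $u$, forming $D' = (D \setminus \{w\}) \cup \{u\}$. A brief check shows $u \notin D$ (otherwise $u$ would dominate itself via $D \setminus \{w\}$, contradicting $u \in P$), so $|D'| = |D|$; and $I_D \subseteq D'$ since we did not touch $I_D$.

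The key point is to verify that $D'$ still dominates. Vertices outside $P$ were dominated by $D \setminus \{w\}$ and remain so. For $x \in P$, apply Lemma~\ref{lem:nv-nw-clique} to the edge $vw$: the set $(N[w] \setminus N[v]) \cup \{w\}$ is a clique. Both $x$ and $u$ lie in $N[w]$ (by definition of $P$) and avoid $N[v]$ (because $v \in D \setminus \{w\}$, so $P \cap N[v] = \emptyset$), hence both lie in this clique, so $x = u$ or $xu \in E$. Thus $u$ dominates every vertex in $P$, and $D'$ is a dominating set. Moreover, by the definition of $P$, $u$ has no neighbour in $D \setminus \{w\}$, so $u$ is isolated in $G[D']$; therefore the number of edges in $G[D']$ is strictly less than in $G[D]$ (we lost at least the edge $vw$ and gained none).

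Iterating this swap, the edge count in $G[D]$ strictly decreases, so the process terminates. At termination $G[D]$ has no edges, so $D$ is the required independent dominating set $D'$. The main conceptual obstacle is choosing the swap target carefully so that (i) the new vertex actually covers the private neighbourhood of the one being removed and (ii) no new edges inside $D$ are introduced; both of these are handled in one stroke by picking $u$ from the private set $P$ and invoking the clique structure from Lemma~\ref{lem:nv-nw-clique}.
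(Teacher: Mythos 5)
Your proof is correct and takes essentially the same approach as the paper: the same swap (replace $w$ by a vertex of its privately dominated set, which lies in $N[w]\setminus N[v]$ and is hence a clique by Lemma~\ref{lem:nv-nw-clique}) driven by the same monovariant, the number of edges of $G[D]$. The only cosmetic difference is that the paper phrases this as an extremal argument over an inclusion-minimal dominating set (so the undominated set is automatically nonempty), whereas you iterate the swap directly and dispose of the empty case by deleting $w$.
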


\begin{proof}
  Let $D'$ be an inclusion-minimal dominating set of $G$
  satisfying the following three properties: (a) $|D'| \leq |D|$,
  (b) $I_D \subset D'$ and (c) $G[D']$ has the smallest possible
  number of edges.  Since $D$ satisfies the first two properties,
  such a $D'$ is guaranteed to exist.  Suppose $D'$ is not an
  independent dominating set, i.e., there exists $vw \in E; v,w\in
  D'$. Since $I_D$ is an independent set, both $v$ and $w$ cannot
  be at once in $I_D$, so let us assume that $w \notin I_D$.  Let
  $X$ be the set of vertices in $G$ which are {\it not} dominated
  by $D' \setminus \{w\}$.  From the minimality of $D'$, the set
  $X$ is nonempty.  Since $X \subset N[w] \setminus N[v]$, by
  Lemma \ref{lem:nv-nw-clique} $G[X]$ is a clique.  Let $D'' = D'
  \setminus \{w\} \cup \{x\}$, where $x$ is an arbitrary vertex in
  $X$. Then $|D''| = |D'|$, $I_D \subset D''$ as $w \notin I_D$,
  $D''$ is a dominating set of $G$. 
  Observe that \(x\) has degree zero in \(G[D'']\), while \(w\)
  has degree at least one in \(G[D']\). This implies that
  \(G[D'']\) has fewer edges than \(G[D']\), a contradiction.
\end{proof}

Therefore, it is sufficient to look for an independent dominating set of size at most $k$.
The following lemma shows that this assumption can simplify our algorithm ---
if we decide to include some vertex $v$
in the solution, we can simply delete $N[v]$ from the graph and decrease $k$ by one.

\begin{lemma} \label{lem:reduce-problem}
  Let $G = (V,E)$ be a claw-free graph and let $v \in V$. There exists a \midsname of size at most $k$ containing $v$
  if and only if there exists a \midsname of size at most $k-1$ in $G[V \setminus N[v]]$.
\end{lemma}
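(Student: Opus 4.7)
The plan is to prove the two directions separately by explicit constructions. Notably, claw-freeness is not needed here; the lemma holds for arbitrary graphs, and is purely a structural statement about independent dominating sets. The only thing to be careful about is that in both directions we have to check \emph{three} things: the size bound, independence, and domination.

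For the forward direction, suppose $D$ is an independent dominating set of $G$ with $|D|\le k$ and $v\in D$. I would set $D' = D\setminus\{v\}$ and verify that $D'$ is an independent dominating set of $G[V\setminus N[v]]$ of size at most $k-1$. The size bound is immediate. Since $D$ is independent, no vertex of $D\setminus\{v\}$ can lie in $N(v)$, so $D'\subseteq V\setminus N[v]$, and independence is inherited from $D$. For domination, take any $u\in(V\setminus N[v])\setminus D'$; then $u\ne v$ and $u\notin D$, so $u$ has a neighbour $d\in D$. Because $u\notin N[v]$ we have $d\neq v$, hence $d\in D'$ as required.

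For the backward direction, suppose $D'$ is an independent dominating set of $G[V\setminus N[v]]$ with $|D'|\le k-1$, and let $D = D'\cup\{v\}$. Then $|D|\le k$ and $v\in D$. Independence follows because $D'\subseteq V\setminus N[v]$ avoids $N(v)$, so adding $v$ introduces no edge inside $D$. For domination, $v$ dominates $N[v]$ by definition, and any vertex $u\in V\setminus N[v]$ not in $D'$ has a neighbour in $D'$ inside $G[V\setminus N[v]]$, which is also a neighbour in $G$; thus $D$ dominates all of $V$.

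There is no genuine obstacle in this proof; the only subtlety is book-keeping the fact that vertices of $D\setminus\{v\}$ automatically land in $V\setminus N[v]$ thanks to the independence of $D$, which is precisely what makes the reduction "remove $N[v]$" valid for the independent dominating set formulation (and motivates the use of Proposition~\ref{prop:mds-vs-mids} in the preceding step of the algorithm).
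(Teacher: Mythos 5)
Your proof is correct and follows essentially the same route as the paper: remove $v$ in one direction, add $v$ in the other, using independence of $D$ to see that $D\setminus\{v\}$ avoids $N[v]$. You simply spell out the domination check in slightly more detail than the paper does, and your observation that claw-freeness is not used is accurate.
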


\begin{proof}
  Suppose we have a \midsname $D$ in $G$ of size $k$ and containing $v$. The set $D \setminus \{v\}$
  is disjoint from $N[v]$ (as $D$ is an independent set), and dominates $V \setminus N[v]$
  (as $D$ is a dominating set), and thus is a \midsname of size $k-1$ in $G[V\setminus N[v]]$.

  Conversely, consider any \midsname $D'$ of size $k-1$ in $G[V\setminus N[v]]$.
  Then $D' \cup \{v\}$ is independent in $G$ (as $D'$ lies outside $N[v]$), and dominates
  $V$ (as $D'$ dominates $V \setminus N[v]$ and $v$ dominates $N[v]$), and thus is a \midsname
  of size $k$ in $G$.
\end{proof}

We now start describing our algorithm. The algorithm is presented as a sequence of steps.
\begin{step}\label{step:1}
Find a largest independent set $I$ in $G$.
This can be done in polynomial time in claw-free graphs \cite{sbihi:indset,minty:indset}.
\end{step}
If $I$ is too small or too large, we may quit immediately.
\begin{step}
If $|I| \leq k$, return YES, since $I$ is a dominating set as well; in any graph, any maximal independent set is also a dominating set.
If $|I| > 2k$, return NO.
\end{step}
The following lemma justifies the above step:
\begin{lemma}
  Let $G=(V,E)$ be a claw-free graph, and let $I$ be a largest independent set in $G$.
  Then any dominating set in $G$ contains at least $|I|/2$ vertices.
\end{lemma}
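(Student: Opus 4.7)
The plan is to show that every vertex $v$ of the dominating set $D$ dominates at most two vertices of $I$, and then a straightforward counting argument yields $|D| \geq |I|/2$.

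First I would observe that, since $I$ is independent, if $v \in D \cap I$ then $N(v) \cap I = \emptyset$, so $|N[v] \cap I| = 1$. The interesting case is when $v \in D \setminus I$. Here I claim $|N(v) \cap I| \leq 2$. Suppose for contradiction that there were three distinct vertices $u_1, u_2, u_3 \in N(v) \cap I$. Since $u_1, u_2, u_3 \in I$ and $I$ is independent, no edge exists among them, while $v$ is adjacent to each. Hence $\claw{v}{u_1}{u_2}{u_3}$ would be an induced claw in $G$, contradicting claw-freeness. Thus in all cases $|N[v] \cap I| \leq 2$ for every $v \in D$.

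Since $D$ dominates $V$, every vertex of $I$ lies in $\bigcup_{v \in D} N[v]$, and therefore
\[
|I| \;=\; \left|\,\bigcup_{v \in D} \bigl(N[v] \cap I\bigr)\right| \;\leq\; \sum_{v \in D} |N[v] \cap I| \;\leq\; 2 |D|,
\]
which gives $|D| \geq |I|/2$, as required.

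There is no real obstacle here: the only subtlety is recognising that claw-freeness limits how many independent-set vertices a single vertex can dominate, which is exactly the kind of local structural argument the paper has been using (and flagged explicitly by the ``risk a claw'' notation). The step of verifying $|N[v] \cap I| \leq 2$ when $v \in I$ is immediate from independence, so the whole proof is short.
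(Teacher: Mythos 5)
Your proof is correct and uses essentially the same idea as the paper: a single vertex cannot dominate three vertices of $I$ without creating a claw, so a counting (or, in the paper, pigeonhole-and-contradiction) argument gives $|D|\ge |I|/2$. Your explicit handling of the case $v\in D\cap I$ matches the paper's observation that a dominator of three $I$-vertices cannot itself lie in $I$.
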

\begin{proof}
  Assume we have a dominating set $D$ with $|D| < |I|\slash 2$. In particular, $D$ has to
  dominate $I$, and by the pigeonhole principle, there exists a vertex $v \in D$ that dominates
  at least three vertices $x,y,z$ from $I$. Notice that a vertex from $I$ does not dominate
  any other vertex from $I$, as $I$ is independent, so $v \not\in I$, and in particular
  $v \not \in \{x,y,z\}$. But now $G[\{v,x,y,z\}]$ is a claw --- we have $vx, vy, vz \in E$, as
  $v$ dominates $\{x,y,z\}$, but $xy,yz,zx \not\in E$ as $x,y,z \in I$ and $I$ is independent.
  The contradiction ends the proof.
\end{proof}

\begin{step}\label{step-Idisjoint}
  Now the algorithm branches into the following two cases:
\begin{enumerate}
  \item There exists an \midsname{} with a nonempty intersection with $I$.
  \item Every \midsname{} in $G$ is disjoint with $I$.
\end{enumerate}
In the first case, the algorithm simply guesses any single vertex from the intersection,
deletes its closed neighbourhood, decreases $k$ by one and goes back to Step \ref{step:1}.
\end{step}
Note that since we are aiming for the time complexity $2^{O(k^2)} n^{O(1)}$,
the branching in the first case fits into the time bound.

From now on, in the algorithm we assume that every \midsname{} in $G$ is disjoint with $I$.
Note that the algorithm does not explicitly check whether this condition is true --- instead, if at any subsequent point any conclusion from this assumption
appears to be wrong, the algorithm merely terminates that branch of the computation.

\subsection{Packs}\label{ss:packs}

Note that for each $v \in V \setminus I$ the vertex $v$ knows at least one vertex from $I$
(since $I$ is maximum hence maximal) and knows at most two vertices from $I$ (since
otherwise they form a claw, as $I$ is an independent set). Thus we may partition $V \setminus I$
into the following parts.
\begin{definition}
  For each $a,b \in I, a \not = b$ we denote $V_{a,b} = \{v \in V \setminus I: N(v) \cap I = \{a,b\}\}$
  and $V_a = \{v \in V \setminus I: N(v) \cap I = \{a\}\}$. 
  The sets $V_{a,b}$ and $V_a$ are called {\em{packs}}. The sets
  $V_a$ are called {\em{$1$-packs}} and the sets $V_{a,b}$ are called {\em{$2$-packs}}.
  For a pack $V_{a,b}$ ($V_a$) the vertices $a$ and $b$ (the vertex $a$) are called
  {\em{legs}} (the {\em{leg}}) of the pack.
\end{definition}

See Figure~\ref{fig:packs} for an illustration.
  \begin{figure}[t]
    \centering
    \includegraphics[clip,scale=0.4]{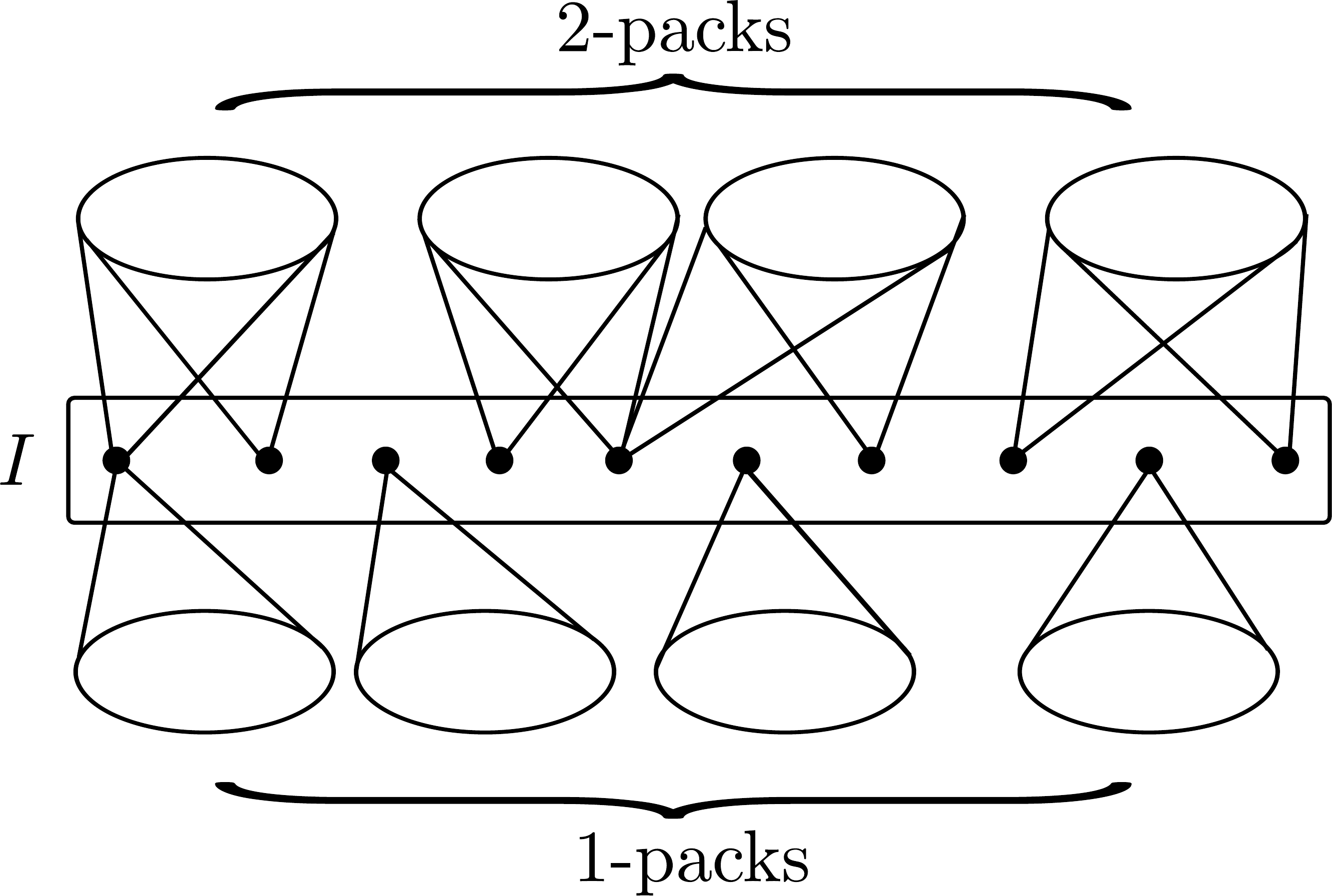}
    \caption{\label{fig:packs}A schematic diagram showing the
      two kinds of packs. \(I\) is a maximum independent
      set. Edges with end-points in different packs may be
      present in the graph; these are not shown in this diagram.}
  \end{figure}

\begin{lemma} \label{lem:balloon-clique}
  For any $1$-pack $V_a$, $G[V_a]$ is a clique.
\end{lemma}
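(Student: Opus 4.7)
The proof will be a short contradiction argument exploiting the maximality of $I$, which is the only nontrivial property we have on the independent set chosen in Step~\ref{step:1}. There is no real obstacle here; the main observation is just that the definition of a $1$-pack is tailored so that swapping its vertices into $I$ in place of the unique leg does not introduce new edges to the rest of $I$.

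The plan is as follows. Suppose, for contradiction, that $V_a$ contains two vertices $v,w$ with $vw \notin E$. By the definition of $V_a$, we have $N(v) \cap I = N(w) \cap I = \{a\}$, so neither $v$ nor $w$ is adjacent to any vertex of $I \setminus \{a\}$. Now consider the set
\[
I' = (I \setminus \{a\}) \cup \{v,w\}.
\]
Since $v,w \in V \setminus I$ and $v \neq w$, we have $|I'| = |I| + 1$. Moreover, $I'$ is independent: vertices of $I \setminus \{a\}$ are pairwise non-adjacent because $I$ is independent; $v$ and $w$ are non-adjacent to vertices of $I \setminus \{a\}$ by the above observation; and $vw \notin E$ by assumption. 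Hence $I'$ is an independent set strictly larger than $I$, contradicting the fact that $I$ was chosen as a \emph{largest} independent set in $G$ in Step~\ref{step:1}. Therefore every pair of vertices in $V_a$ is adjacent, and $G[V_a]$ is a clique.
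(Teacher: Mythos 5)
Your proof is correct and is essentially identical to the paper's own argument: both replace the leg $a$ by the two hypothetical non-adjacent vertices of $V_a$ to obtain an independent set larger than $I$, contradicting the choice of $I$ as a largest independent set.
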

\begin{proof}
  Assume we have $x,y \in V_a$, with $xy \not\in E$. Consider $(I  \setminus \{a\}) \cup \{x,y\}$.
  This is an independent set --- $I \setminus \{a\}$ is independent, $x$ and $y$ have no edges
  to $I \setminus \{a\}$ from the definition of $V_a$, and there is no edge between them. But this
  set is larger than $I$, contradicting the definition of $I$.
\end{proof}

\begin{lemma}\label{lem:zeppelin-knows-locally}
  If there is an edge between a $2$-pack $V_{a,b}$ and a distinct pack $X$, then $X$ and $V_{a,b}$
  have a common leg, i.e., $X = V_a$ or $X = V_b$ or $X = V_{a,c}$ or $X = V_{b,c}$ for some
  $c \in I$.
\end{lemma}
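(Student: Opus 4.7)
The plan is to argue by contradiction, exhibiting an induced claw centered at the $V_{a,b}$-endpoint of the crossing edge.

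First I would fix an edge $uv \in E$ with $u \in V_{a,b}$ and $v \in X$, and assume toward a contradiction that $X$ shares no leg with $V_{a,b}$. By the definition of the packs, this assumption means precisely that $a \notin N(v) \cap I$ and $b \notin N(v) \cap I$, i.e., $va \notin E$ and $vb \notin E$. Note also that $v \neq a$ and $v \neq b$ since $v \in X$ lies outside $I$ (all packs are subsets of $V \setminus I$).

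Next I would consider the quadruple $\claw{u}{a}{b}{v}$. From $u \in V_{a,b}$ we have $ua, ub \in E$, and by choice of the crossing edge $uv \in E$, so $u$ is adjacent to all three of $a, b, v$. On the other hand, $ab \notin E$ because both $a, b \in I$ and $I$ is independent, and we just observed $va, vb \notin E$. Hence $G[\{u, a, b, v\}]$ is an induced $K_{1,3}$ with center $u$, contradicting the claw-freeness of $G$.

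This is essentially the whole argument; there is no serious obstacle. The only thing to be careful about is verifying that $v$ is genuinely a fourth vertex distinct from $a, b, u$ (which is immediate since $v \in V \setminus I$ and $v \in X \neq V_{a,b}$ gives $v \neq u$), so that the four vertices do form a legitimate potential claw. Once these trivial distinctness checks are in place, the forced non-edges $va$ and $vb$ come straight from the hypothesis that $X$'s legs are disjoint from $\{a,b\}$, and the contradiction is immediate.
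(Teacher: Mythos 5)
Your proof is correct and is essentially identical to the paper's own argument: both exhibit the induced claw $G[\{u,a,b,v\}]$ centered at the $V_{a,b}$-endpoint of the crossing edge, using the independence of $I$ and the hypothesis that $X$'s legs avoid $\{a,b\}$ to get the three required non-edges. The extra distinctness checks you include are harmless and the paper simply leaves them implicit.
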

\begin{proof}
  Suppose not. Let $vw$ be the edge between $V_{a,b}$ and $X$, with $v \in V_{a,b}$ and $w \in X$. We know
  that $wa, wb \not\in E$, as $X$ has no common leg with $V_{a,b}$. Moreover, $ab \not\in E$ as
  they both belong to the independent set $I$, and $va, vb, vw \in E$ (first two from the definition
  of $V_{a,b}$, the third from the assumptions). Thus $G[\{v,w,a,b\}]$ is a claw, a contradiction.
\end{proof}

\newcommand{\uppacks}{\ensuremath{\mathcal{B}}}

\subsection{Solution structure}\label{ss:structure}

We now analyze how a \midsname{} can be placed with respect to $1$-packs and $2$-packs.

\begin{lemma}\label{lem:two-dom-nv}
  Let $v \in V$ and $w_1, w_2 \in N(v)$, $w_1w_2 \notin E$. Then
  $N[v] \subset N[w_1] \cup N[w_2]$, i.e., $w_1$ and $w_2$ dominate everything that
  $v$ dominates.
\end{lemma}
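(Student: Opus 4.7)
The plan is to unwind the definitions and use the claw-free property of $G$ directly on the four vertices $v, w_1, w_2$, and an arbitrary element $u \in N[v]$. First I would dispose of the trivial cases: if $u = v$, then $u \in N(w_1) \subseteq N[w_1]$ since $w_1 \in N(v)$; if $u \in \{w_1, w_2\}$, then $u \in N[w_1] \cup N[w_2]$ by reflexivity.

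Next I would handle the main case $u \in N(v) \setminus \{w_1, w_2\}$. Here I would look at the quadruple $\claw{v}{u}{w_1}{w_2}$ and apply the claw-freeness of $G$. By hypothesis we already have the three ``spoke'' edges $vu, vw_1, vw_2 \in E$, and one of the three ``non-edges'' of the claw is given: $w_1 w_2 \notin E$. If both $uw_1 \notin E$ and $uw_2 \notin E$ held simultaneously, then $G[\{v, u, w_1, w_2\}]$ would be isomorphic to $K_{1,3}$ with center $v$, contradicting that $G$ is claw-free. Hence at least one of $uw_1$ or $uw_2$ is an edge, so $u \in N(w_1) \cup N(w_2) \subseteq N[w_1] \cup N[w_2]$, as required.

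There is essentially no obstacle here: the lemma is a one-step forbidden-subgraph argument, and the only thing to be careful about is enumerating the degenerate cases $u \in \{v, w_1, w_2\}$ so that the claw-free deduction is applied on an honest quadruple of distinct vertices.
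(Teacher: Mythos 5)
Your proof is correct and takes essentially the same approach as the paper: the paper argues by contradiction, picking a hypothetical $w_3 \in N[v] \setminus (N[w_1] \cup N[w_2])$ and exhibiting the claw $G[\{v,w_1,w_2,w_3\}]$, which is just the contrapositive of your direct case analysis on an arbitrary $u \in N[v]$. Your explicit handling of the degenerate cases $u \in \{v, w_1, w_2\}$ is a slightly more careful write-up of what the paper dispatches with the single observation $w_3 \neq v$.
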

\begin{proof}
  Assume there is a vertex $w_3 \in N[v] \setminus (N[w_1] \cup N[w_2])$.
  We have $w_1 \in N(v)$, so $v \in N[w_1]$ and so $w_3 \neq v$. But now
  $w_1w_2, w_2w_3, w_3w_1 \not\in E$, the first from the assumptions, the other two
  by the definition of $w_3$. On the other hand, $vw_1, vw_2, vw_3 \in E$, thus
  $G[\{v,w_1,w_2,w_3\}]$ is a claw, a contradiction.
\end{proof}

\begin{lemma} \label{lem-square-domination}
  Let $v_1, v_2 \in V$, $v_1v_2 \notin E$. Let $w_1, w_2 \in N(v_1) \cap N(v_2)$, $w_1w_2 \notin E$.
  Then $N[v_1] \cup N[v_2] = N[w_1] \cup N[w_2]$, i.e., $v_1$ and $v_2$ dominate together exactly the
  same vertex set as $w_1$ and $w_2$.
\end{lemma}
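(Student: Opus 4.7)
The plan is to derive this as an immediate symmetric consequence of Lemma~\ref{lem:two-dom-nv}, which has essentially set up exactly the hypotheses we need. The key observation is that the role of the pairs $(v_1,v_2)$ and $(w_1,w_2)$ is fully symmetric in the hypotheses: both pairs are non-adjacent, and each vertex in one pair is adjacent to both vertices in the other pair. So we should expect to get a double inclusion by applying Lemma~\ref{lem:two-dom-nv} twice in each direction.

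For the inclusion $N[v_1] \cup N[v_2] \subseteq N[w_1] \cup N[w_2]$, I would apply Lemma~\ref{lem:two-dom-nv} first to the vertex $v_1$ with the two non-adjacent neighbours $w_1, w_2 \in N(v_1)$, yielding $N[v_1] \subseteq N[w_1] \cup N[w_2]$. Then I apply it again to $v_2$ with the same pair $w_1, w_2 \in N(v_2)$, yielding $N[v_2] \subseteq N[w_1] \cup N[w_2]$. Taking the union gives the desired inclusion.

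For the reverse inclusion, I exploit symmetry: since $w_1 \in N(v_1) \cap N(v_2)$, we have $v_1, v_2 \in N(w_1)$, and by hypothesis $v_1 v_2 \notin E$. So Lemma~\ref{lem:two-dom-nv} applied to $w_1$ with the non-adjacent pair $v_1, v_2$ gives $N[w_1] \subseteq N[v_1] \cup N[v_2]$. The same argument applied to $w_2$ gives $N[w_2] \subseteq N[v_1] \cup N[v_2]$, and taking the union completes the proof.

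There is really no obstacle here — the statement is essentially the symmetric strengthening of Lemma~\ref{lem:two-dom-nv} obtained by noting that its hypotheses are symmetric in the ``centre'' and the ``pair of non-adjacent neighbours.'' The only point worth double-checking is that the hypotheses of Lemma~\ref{lem:two-dom-nv} are indeed satisfied in all four applications, which follows directly from $w_1, w_2 \in N(v_1) \cap N(v_2)$ together with $v_1 v_2 \notin E$ and $w_1 w_2 \notin E$.
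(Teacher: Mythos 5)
Your proof is correct and is exactly the paper's argument: the paper also applies Lemma~\ref{lem:two-dom-nv} four times, twice in each direction, to obtain the two inclusions. Your verification that the hypotheses are satisfied in each of the four applications is sound.
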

\begin{proof}
  Using Lemma \ref{lem:two-dom-nv} four times we obtain that
  $N[v_1], N[v_2] \subset N[w_1] \cup N[w_2]$ and
  $N[w_1], N[w_2] \subset N[v_1] \cup N[v_2]$.
\end{proof}

\begin{lemma} \label{lem-packs-size-one}
  Assume there exists a \midsname{} $D$ and a pack $X$, such that $|D \cap X| > 1$.
  Then there exists a \midsname{} $D'$ that is not disjoint with $I$.
\end{lemma}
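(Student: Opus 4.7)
The plan is to exploit the symmetry between a pair of vertices of $D$ lying in the same pack and the corresponding pair of legs in $I$, and then apply Proposition~\ref{prop:mds-vs-mids} to repair independence.

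First I would dispose of the $1$-pack case. If $X = V_a$, then Lemma~\ref{lem:balloon-clique} tells us $G[V_a]$ is a clique, so an independent set can contain at most one vertex of $V_a$. Hence $|D \cap V_a| > 1$ is impossible, and the hypothesis is vacuous.

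So assume $X = V_{a,b}$ is a $2$-pack, and pick two distinct $v_1, v_2 \in D \cap V_{a,b}$. Since $D$ is independent, $v_1 v_2 \notin E$. By the definition of $V_{a,b}$, both $a$ and $b$ are neighbours of $v_1$ and of $v_2$, and $ab \notin E$ because $a, b \in I$. Lemma~\ref{lem-square-domination} (applied with $w_1=a$, $w_2=b$) now yields $N[v_1] \cup N[v_2] = N[a] \cup N[b]$.

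Consider $D^* = (D \setminus \{v_1, v_2\}) \cup \{a, b\}$. Clearly $|D^*| \le |D|$. For domination: any vertex of $G$ is either already dominated by $D \setminus \{v_1, v_2\}$, or lies in $N[v_1] \cup N[v_2] = N[a] \cup N[b]$, in which case it is dominated by $a$ or $b$. So $D^*$ is a dominating set, possibly not independent. Now I apply Proposition~\ref{prop:mds-vs-mids} to $D^*$, taking the independent subset $I_{D^*} = \{a\}$: this produces an independent dominating set $D'$ with $|D'| \le |D^*| \le |D|$ and $a \in D'$. Since $D$ is a \midsname, $|D'| = |D|$, so $D'$ itself is a \midsname; and it contains $a \in I$, so it is not disjoint from $I$, which is precisely the conclusion.

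I do not foresee any real obstacle here — the work was done by Lemma~\ref{lem-square-domination}, which makes swapping $\{v_1,v_2\}$ for $\{a,b\}$ harmless for the domination property, and by Proposition~\ref{prop:mds-vs-mids}, which handles the loss of independence that the swap might introduce with the rest of $D \setminus \{v_1, v_2\}$. The only minor care is to note that we do not need to verify that $a$ and $b$ behave nicely with the rest of $D$ by hand, since the proposition does exactly that for us while preserving the requirement $a \in D'$.
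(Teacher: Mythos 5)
Your proof is correct and follows essentially the same route as the paper: dismiss the $1$-pack case via Lemma~\ref{lem:balloon-clique}, swap the two vertices of $D\cap V_{a,b}$ for the legs $a,b$ using Lemma~\ref{lem-square-domination}, and then invoke Proposition~\ref{prop:mds-vs-mids} to restore independence while keeping a vertex of $I$ in the solution. The only cosmetic difference is that the paper feeds $\{a,b\}$ rather than $\{a\}$ into Proposition~\ref{prop:mds-vs-mids}; both choices work.
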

\begin{proof}
  By Lemma \ref{lem:balloon-clique}, all $1$-packs are cliques, so we cannot have two vertices
  from the independent set $D$ in $X$. Thus $X = V_{a,b}$ for some $a,b \in I$. Let $v,w \in
  D \cap X$.
  Now $vw \not\in E$ as $D$ is independent, $ab \not\in E$ as $I$ is independent, and
  $a,b \in N(v) \cap N(w)$ by the definition of $V_{a,b}$. Thus the assumptions of Lemma
  \ref{lem-square-domination} are satisfied, and so $N[a] \cup N[b] = N[v] \cup N[w]$. Thus
  $D' = (D \setminus \{v,w\}) \cup \{a,b\}$ is a dominating set.

  Now we apply Proposition \ref{prop:mds-vs-mids}. We have a dominating set $D'$ with
  $|D'| = |D|$, and
  an independent set $\{a,b\} \subset D'$. Proposition \ref{prop:mds-vs-mids} guarantees
  the existence of an independent dominating set $D''$ with $|D''| \leq |D'|$ and
  $\{a,b\} \subset D''$. As $D$ was a \midsname, however, we have $|D''| \geq |D|$, and
  thus $|D''| = |D|$ --- thus $D''$ is also a \midsname, and is not disjoint with $I$.
\end{proof}

Recall from the discussion at the end of Section~\ref{ss:indset}
that we may assume, without loss of generality, that \emph{every}
\midsname{} in the graph \(G\) is disjoint with the set \(I\).  It
follows from Lemma~\ref{lem-packs-size-one} that every pack
contains at most one vertex from the solution. We limit
ourselves to this case in the remaining part of the algorithm.

\begin{definition}
  We say that a \midsname $D$ is {\em{compatible}} with a set $\uppacks$ of packs,
  if $D$ contains exactly one vertex in each pack in $\uppacks$, and no vertices
  in the packs not in $\uppacks$.
\end{definition}

\begin{step} \label{step-guess-uppacks}
  The algorithm now guesses a set $\uppacks$ of at most $k$ packs. From now on,
  the algorithm looks for a \midsname{} compatible with $\uppacks$.
\end{step}

As the number of packs is at most $2k + \binom{2k}{2} = O(k^2)$, we have $2^{O(k \log k)}$
possible guesses.

Some guesses are clearly invalid.
\begin{step}\label{step-stupid-uppacks}
  The algorithm discards guesses in which:
\begin{enumerate}
  \item there exists a vertex $a \in I$ that cannot be dominated, i.e., no pack with leg $a$
    is chosen to be in $\uppacks$;
  \item or there exists a vertex $a \in I$, such that at least three packs with leg $a$
    are chosen to be in $\uppacks$ (we cannot find three independent vertices in $N(a)$,
    as they would make a claw with the center in vertex $a$).
\end{enumerate}
\end{step}
To sum up, for each $a \in I$ there exist one or two packs in $\uppacks$ that have a leg $a$.

\newcommand{\Vactive}{\ensuremath{V^{\mathtt{Active}}}}
\newcommand{\Vdone}{\ensuremath{V^\mathtt{Done}}}
\newcommand{\Vpassive}{\ensuremath{V^\mathtt{Passive}}}

\subsection{Algorithm structure}\label{ss:overview}

From now on, the algorithm maintains the partition of the vertex set $V$ into
three parts:
\begin{enumerate}
  \item $\Vactive$, vertices that can be chosen into the constructed \midsname{},
    and we need to dominate them;
  \item $\Vpassive$, vertices that cannot be chosen into the constructed \midsname{},
    but we need to dominate them;
  \item $\Vdone$, vertices that cannot be chosen into the constructed \midsname{},
    and we somehow have ensured that they would be dominated, i.e.,
    we do not need to care about them.
\end{enumerate}
As we show later in this section (See Lemma~\ref{lem:end-24}), it
turns out that it is sufficient to look for a solution which is
``mostly'' --- and not necessarily totally --- an independent
set. More precisely, it is sufficient to find a ``dominating
candidate'' which is also a dominating set:
\begin{definition}\label{def-dominating-candidate}
  A set $D \subset \Vactive$ is called a {\em{dominating candidate}} if it satisfies the following properties:
  \begin{enumerate}
    \item $|D| = |\uppacks|$ and $D$ consists of exactly one active vertex from each pack in $\uppacks$;
    \item if $X,Y \in \uppacks$ and $X$ and $Y$ share a leg, then the two vertices in $D \cap (X \cup Y)$ are nonadjacent.
  \end{enumerate}
  We say that the partition $(\Vactive, \Vpassive, \Vdone)$ is {\em{safe}} if every dominating candidate
  dominates $\Vactive \cup \Vdone$.
\end{definition}

Let \(D\) be a dominating candidate, let $X,Y \in \uppacks$, and
let \(x,y\) be the vertices in \(X,Y\) respectively which are
present in \(D\). Further, let \(xy\) be an edge in the graph. If
\(X\) is a $2$-pack, then by
Lemma~\ref{lem:zeppelin-knows-locally} the packs \(X\) and \(Y\)
share a leg. The second condition in the definition of a
dominating candidate then implies that there is no edge between
\(x\) and \(y\), a contradiction. Thus both \(X\) and \(Y\) are
$1$-packs. Therefore, while the subgraph induced by a dominating
candidate may contain edges, any such edge is between vertices
which belong to distinct $1$-packs. As we see in
Lemma~\ref{lem:end-24}, this relaxation in the independence
requirement for vertices drawn from $1$-packs helps in the
justification of Step~\ref{step:common-leg} below.

At the end of this section we obtain a state where the partition $(\Vactive, \Vpassive, \Vdone)$ is safe.

Initially, $\Vactive$ consists of vertices in packs in $\uppacks$,
$\Vdone = I$ and $\Vpassive = V \setminus (I \cup \Vactive)$
(we do not need to care about $I$, since we have discarded choices of $\uppacks$
that do not dominate whole $I$). Thus, every dominating candidate dominates $\Vdone$,
but not necessarily $\Vactive$. During the whole algorithm we shall keep the invariant that
all active vertices are in $\bigcup \uppacks$ and all passive vertices are in
$V\setminus I \setminus \bigcup \uppacks$.

In the following set of steps we assign some vertices to $\Vdone$ (keeping the invariant
that every dominating candidate dominates $\Vdone$) and assure that every dominating candidate
dominates $\Vactive$. This is formally justified in
Lemma~\ref{lem:end-24}. 

\begin{lemma} \label{lem-useless-baloon}
  Let $v \in V_a \in \uppacks$ and assume that $N[v] \subset N[a]$, i.e., $v$ knows only $a$ and
  vertices from packs that have leg $a$. Then, if there exists an \midsname{} $D$ compatible with
  $\uppacks$ containing $v$, then there exists an \midsname{} $D'$ of cardinality not larger than $D$
  that is not disjoint with $I$.
\end{lemma}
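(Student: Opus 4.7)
The plan is to swap $v$ for $a$ inside $D$ and then repair independence using Proposition~\ref{prop:mds-vs-mids}. Concretely, I would set $D_0 = (D \setminus \{v\}) \cup \{a\}$. Since $v \in D \cap V_a$ and $va \in E$, the independence of $D$ already forces $a \notin D$, so $|D_0| = |D|$.

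Next I would argue that $D_0$ is a dominating set of $G$. Every vertex dominated in $D$ by a member of $D \setminus \{v\}$ is still dominated by $D_0$. Any vertex dominated in $D$ only by $v$ lies in $N[v]$, and the hypothesis $N[v] \subset N[a]$ puts it into $N[a]$, hence it is dominated by $a \in D_0$.

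The set $D_0$ may fail to be independent. By Step~\ref{step-stupid-uppacks} there is at most one other pack in $\uppacks$ with leg $a$, and since $V_a$ is already counted, any such pack is of the form $V_{a,b}$; the unique vertex of $D$ drawn from that pack is adjacent to $a$, giving the only possible edge inside $D_0$. To handle this cleanly I would invoke Proposition~\ref{prop:mds-vs-mids} on the dominating set $D_0$ together with the independent subset $I_{D_0} = \{a\}$. This yields an independent dominating set $D'$ with $|D'| \leq |D_0| = |D|$ and $a \in D'$. Since $D$ is an \midsname{}, every independent dominating set has size at least $|D|$, forcing $|D'| = |D|$, so $D'$ is itself an \midsname{}, and $a \in D' \cap I$ certifies that $D'$ is not disjoint with $I$.

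I do not foresee a serious obstacle here; the only trick is recognising that one need not construct an independent dominating set by hand, because Proposition~\ref{prop:mds-vs-mids} is designed precisely to turn an arbitrary dominating set into an independent one while retaining a prescribed independent subset.
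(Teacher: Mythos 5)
Your proposal is correct and follows essentially the same route as the paper: replace $v$ by $a$ to obtain a dominating set containing $a$ (using $N[v]\subset N[a]$), then invoke Proposition~\ref{prop:mds-vs-mids} with the independent set $\{a\}$ to recover an independent dominating set of size at most $|D|$ containing $a$, which must be an \midsname{} meeting $I$. The extra discussion of where an edge could appear in $D_0$ is harmless but unnecessary, since the Proposition does not require knowing the structure of the non-independence.
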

\begin{proof}
  We proceed as in the proof of Lemma \ref{lem-packs-size-one}. Consider
  the set $D' = (D\cup\{a\}) \setminus \{v\}$.
  This is a dominating set, as $N[v] \subset N[a]$. As $\{a\}$ is an independent set,
  by Proposition \ref{prop:mds-vs-mids} we can obtain a \midsname $D''$ not larger than
  $D'$ (and thus not larger than $D$), which contains $a$. That ends the proof.
\end{proof}
This Lemma will be used in the justification of the following step:
\begin{step} \label{step-useless-baloon} \label{step:up-T0}
  For each $v \in V_a \in \uppacks$ such that $N[v] \subset N[a]$, move $v$ to $\Vdone$.
\end{step}

We will now focus on packs that are {\em{alone}} in $\uppacks$:
\begin{definition}
  A pack $X \in \uppacks$ is called {\em{alone}} if for any leg $a$ of $X$
  no other pack $Y \in \uppacks$ has this leg.
\end{definition}

\begin{step} \label{step-lone-$2$-pack}
  Let $V_{a,b} \in \uppacks$ be an alone $2$-pack in $\uppacks$.
  For each vertex $v \in V_{a,b}$, if $V_{a,b}$
  is not dominated by $v$, move $v$ to $\Vdone$.
\end{step}

Finally we remove several vertices from $\Vpassive$:
\begin{step}\label{step:common-leg}
  Let $X, Y \in \uppacks$ be two packs that share a common leg
  $a \in I$.  For each pack $Z \notin \uppacks$ that has the
  leg $a$, move all vertices in $Z$ to $\Vdone$.
\end{step}

We justify all the above steps and formally prove that the current partition
$(\Vactive, \Vpassive, \Vdone)$ is safe in the following lemma:
\begin{lemma}\label{lem:end-24}
  Assume we have finished all steps up to Step \ref{step:common-leg}.
  \begin{enumerate}
    \item $\Vactive \subset \bigcup \uppacks$ and $\Vpassive \subset V \setminus (I \cup \bigcup\uppacks)$.
    \item The partition $(\Vactive, \Vpassive, \Vdone)$ is safe, i.e., every dominating candidate $D$ dominates $\Vactive \cup \Vdone$.
    \item If there exists a \midsname $D$ compatible with $\uppacks$, then there exists a dominating candidate that is a dominating set in $G$.
  \end{enumerate}
\end{lemma}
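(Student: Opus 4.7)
I would prove the three parts of the lemma in order, using Lemmas~\ref{lem:balloon-clique}, \ref{lem:zeppelin-knows-locally}, and~\ref{lem-useless-baloon} together with the case assumption of Step~\ref{step-Idisjoint} that every \midsname{} in $G$ is disjoint from $I$. Part~1 is pure bookkeeping: at initialization $\Vactive=\bigcup\uppacks$, $\Vdone=I$, and $\Vpassive=V\setminus(I\cup\bigcup\uppacks)$, so the two containments hold; Steps~\ref{step-useless-baloon} and~7 only transfer active vertices to $\Vdone$, and Step~\ref{step:common-leg} only transfers passive vertices to $\Vdone$, so neither $\Vactive$ nor $\Vpassive$ ever escapes its invariant range.

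For Part~2, given a dominating candidate $D$, I would check each $w\in\Vactive\cup\Vdone$ according to the pack containing it. If $w\in V_a$ for a $1$-pack $V_a\in\uppacks$, then by Lemma~\ref{lem:balloon-clique} the chosen vertex in $D\cap V_a$ dominates $w$ (this also covers the $\Vdone$ vertices added in Step~\ref{step-useless-baloon}); if $w$ lies in an alone $2$-pack $V_{a,b}$, then the chosen active vertex in $D\cap V_{a,b}$ dominates the whole pack by Step~7. If $w$ lies in a non-alone $2$-pack $V_{a,b}$, I pick any $X'\in\uppacks$ sharing some leg, say $a$, with $V_{a,b}$ and let $v\in D\cap V_{a,b}$, $v'\in D\cap X'$; assuming $vw,v'w\notin E$ and combining with $vv'\notin E$ from the shared-leg clause in the definition of a dominating candidate makes $\{a,v,v',w\}$ a claw at $a$. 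Each $a\in I\subset\Vdone$ is dominated because Step~\ref{step-stupid-uppacks} guarantees some pack of $\uppacks$ with leg $a$, whose chosen vertex in $D$ is adjacent to $a$; vertices $z$ added to $\Vdone$ by Step~\ref{step:common-leg} are dominated by the analogous claw argument on $\{a,x,y,z\}$, with $x,y$ the chosen vertices of the two packs in $\uppacks$ sharing leg $a$.

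For Part~3, given an \midsname{} $D$ compatible with $\uppacks$, I would show that $D$ itself already witnesses the conclusion. Compatibility supplies exactly one vertex of $D$ per pack of $\uppacks$, and the independence of an \midsname{} automatically yields the shared-leg non-adjacency condition; the only point that requires real work is $D\subseteq\Vactive$. No vertex of $D$ could have been moved by Step~\ref{step-useless-baloon}, since such a $v\in V_a$ with $N[v]\subseteq N[a]$ would, by Lemma~\ref{lem-useless-baloon}, produce an \midsname{} meeting $I$, contradicting the case assumption of Step~\ref{step-Idisjoint}. Nor could any $v\in D$ have been moved by Step~7: if $V_{a,b}$ is alone then Lemma~\ref{lem:zeppelin-knows-locally} rules out any edge from other packs of $\uppacks$ into $V_{a,b}$, so with $D\cap I=\emptyset$ and $D$ compatible with $\uppacks$ the vertex $v\in D\cap V_{a,b}$ is the only candidate in $D$ that can dominate $V_{a,b}$, forcing $v$ to do so.

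The delicate point is the non-alone $2$-pack subcase in Part~2, where the active vertices are never explicitly pruned: it is precisely the shared-leg non-adjacency clause in the definition of a dominating candidate --- which otherwise looks like a bookkeeping convenience imposed to keep a would-be solution independent --- that unlocks the forced-claw argument centred at the shared leg and lets the domination pass from one pack's chosen vertex to the neighbouring pack. Once this observation is in place, Steps~\ref{step-useless-baloon}--\ref{step:common-leg} together with the Step~\ref{step-Idisjoint} branching hypothesis are enough to close all remaining cases mechanically.
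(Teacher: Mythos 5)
Your proof is correct and follows essentially the same route as the paper's: part 1 by bookkeeping, part 2 by a case analysis on the pack containing the vertex (your explicit claws centred at the shared leg $a$ are exactly the content of Lemma~\ref{lem:two-dom-nv}, which the paper invokes instead), and part 3 by showing that no vertex of a compatible \midsname{} is pruned, using Lemma~\ref{lem-useless-baloon} with the Step~\ref{step-Idisjoint} branch assumption and Lemma~\ref{lem:zeppelin-knows-locally} for the alone-$2$-pack case. The only cosmetic differences are that you organize part 2 by pack type rather than by the step that moved each vertex to $\Vdone$, and that you leave implicit the (trivial) fact that Step~\ref{step:common-leg} cannot touch $D$ because it only moves vertices of $\Vpassive$.
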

\begin{proof}
  The first claim is obvious, as in all above steps we only moved vertices from $\Vactive$ or $\Vpassive$ to $\Vdone$.

  First note that in all of the above steps, we only transferred vertices into $\Vdone$,
  in particular if a vertex is in $\Vactive$, it had to be in $\Vactive$ at the start,
  and so is in one of the packs from $\uppacks$.
  Consider any dominating candidate $D$, and any vertex $v \in \Vactive$. Let $X$ be the
  pack containing $v$. Observe that $X \in \uppacks$, as $v\in \Vactive$ and all the vertices in
  packs not in $\uppacks$ were outside $\Vactive$ from the beginning

  Let $X = V_{a}$, i.e., let $X$ be a $1$-pack. This means
  $D$ contains a vertex $x \in V_a$, and --- as $V_a$ is a clique by Lemma \ref{lem:balloon-clique}
  --- $v$ is dominated by $x$.

  Now consider the case when $X = V_{a,b}$, i.e., $X$ is a $2$-pack. As before, $X \in \uppacks$, and
  let $x$ be the vertex in $D \cap X$. As $D$ is a dominating candidate, $x \in \Vactive$.
  If $X$ is alone, then $x$ dominates
  $V_{a,b}$ --- otherwise it would be removed from $\Vactive$ in Step \ref{step-lone-$2$-pack}
  --- and thus in particular $x$ dominates $v$.
  If $X$ is not alone, then we have another pack $Y \in \uppacks$ that shares a leg, say $a$,
  with $X$, and a vertex $y \in Y \cap D$. As $D$ is a dominating candidate, $xy \not\in E$, and
  both $x$ and $y$ are adjacent to $a$.
  Thus, by Lemma \ref{lem:two-dom-nv}, $\{x,y\}$ dominates $N[a]$, and --- in particular --- $v$.

  The above proves that $\Vactive$ is indeed dominated by $D$. Now consider a vertex $v \in \Vdone$.
  If $v \in I$, then $v$ is dominated by every dominating candidate, as we disregarded choices
  of $\uppacks$ that do not guarantee this in Step \ref{step-stupid-uppacks}. We thus have
  to consider vertices moved to $\Vdone$ in Steps \ref{step-useless-baloon}--\ref{step:common-leg}.

  If $v$ was moved to $\Vdone$ in Step \ref{step-useless-baloon}, then $v \in V_a$, with $V_a \in
  \uppacks$. Thus there exists a vertex $x \in V_a \cap D$, and this vertex dominates $v$ as
  $V_a$ is a clique by Lemma \ref{lem:balloon-clique}.

  If $v$ was moved to $\Vdone$ in Step \ref{step-lone-$2$-pack}, then $v \in V_{a,b}$,
  $V_{a,b} \in \uppacks$ and $V_{a,b}$ was alone. Again, we have a vertex $x \in V_{a,b} \cap D$.
  The vertex $x$ is in $\Vactive$ as $D$ is a dominating candidate, so
  it had to survive Step \ref{step-lone-$2$-pack} --- thus it dominates $V_{a,b}$ and, in particular,
  $v$.

  If $v$ was moved to $\Vdone$ in Step \ref{step:common-leg}, we know that $v$ is in some
  pack $Z$ that shares a leg $a$ with two packs $X,Y \in \uppacks$.
  Consider $x \in X \cap D$, $y \in Y \cap D$. We know $xy \not\in E$ as $D$ is a dominating
  candidate, and $x,y \in N(a)$. Thus, by Lemma \ref{lem:two-dom-nv}, $\{x,y\}$ dominates
  $N[a]$, and, in particular, the vertex $v$.

  Now for the third claim of the Lemma, consider a \midsname $D$ compatible with $\uppacks$.
  It was a dominating candidate before we performed the steps \ref{step-useless-baloon}--\ref{step:common-leg}.
  We want to prove that it is still a dominating candidate, i.e., that no vertex of $D$ was moved
  from $\Vactive$ to $\Vdone$ by any of the steps. In the case of Step \ref{step-useless-baloon}
  this follows from Lemma \ref{lem-useless-baloon} and the branch we followed in Step \ref{step-Idisjoint}.
  In the case of Step \ref{step:common-leg} the vertices were moved to $\Vdone$ from $\Vpassive$,
  which is disjoint with $D$.

  Now assume $x \in D$ was moved to $\Vdone$ in Step \ref{step-lone-$2$-pack}. This means $x \in V_{a,b}$,
  where $V_{a,b}$ is alone in $\uppacks$, and there exists a $v \in V_{a,b}$ that is not
  dominated by $x$. As $D$ is a dominating set, however, $v$ is dominated by some $y \in D$,
  $y \neq x$. By Lemma \ref{lem:zeppelin-knows-locally} the pack $Y$ that $y$ is in (which is distinct from $V_{a,b}$ as $x,y\in D$) has to share a leg
  with $V_{a,b}$, which contradicts with the assumption that $V_{a,b}$ is alone.
\end{proof}

Using Lemma \ref{lem:end-24}, the algorithm now looks for a dominating candidate that is
a dominating set in $G$. Note that a dominating candidate is a dominating set if and only if it dominates $\Vpassive$,
since Lemma \ref{lem:end-24} ensures that the partition $(\Vactive,\Vpassive,\Vdone)$ is safe (i.e., any dominating candidate always dominates $\Vactive \cup \Vdone$).
In the following sections we reduce the sets $\Vpassive$ and $\Vactive$, sometimes by branching into a
limited number of subcases. In each branching step the subcases cover all the possibilities
for a dominating set which is a dominating candidate. Note that if at any step we decide that
a vertex $v \in \Vactive$ will not be used in the solution, we may move it directly to $\Vdone$,
as each dominating candidate dominates $v$ by the definition of a safe partition. In all steps, we shall
only move vertices to $\Vdone$ from $\Vactive$ or $\Vpassive$, not between $\Vactive$ and $\Vpassive$.
This provides us with the invariants
$\Vactive \subset \bigcup \uppacks$ and $\Vpassive \subset V \setminus (I \cup \bigcup\uppacks)$.

Let us introduce the following step.
\begin{step}\label{step:empty-active}
If at any moment, for some $X \in \uppacks$ we have $X \cap \Vactive = \emptyset$,
we terminate this branch, as there are no dominating candidates.
If at any moment, for some $v \in \Vpassive$ we have $N(v) \cap \Vactive = \emptyset$, we terminate this branch, as no dominating candidate dominates $v$.
\end{step}

If our instance has an \midsname{} of size at most \(k\), then by
the preceding arguments there exists a dominating candidate which
is also a dominating set. We now fix one such (as yet unknown)
dominating candidate which is a dominating set, and refer to it as
\emph{the solution}.

  \begin{figure}[t]
    \centering
    \includegraphics[clip,scale=0.35]{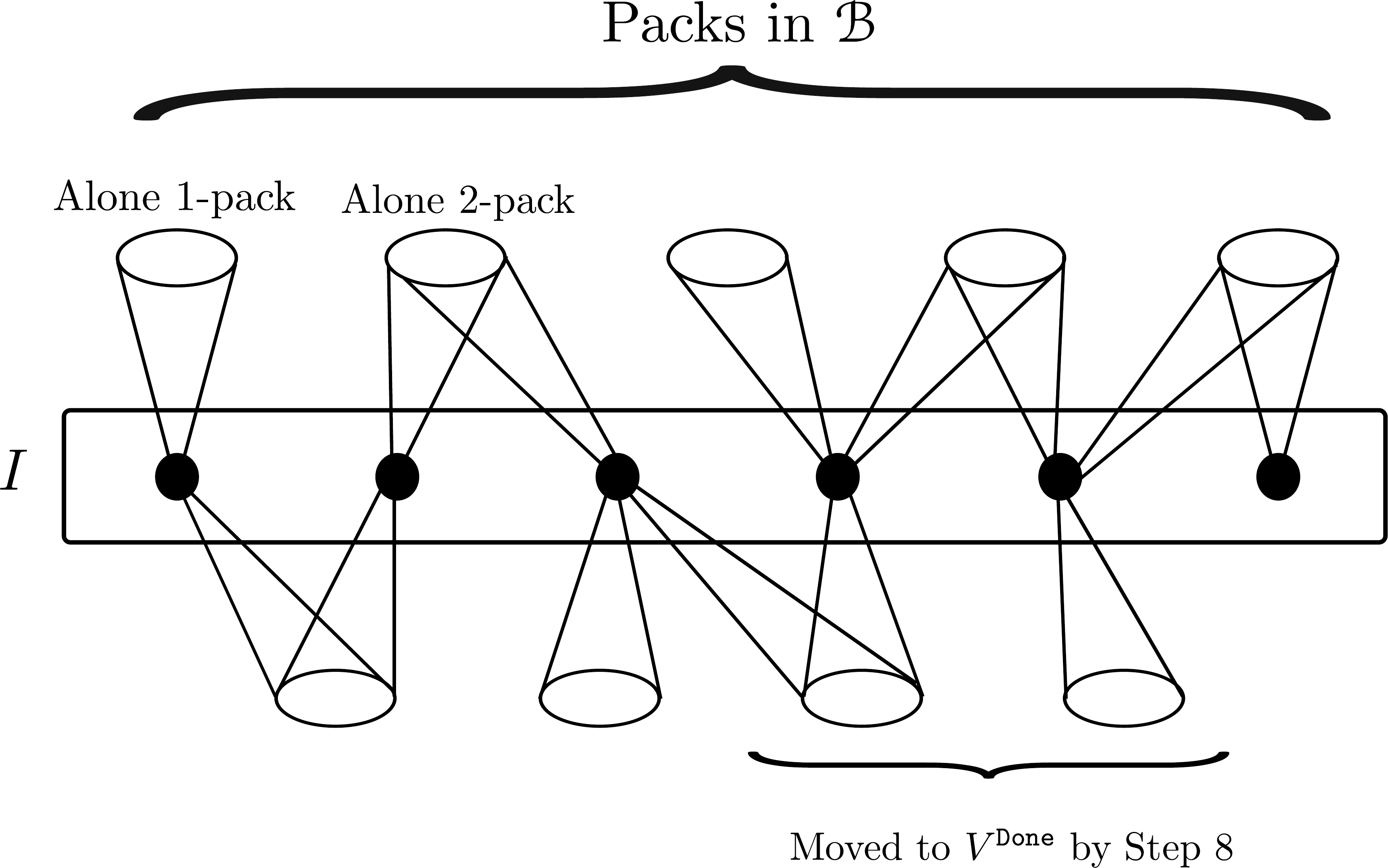}
    \caption{\label{fig:packs2}A snapshot of the graph after the
      application of Step~\ref{step:empty-active}. Edges with
      end-points in different packs are not shown.}
  \end{figure}


\newcommand{\balloons}{\ensuremath{\mathtt{Packs}_1}}
\newcommand{\Tzero}[1]{\ensuremath{\mathtt{T0}^{#1}}}
\newcommand{\Tzerop}[1]{\ensuremath{\mathtt{T0ext}^{#1}}}
\newcommand{\Tone}[2]{\ensuremath{\mathtt{T1}_{#2}^{#1}}}
\newcommand{\Ttwoall}{\ensuremath{\mathtt{T2}}}
\newcommand{\Ttwo}[1]{\ensuremath{\Ttwoall^{#1}}}
\newcommand{\clusters}{\ensuremath{\mathtt{Clusters}}}

\subsection{Decomposition of $1$-packs}\label{ss:$1$-packs}

In this section we look into the structure of $1$-packs, i.e., sets $V_a$ for $a \in I$.
Recall that each $G[V_a]$ is a clique by Lemma \ref{lem:balloon-clique}.
Let $\balloons$ be the set of all $1$-packs.
\begin{definition}
  Let $V_a$ be a $1$-pack. We partition the vertices in $V_a$ into the following sets, depending
  on their neighbourhood in $\bigcup \balloons$
  \begin{enumerate}
  \item $\Tzero{a}$ consists of those vertices $v \in V_a$ that do not know any other $1$-pack
     except for $V_a$, i.e., $N[v] \cap \bigcup \balloons \subset V_a$;
   \item $\Tone{a}{b}$ consists of those vertices $v \in V_a$ that know only $1$-packs $V_a$ and $V_b$
     for $a \neq b \in I$, i.e., $N[v] \cap \bigcup \balloons \subset V_a \cup V_b$;
   \item $\Ttwo{a}$ consists of all remaining vertices in $V_a$, i.e., those that know vertices
     from at least two $1$-packs other than $V_a$.
 \end{enumerate}
\end{definition}
Naturally, the sets $\Tzero{a}$, $\Tone{a}{b}$ or $\Ttwo{a}$ may be empty.
For example, if $V_a$ consists of a single vertex, it belongs to one of those sets
and the other two are empty.

Note that Step \ref{step:up-T0} moved $\Tzero{a}$ to $\Vdone$ for all $V_a \in \uppacks$.

Now, for each $1$-pack $V_a \in \uppacks$ we guess its part from which a vertex is taken to the solution.
\begin{step}\label{step:up-$1$-packs}
  For each $1$-pack $V_a \in \uppacks$ guess one nonempty set $T \in \{\Ttwo{a}\} \cup \{\Tone{a}{b} : b \in I, b \neq a\}$. The solution is only allowed to take a vertex from $T$, i.e.,
  we move all vertices from $V_a \setminus T$ to $\Vdone$.
\end{step}
Note that there are $O(k)$ choices for each $1$-pack, so Step \ref{step:up-$1$-packs} leads to $2^{O(k \log k)}$ subcases.

Now we switch to analyzing sets $\Ttwo{a}$.
\begin{lemma}\label{lem:Ttwo}
  Let $\Ttwoall = \bigcup_{a \in I} \Ttwo{a}$.
  Let $G_\Ttwoall$ be the graph with vertex set $\Ttwoall$ and edge set
  consisting of those edges in $G[\Ttwoall]$ that have endpoints in different $1$-packs.
  Take two vertices $v, w \in \Ttwoall$, $v \in V_a$, $w \in V_b$, $a \neq b$.
  Then $v$ and $w$ are connected by an edge in $G$ (equivalently in $G_\Ttwoall$)
  if and only if $v$ and $w$ are in the same connected component of $G_\Ttwoall$.
\end{lemma}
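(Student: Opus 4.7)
The forward direction (edge implies same component) is immediate from the definition of $G_\Ttwoall$. For the converse, I will prove by induction on the $G_\Ttwoall$-distance $d \geq 1$ that if $v \in V_a$ and $w \in V_b$ with $a \neq b$ lie in the same component of $G_\Ttwoall$, then $vw \in E$.

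The workhorse of the proof is a single claw-freeness observation: for any $u \in V_c$ and any two neighbors $x, y$ of $u$ lying in distinct $1$-packs $V_p, V_q$ with $p, q \neq c$, we have $xy \in E$. Indeed $u$ is adjacent to $c, x, y$, while $cx, cy \notin E$, so $xy \notin E$ would make $\{u, c, x, y\}$ an induced claw. The base case $d = 2$ then falls out immediately: along a $G_\Ttwoall$-path $v - u_1 - w$, consecutive vertices lie in different packs, which together with $a \neq b$ forces $v, u_1, w$ into three pairwise distinct $1$-packs, and applying the observation at $u_1$ yields $vw \in E$.

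For the inductive step, take a shortest path $v = u_0, u_1, \ldots, u_d = w$ in $G_\Ttwoall$ with $u_i \in V_{a_i}$. If $a_{d-1} \neq a$, the induction hypothesis applied to $u_0, \ldots, u_{d-1}$ gives $u_0 u_{d-1} \in E$, after which the length-$2$ case applied to $u_0, u_{d-1}, u_d$ (pairwise distinct packs) gives $u_0 u_d \in E$. Symmetrically, if $a_1 \neq b$, apply induction to $u_1, \ldots, u_d$ and then the length-$2$ case to $u_0, u_1, u_d$. The only obstacle is the alternating case $a_{d-1} = a$ and $a_1 = b$, where neither natural sub-path has endpoints in distinct packs and the straightforward reduction fails.

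To handle this alternating case I will fabricate a third-pack witness from the membership $u_1 \in \Ttwo{b}$. Since $u_1$ already knows $V_a$ through $u_0$, the definition of $\Ttwo{b}$ provides a neighbor $z$ of $u_1$ in some $1$-pack $V_g$ with $g \neq a, b$. Applying induction to $u_1, \ldots, u_{d-1}$ (length $d - 2 \geq 1$, endpoints in distinct packs $V_b$ and $V_a$) yields $u_1 u_{d-1} \in E$. The workhorse observation applied at $u_1$ to the pairs $(u_0, z)$ and $(u_{d-1}, z)$ yields $u_0 z, u_{d-1} z \in E$, and then applied at $u_{d-1} \in V_a$ to the pair $(u_d, z)$ yields $u_d z \in E$. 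The quadruple $\{z, u_0, u_d, g\}$ is now the trap: $zg \in E$, while $u_0 g, u_d g \notin E$ (since $g \neq a, b$), so $u_0 u_d \notin E$ would make it an induced claw centered at $z$, forcing $u_0 u_d \in E$. The main difficulty is precisely isolating this alternating case and conjuring the witness $z$ from the $\Ttwo{b}$ membership; everything else is straightforward bookkeeping with the workhorse observation.
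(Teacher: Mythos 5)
Your proof is correct and follows essentially the same strategy as the paper's: reduce to the case of a path that oscillates between the two $1$-packs, extract a neighbour in a third $1$-pack guaranteed by the definition of $\Ttwo{}$, propagate that witness's adjacency along the path via claw arguments, and close with one final claw forcing $vw \in E$. The only cosmetic differences are that you induct on the $G_\Ttwoall$-distance (the paper fixes one shortest path and argues by contradiction, deriving the oscillation explicitly), you take the witness from $u_1$ rather than from $v$, and you finish with a claw centred at the witness instead of invoking Lemma~\ref{lem:nv-nw-clique}.
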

\begin{proof}
  The forward implication is trivial. For the other direction, assume for the sake of contradiction that \(v,w\) are in the same component of \(G_{T2}\) but $vw \notin E$. Let $v=v_0,v_1, \ldots, v_k=w$ be a fixed shortest path in $G_\Ttwoall$
  between $v$ and $w$. Let $V_{a_i}$ be the $1$-pack containing vertex $v_i$.

  Note that $a_iv_i, v_iv_{i-1}, v_iv_{i+1} \in E$ and $a_iv_{i-1},a_iv_{i+1} \notin E$ (consecutive
  vertices on the path are in different $1$-packs by the definition of $G_\Ttwoall$).
  Thus we have $v_{i-1}v_{i+1} \in E$, as otherwise we have the claw $\claw{v_i}{a_i}{v_{i-1}}{v_{i+1}}$.
  If $a_{i-1} \neq a_{i+1}$, then $v_{i-1}v_{i+1}$ would be an edge in $G_\Ttwoall$ and
  the chosen path would not be the shortest. Thus, $a_i = a_{i+2}$ for all $0 \leq i \leq k-2$,
  i.e., the path oscillates between two $1$-packs. Note that in this case $a_1 = b$.

  As $v \in \Ttwo{a}$, we have a neighbour $u$ of $v$ that is in different $1$-pack than $V_b$,
  say $u \in V_c$. We now prove by induction that $uv_i \in E$. The base of the induction is satisfied:
  $uv_0 = uv \in E$. For the induction step, assume $uv_i \in E$. Then we risk the claw $\claw{v_i}{u}{a_i}{v_{i+1}}$:
  $v_iu \in E$ (by the induction assumption), $v_ia_i \in E$, $v_iv_{i+1} \in E$, $ua_i \notin E$ as
  $c \neq a_i$ and $a_iv_{i+1} \notin E$ as $a_i \neq a_{i+1}$. Thus $uv_{i+1} \in E$.

  Therefore $\{v_0, v_1, \ldots, v_k\} \subset N[u] \setminus N[c]$, and, by Lemma \ref{lem:nv-nw-clique},
  $vw = v_0v_k \in E$.
\end{proof}
\begin{lemma}\label{lem:tone-tone}
  For any $1$-packs $V_a$, $V_b$, $a \not = b$ we have $N(\Tone{a}{b}) \cap V_b = \Tone{b}{a}$
  and $N(\Ttwo{a}) \cap V_b \subset \Ttwo{b}$.
\end{lemma}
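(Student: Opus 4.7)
The overall strategy for both parts is the paper's bread-and-butter tool: spotting a forbidden induced claw $K_{1,3}$. For each claim, I take a hypothesized counterexample and exhibit four vertices that would induce a claw, contradicting the claw-freeness of $G$.

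For the equality $N(\Tone{a}{b}) \cap V_b = \Tone{b}{a}$, I prove both inclusions. For the forward inclusion, fix $v \in \Tone{a}{b}$ and $w \in N(v) \cap V_b$, and suppose for contradiction that $w$ has a neighbour $u \in V_c$ with $c \notin \{a,b\}$. Then the quadruple $\claw{w}{v}{b}{u}$ risks a claw centered at $w$: the edges $wv$, $wb$ (since $w \in V_b$), and $wu$ are all present, while $vb \notin E$ (since $v \in V_a$ and $b \in I \setminus \{a\}$), $bu \notin E$ (since $u \in V_c$ with $c \neq b$), and $vu \notin E$ (since $v \in \Tone{a}{b}$ has no neighbour in $V_c$ for $c \notin \{a,b\}$). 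The contradiction places $w$ in $\Tone{b}{a}$. For the reverse inclusion, any $w \in \Tone{b}{a}$ has by definition at least one neighbour $v \in V_a$, and the symmetric analysis of $\claw{v}{a}{w}{u}$ (with respect to any hypothetical $u \in N(v) \cap V_c$, $c \notin \{a,b\}$) forces $v \in \Tone{a}{b}$, so $w \in N(\Tone{a}{b})$.

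For $N(\Ttwo{a}) \cap V_b \subset \Ttwo{b}$, the central observation is: whenever $v \in V_a$ is adjacent to both $w \in V_b$ and $u \in V_c$ with $a,b,c$ pairwise distinct, then $wu \in E$ must hold, since otherwise $\claw{v}{a}{w}{u}$ would be an induced claw centered at $v$ (the non-edges $aw$, $au$, $wu$ following from $w \in V_b$, $u \in V_c$, and the assumption, respectively). Now fix $v \in \Ttwo{a}$ and $w \in N(v) \cap V_b$. By the definition of $\Ttwo{a}$, $v$ has neighbours in at least two distinct $1$-packs other than $V_a$; since at most one of those packs can be $V_b$, there is a neighbour $u$ of $v$ in some $V_c$ with $c \notin \{a,b\}$. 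The central observation yields $wu \in E$, so together with its neighbour $v \in V_a$, $w$ has neighbours in two distinct $1$-packs both different from $V_b$, placing $w$ in $\Ttwo{b}$. The only slightly delicate point in the whole proof is this pigeonhole step—ensuring at least one of $v$'s ``other'' packs differs from $V_b$; everything else is a routine ``risk a claw'' argument of exactly the kind that has been used repeatedly in preceding lemmas.
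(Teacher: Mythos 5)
Your proof is correct and follows essentially the same route as the paper's: the same claw $\claw{v}{a}{w}{u}$ centered in $V_a$ handles the $\Ttwo{}$ containment, and the same claw centered at the vertex of $V_b$ with leaves $v$, $b$, and a third-pack neighbour handles the $\Tone{}$ equality, with the same pigeonhole step to find a neighbour outside $V_b$. The only cosmetic differences are that you rule out third-pack neighbours directly rather than excluding $\Ttwo{b}$ and $\Tzero{b}$ membership separately, and you spell out the reverse inclusion that the paper dispatches by symmetry.
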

\begin{proof}
  Let $v \in \Ttwo{a}$ and let $v_b \in V_b \cap N(v)$, $a \neq b$.
  By the definition of $\Ttwo{a}$, $v$ has got neighbours in at least
  two $1$-packs other than $V_a$, so let $v_c \in V_c \cap N(v)$, $a \neq c \neq b$.
  We risk a claw $\claw{v}{a}{v_b}{v_c}$: $va, vv_b,vv_c \in E$, $av_b \notin E$ and $av_c \notin E$.
  Thus $v_bv_c \in E$, $v_b \in \Ttwo{b}$ and $N(\Ttwo{a}) \cap V_b \subset \Ttwo{b}$.

  Now suppose there is a vertex \(u\) which belongs to both
  \(N(\Tone{a}{b})\) and \(\Ttwo{b}\). Then \(u\in V_{b}\), there
  is a vertex \(v\in\Tone{a}{b}\subseteq V_{a}\) which is a
  neighbour of \(u\), and \(u\) sees a vertex \(w\) which belongs
  to a $1$-pack \(V_{c}\) which is different from both \(V_{a}\)
  and \(V_{b}\). Thus \(uv,ub,uw\in E\). Since \(v,w\) belong to
  $1$-packs other than \(V_{b}\), neither of them sees \(b\). Since
  \(v\in\Tone{a}{b}\), it does not see \(w\) which is in a $1$-pack
  \(V_{c}\) that is different from both \(V_{a}\) and
  \(V_{b}\). Thus \(\{vb,bw,vw\}\cap E=\emptyset\), and so the
  vertex set \(\{u,v,b,w\}\) induces a claw, a
  contradiction. Hence $N(\Tone{a}{b}) \cap \Ttwo{b} = \emptyset$.

  Obviously $N(\Tone{a}{b}) \cap \Tzero{b} = \emptyset$, so
  $N(\Tone{a}{b}) \cap V_b \subset \Tone{b}{a}$.  By symmetry,
  $N(\Tone{b}{a}) \cap V_a \subset \Tone{a}{b}$. Since every
  vertex in $\Tone{b}{a}$ has a neighbour in $V_a$, we have
  $N(\Tone{a}{b}) \cap V_b = \Tone{b}{a}$.
\end{proof}
This leads us to the following definition:
\begin{definition}
  Take the graph $G_\Ttwoall$ from Lemma \ref{lem:Ttwo}.
  The vertex set of any connected component of $G_\Ttwoall$ is
  called a {\em{cluster}}.
  By $\clusters$ we denote the set of all clusters.
\end{definition}
Observe that, in general, a $1$-pack \(V_{a}\) can have nonempty
intersections with more than one cluster. Note that by Lemma
\ref{lem:Ttwo}, each cluster induces a clique in $G$. The
structure of clusters gives us good control on what can be
dominated by a vertex in a cluster.
\begin{corollary}\label{cor:cluster-dom}
  Let $v \in V_a$ be a vertex in a cluster $C$.
  Then $N[v] \setminus N[a] = C \setminus V_a$, i.e., vertex $v$ dominates
  the cluster $C$ and some neighbours of $a$.
\end{corollary}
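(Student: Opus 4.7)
The plan is to prove $N[v]\setminus N[a] = C \setminus V_a$ by double inclusion. The key observation is that any cluster of $G_\Ttwoall$ is contained in $\Ttwoall\subseteq\bigcup_{b\in I} V_b$, so every vertex of $C$ lies in some $1$-pack. For the inclusion $C\setminus V_a \subseteq N[v]\setminus N[a]$, I would take $u \in C \setminus V_a$; then $u$ sits in some $V_b$ with $b \neq a$, so by definition of $V_b$ we have $u \neq a$ and $ua \notin E$, hence $u \notin N[a]$. Since $u$ and $v$ share the cluster $C$ but lie in different $1$-packs, Lemma~\ref{lem:Ttwo} immediately yields $uv \in E$, giving $u \in N[v]$.

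For the reverse inclusion I would pick $u \in N[v]\setminus N[a]$ and first note that $u \neq v$ (else $u \in V_a \subseteq N[a]$) and $u \neq a$, so $uv \in E$ and $ua \notin E$. Since $N(v) \cap I = \{a\}$ but $u \neq a$, $u$ cannot lie in $I$. The main obstacle of the proof is to exclude $u$ from every $2$-pack: if $u \in V_{b,c}$, then $a \notin \{b,c\}$ (because $ua \notin E$), and since $v \in V_a$ has $a$ as its only neighbour in $I$, we get $vb, vc \notin E$; together with $bc \notin E$ this makes $\{u,v,b,c\}$ induce the claw $\claw{u}{v}{b}{c}$, a contradiction. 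Hence $u \in V_b$ for some $b \neq a$. Since $v$ lies in a cluster we have $v \in \Ttwo{a}$, and Lemma~\ref{lem:tone-tone} then places $u \in \Ttwo{b} \subseteq \Ttwoall$; the edge $uv$ connects vertices of distinct $1$-packs inside $\Ttwoall$, so it is an edge of $G_\Ttwoall$, placing $u$ in the same connected component as $v$, i.e., $u \in C$. Combined with $u \notin V_a$, this finishes the plan.
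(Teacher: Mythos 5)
Your proof is correct and follows essentially the same route as the paper's: it identifies $N[v]\setminus N[a]$ with the neighbours of $v$ in other $1$-packs and then applies Lemma~\ref{lem:Ttwo} and Lemma~\ref{lem:tone-tone}, merely spelling out the double inclusion and re-deriving by hand (via the claw on $\{u,v,b,c\}$) the exclusion of $2$-pack neighbours that the paper obtains by citing Lemma~\ref{lem:zeppelin-knows-locally}. No gaps.
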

\begin{proof}
  By Lemma \ref{lem:zeppelin-knows-locally}, $v$ can have neighbours in $2$-packs
  with leg $a$ and in other $1$-packs. By Lemma \ref{lem:Ttwo} and Lemma \ref{lem:tone-tone},
  the set of neighbours of $v$ in other $1$-packs is exactly $C \setminus V_a$ and the
  corollary follows.
\end{proof}

We now move to $1$-packs outside $\uppacks$.
Let $V_a \notin \uppacks$ and $V_a \cap \Vpassive \neq \emptyset$, i.e., 
$V_a$ was not moved to $\Vdone$ in Step \ref{step:common-leg}. Then there exists
exactly one pack in $\uppacks$ with leg $a$, and it is a $2$-pack $V_{a,b}$
(since it is not $V_a$). Note that by Lemma \ref{lem:zeppelin-knows-locally} in any dominating candidate vertices in $\Tzero{a}$
can be only dominated from the vertex in $V_{a,b}$, or else there would be a claw. For the same reason
only $V_{a,b}$ can dominate $\Tone{a}{c}$ if $V_c \notin \uppacks$ or
in Step \ref{step:up-$1$-packs} the algorithm did not guess the set $\Tone{c}{a}$ for the $1$-pack $V_{c}$. Thus the following step leaves the algorithm in a safe state.
\begin{step}\label{step:tzeroex}
  Let $V_a \notin \uppacks$ and $V_a \cap \Vpassive \neq \emptyset$.
  Let $V_{a,b}$ be the unique pack in $\uppacks$ with leg $a$.
  Let
  $$\Tzerop{a} = \Tzero{a} \cup \bigcup \{\Tone{a}{c}: \Tone{c}{a} \cap \Vactive = \emptyset\}.$$
  Move to $\Vdone$ all vertices from $V_{a,b} \cap \Vactive$ that does not
  dominate all of $\Tzerop{a}$ (we cannot use them in the solution, since, by Lemma \ref{lem:zeppelin-knows-locally},
  only a vertex from $V_{a,b}$ can dominate $\Tzerop{a}$; recall that by Lemma \ref{lem:end-24}
  any dominating candidate dominates $\Vactive$, so we do not need to move them to $\Vpassive$).
  Move $\Tzerop{a}$ to $\Vdone$ (as it is now dominated by any vertex in $V_{a,b} \cap \Vactive$).
\end{step}

Let us analyze sets $\Tone{a}{c}$ more deeply.
\begin{lemma}\label{lem:dominate-tone}
  Let $V_a \notin \uppacks$ and $V_a \cap \Vpassive \neq \emptyset$.
  Let $V_{a,b}$ be the unique pack in $\uppacks$ with leg $a$.
  Assume that $v \in V_{a,b}$, $w \in \Tone{a}{c}$, $vw \in E$ and $c \neq b$.
  Then $V_a \setminus \Tone{a}{c} \subset N[v]$.
\end{lemma}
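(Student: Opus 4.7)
The plan is to take an arbitrary $u \in V_a \setminus \Tone{a}{c}$, assume for contradiction that $uv \notin E$, and then construct a claw from $v$, $u$, $w$ together with a witness vertex in $V_c$. Since $u \in V_a$ and $v \in V_{a,b}$ come from distinct packs, $u \neq v$, so proving $u \in N[v]$ amounts to showing $uv \in E$.

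First I would use $w \in \Tone{a}{c}$ to pick a neighbour $w' \in V_c \cap N(w)$; such a vertex exists because, as a vertex of $\Tone{a}{c}$, $w$ genuinely has a neighbour in $V_c$. The critical quadruple is $\{w, v, u, w'\}$, and three edges incident to $w$ are immediate: $wv \in E$ (given), $ww' \in E$ (by choice), and $wu \in E$ (since $u, w \in V_a$, which is a clique by Lemma~\ref{lem:balloon-clique}). Claw-freeness then forces at least one of the three pairs $vu$, $vw'$, $uw'$ to be an edge.

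The pair $vu$ is a non-edge by the contradiction hypothesis. For $vw'$: if $vw' \in E$ held, then, since $v \in V_{a,b}$ and $w' \in V_c$, Lemma~\ref{lem:zeppelin-knows-locally} would require $V_{a,b}$ and $V_c$ to share a leg; but $c \neq a$ (since $V_a$ and $V_c$ are distinct $1$-packs) and $c \neq b$ by hypothesis, a contradiction. Hence $vw' \notin E$.

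The main obstacle, and the remaining case, is ruling out $uw' \in E$, which I would handle by case analysis on which part of the $V_a$-partition contains $u$. If $u \in \Tzero{a}$ or $u \in \Tone{a}{d}$ for some $d \neq c$, then by definition $u$ has no neighbour in $V_c$, immediately contradicting $uw' \in E$. If $u \in \Tone{a}{c}$, this contradicts $u \in V_a \setminus \Tone{a}{c}$. The interesting case is $u \in \Ttwo{a}$: then Lemma~\ref{lem:tone-tone} yields $w' \in N(\Ttwo{a}) \cap V_c \subset \Ttwo{c}$, while applying the same lemma to $w \in \Tone{a}{c}$ gives $w' \in N(\Tone{a}{c}) \cap V_c = \Tone{c}{a}$; since $\Tone{c}{a}$ and $\Ttwo{c}$ are disjoint pieces of the partition of $V_c$, this is a contradiction. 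Thus $uw' \notin E$, so the vertex set $\{w, v, u, w'\}$ induces a claw at $w$, contradicting the claw-freeness of $G$. This completes the argument and establishes $V_a \setminus \Tone{a}{c} \subset N[v]$.
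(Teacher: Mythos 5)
Your proof is correct and takes essentially the same route as the paper: both form the potential claw centred at $w$ with leaves $v$, an arbitrary vertex of $V_a \setminus \Tone{a}{c}$, and a neighbour of $w$ in $V_c$, and rule out the three non-edges via the hypothesis, Lemma~\ref{lem:zeppelin-knows-locally}, and Lemma~\ref{lem:tone-tone}. The only cosmetic difference is that the paper dispatches the last non-edge in one line (the neighbour of $w$ lies in $\Tone{c}{a}$, whose neighbourhood in $V_a$ is exactly $\Tone{a}{c}$), whereas you reach the same conclusion by a case split on which part of the $V_a$-partition contains $u$; both are valid.
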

\begin{proof}
  Let $x$ be an arbitrary vertex in $N(w) \cap V_c$
  and let $y$ be an arbitrary vertex in $V_a \setminus \Tone{a}{c}$.
  As $wx, wy, wv \in E$ (recall that $G[V_a]$ is a clique), we risk a claw $\claw{w}{v}{x}{y}$.
  Note that $vx \notin E$ due to Lemma \ref{lem:zeppelin-knows-locally}
  and $xy \notin E$, as $x \in \Tone{c}{a}$ (Lemma \ref{lem:tone-tone})
  and $y \in V_a \setminus \Tone{a}{c}$. Thus $vy \in E$.
\end{proof}
This leads us to the following step.
\begin{step}\label{step:tone-branch}
  Let $V_a \notin \uppacks$ and $V_a \cap \Vpassive \neq \emptyset$.
  Let $V_{a,b}$ be the unique pack in $\uppacks$ with leg $a$.
  Assume that $\Tone{a}{c} \cap \Vpassive$ is nonempty for at least 
  one vertex $c \in I\setminus \{a,b\}$.
  Branch into following cases:
  \begin{enumerate}
    \item There exists $c \in I \setminus \{a,b\}$
      such that the vertex in the solution from $V_{a,b}$ dominates at least one vertex from
      $\Tone{a}{c}$. Guess $c$ (there are $O(k)$ choices).
      Move all vertices $v \in V_{a,b}$ with $N[v] \cap \Tone{a}{c} = \emptyset$
      to $\Vdone$. Move all vertices in $V_a \setminus \Tone{a}{c}$ to $\Vdone$,
      as they are dominated by every vertex in $V_{a,b} \cap \Vactive$ by
      Lemma \ref{lem:dominate-tone}.
    \item The vertex in the solution from $V_{a,b}$ does not dominate anything
      from $\Tone{a}{c}$ for any $c \in I \setminus \{a,b\}$.
      Move all vertices in $V_{a,b}$ that do not satisfy this condition to $\Vdone$.
      Note that now, for each $c \in I \setminus \{a,b\}$,
      the vertices from $\Tone{a}{c}$ can be dominated only by a vertex from $V_c$,
      as no vertex from $V_a$ is in the solution.
      Thus, for each $c \in I \setminus \{a,b\}$ we move to $\Vdone$ all vertices
      in $V_c \cap \Vactive$ that do not dominate all of $\Tone{a}{c}$, and all
      vertices in $\Tone{a}{c}$, as they are now guaranteed to be dominated.
  \end{enumerate}
  Note that we move all vertices from $\Vactive$ directly to $\Vdone$ (not to $\Vpassive$)
  as they are guaranteed to be dominated by any dominating candidate by Lemma \ref{lem:end-24}.
\end{step}
For each $1$-pack $V_a$ we have $O(k)$ choices, so the number of subcases
here is $2^{O(k \log k)}$. We claim
that at this point for each $V_a \notin \uppacks$ we may have
$\Tone{a}{c} \cap \Vpassive \neq \emptyset$ for at most one choice of $c \in I \setminus \{a\}$.
Indeed, if in Step \ref{step:tone-branch} we have branched into the first case and guessed
$c \in I \setminus \{a,b\}$, only $\Tone{a}{c}$ may remain nonempty. Otherwise, only
$\Tone{a}{b}$ may remain nonempty.

We now aim to move sets $\Ttwo{a} \cap \Vpassive$ to $\Vdone$.
The following lemma shows some more of the structure of clusters.
\begin{lemma}
  Let $V_a \notin \uppacks$ and $V_a \cap \Vpassive \neq \emptyset$.
  Let $V_{a,b}$ be the unique pack in $\uppacks$ with leg $a$.
  Assume that $\Ttwo{a}$ has vertices from at least two clusters.
  Then for each vertex $v\in V_{a,b}$ either $\Ttwo{a} \subset N[v]$
  or $\Ttwo{a} \cap N[v] = \emptyset$.
\end{lemma}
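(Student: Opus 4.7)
The plan is to argue by contradiction. Suppose there exists $v \in V_{a,b}$ with both $N(v) \cap \Ttwo{a} \neq \emptyset$ and $\Ttwo{a} \setminus N[v] \neq \emptyset$. I will exhibit an induced claw in $G$, contradicting claw-freeness.

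First I need to choose witnesses $w, w' \in \Ttwo{a}$ that lie in \emph{different} clusters and satisfy $vw \in E$ while $vw' \notin E$. Pick initial candidates $u_0 \in N(v) \cap \Ttwo{a}$ and $u_1 \in \Ttwo{a} \setminus N[v]$. If they already lie in different clusters, set $w := u_0$, $w' := u_1$. Otherwise, use the hypothesis that $\Ttwo{a}$ meets at least two clusters to pick some $u_2 \in \Ttwo{a}$ lying in a cluster different from the common cluster of $u_0, u_1$; if $vu_2 \in E$ take $(w, w') := (u_2, u_1)$, otherwise take $(w, w') := (u_0, u_2)$. In either case the new pair sits in distinct clusters with the required adjacency pattern. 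Moreover $ww' \in E$ because $w, w' \in V_a$ and $V_a$ is a clique by Lemma \ref{lem:balloon-clique}.

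Next I locate a third leaf of the claw. Since $w \in \Ttwo{a}$, vertex $w$ has neighbours in at least two $1$-packs other than $V_a$; in particular it has a neighbour $y \in V_{c_1}$ for some $c_1 \in I \setminus \{a,b\}$. By Lemma \ref{lem:tone-tone}, $y \in \Ttwo{c_1} \subset \Ttwoall$, and the edge $wy$ (between distinct $1$-packs in $\Ttwoall$) places $y$ in the same cluster as $w$.

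Finally I verify that $\claw{w}{v}{y}{w'}$ is a claw with centre $w$. The edges $wv$, $wy$, $ww'$ were established above. For the three non-edges among the leaves: $vy \notin E$ by Lemma \ref{lem:zeppelin-knows-locally}, as $v \in V_{a,b}$ and $y \in V_{c_1}$ share no leg (since $c_1 \notin \{a,b\}$); $vw' \notin E$ by the choice of $w'$; and $yw' \notin E$, since otherwise Lemma \ref{lem:Ttwo} applied to $y \in \Ttwo{c_1}$ and $w' \in \Ttwo{a}$ (in different $1$-packs) would force $y$ and $w'$ into the same cluster, contradicting the fact that $y$ shares a cluster with $w$ while $w, w'$ were arranged to lie in different clusters. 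This produces the desired induced claw, completing the contradiction. The only delicate point is the coupling performed in the first paragraph, ensuring the separation-by-clusters property and the adjacency pattern to $v$ hold simultaneously; everything else is a direct application of Lemmas \ref{lem:balloon-clique}, \ref{lem:zeppelin-knows-locally}, \ref{lem:Ttwo}, and \ref{lem:tone-tone}.
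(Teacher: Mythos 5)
Your proof is correct and follows essentially the same route as the paper's: the same exchange argument to arrange the two witnesses in $\Ttwo{a}$ into distinct clusters while preserving the adjacency pattern to $v$, and the same claw centred at the witness adjacent to $v$, with the third leaf taken from a $1$-pack other than $V_a$ and $V_b$. Your version is slightly more explicit than the paper's in justifying that the third leaf lies in $\Ttwoall$ (via Lemma~\ref{lem:tone-tone}) so that the cluster argument for the missing edge applies, but the substance is identical.
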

\begin{proof}
  For the sake of a contradiction, assume that there exist $v \in V_{a,b}$
  and $u, w \in \Ttwo{a}$, $vu \in E$, $vw \notin E$.
  W.l.o.g. we may assume that $u$ and $w$ lie in different clusters. Indeed,
  otherwise we have a vertex $z \in \Ttwo{a}$ that lies in a different cluster
  than $u$ and $w$. If $vz \in E$, we take $u:=z$, and if $vz \notin E$, we take $w:=z$.

  Let $x$ be a neighbour of $u$ that lies in a $1$-pack different than $V_a$ and $V_b$
  (there exists one by the definition of $\Ttwo{a}$). We have a claw $\claw{u}{v}{w}{x}$:
  $uv, uw, ux \in E$ (recall that $G[V_a]$ is a clique), $vw \notin E$, $wx \notin E$
  (as $w$ and $x$ are in different clusters and in different $1$-packs) and $vx \notin E$
  (Lemma \ref{lem:zeppelin-knows-locally}), a contradiction.
\end{proof}
This suggests the following branching:
\begin{step}\label{step:ttwo1}
  Let $V_a \notin \uppacks$ and $V_a \cap \Vpassive \neq \emptyset$.
  Let $V_{a,b}$ be the unique pack in $\uppacks$ with leg $a$.
  Let $C_1, C_2, \ldots, C_d$ be clusters with vertices in $\Ttwo{a}$.
  Assume $d \geq 2$, i.e., $\Ttwo{a}$ has vertices from at least two clusters.
  We branch into two cases:
  \begin{enumerate}
  \item the vertex in the solution from $V_{a,b}$ dominates
    $\Ttwo{a}$; we move all vertices from $V_{a,b} \cap \Vactive$
    that do not dominate $\Ttwo{a}$ to $\Vdone$ and move
    $\Ttwo{a}$ to $\Vdone$.
  \item the vertex in the solution from $V_{a,b}$ does not
    dominate any vertex from $\Ttwo{a}$; we move all vertices from
    $V_{a,b} \cap \Vactive$ that dominate $\Ttwo{a}$ to $\Vdone$.
  \end{enumerate}
\end{step}
We can also similarly take care of $1$-packs that contain vertices from exactly one cluster:
\begin{step}\label{step:ttwo2}
  Let $V_a \notin \uppacks$ and $V_a \cap \Vpassive \neq \emptyset$.
  Let $V_{a,b}$ be the unique pack in $\uppacks$ with leg $a$.
  Assume $\Ttwo{a} \neq \emptyset$ and $\Ttwo{a} \subset C$ for some cluster $C$.
  Branch into two cases:
  \begin{enumerate}
    \item the vertex in the solution from $V_{a,b}$ dominates
      $\Ttwo{a}$; as before, we move all vertices from $V_{a,b} \cap \Vactive$ that do not dominate
  $\Ttwo{a}$ to $\Vdone$ and move $\Ttwo{a}$ to $\Vdone$;
    \item the vertex in the solution from $V_{a,b}$ does not dominate whole
      $\Ttwo{a}$. As before, we move all vertices from $V_{a,b} \cap \Vactive$ that dominate
      $\Ttwo{a}$ to $\Vdone$. 
  \end{enumerate}
\end{step}

Let $C_1,C_2,\ldots,C_d$ be the clusters that are not disjoint with $\Vpassive$ after performing Steps \ref{step:ttwo1} and \ref{step:ttwo2}. For each $1 \leq i \leq d$ there exists a $1$-pack $V_{a_i} \notin \uppacks$ and a $2$-pack $V_{a_i,b_i} \in \uppacks$ such that
no vertex in $V_{a_i,b_i} \cap \Vactive$ dominates whole $V_{a_i} \cap C_i \cap \Vpassive$.
Thus, by Lemma \ref{lem:zeppelin-knows-locally}, for each $i$ the solution takes at least one vertex from cluster $C_i$.
This justifies the following branching rule:

\begin{step}\label{step:ttwo3}
   If $d > k$, return NO from this branch, as clusters $C_i$ are pairwise disjoint.
   Otherwise, for each $1 \leq i \leq d$ guess a distinct $1$-pack $B_i \in \uppacks$
      where the solution contains a vertex in $C_i$; move all vertices from $B_i \setminus C_i$
      and $(C_i \setminus B_i) \cap \Vpassive$ to $\Vdone$.
      We say that the $1$-pack $B_i$ is {\em{guessed to dominate}} $C_i$.
\end{step}

Note that in the above steps we move all vertices from $\Vactive$ directly to $\Vdone$ and not to $\Vpassive$,
as they are dominated by any dominating candidate by Lemma \ref{lem:end-24}.
Note also that after performing Steps \ref{step:ttwo1}, \ref{step:ttwo2} and \ref{step:ttwo3}, we have moved all sets $\Ttwo{a}$ to $\Vdone$. 

Moreover, in Steps \ref{step:ttwo1}, \ref{step:ttwo2} for each of $O(k)$ $1$-packs we have guessed one of two possible options, and in Step \ref{step:ttwo3}, for each of at most $k$ clusters we have guessed one of $O(k)$ possible options. This leaves us with $2^{O(k\log k)}$ branches after performing Steps \ref{step:ttwo1}, \ref{step:ttwo2} and \ref{step:ttwo3}.



We now perform some cleaning.
\begin{lemma}\label{lem:clean-useless-clusters}
  Let $V_c \in \uppacks$ be a $1$-pack with $V_c \cap \Vactive \subset \Ttwo{c}$,
  i.e., the algorithm guessed in Step \ref{step:up-$1$-packs} that the vertex
         from $V_{c}$ in the solution is contained in
         $\Ttwo{c}$. Assume that $V_c$ was not guessed to dominate any cluster
  in Step \ref{step:ttwo3}. Then if there exists a dominating
  candidate $D$ that dominates $\Vpassive$, then $D' := \{c\} \cup (D \setminus V_c)$ is
  a dominating set in $G$.
\end{lemma}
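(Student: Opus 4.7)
My plan is to pin down, via Corollary \ref{cor:cluster-dom}, exactly what vertices the single vertex $v := D\cap V_c$ dominates that are not already in $N[c]$, and to verify by a case analysis that every such vertex has another neighbour in $D\setminus\{v\}$. Since $D$ is a dominating candidate, $D\cap V_c$ is a single vertex, and the hypothesis $V_c\cap \Vactive\subset \Ttwo{c}$ (imposed in Step \ref{step:up-$1$-packs}) forces $v\in\Ttwo{c}$; let $C$ be the cluster containing $v$. By Corollary \ref{cor:cluster-dom}, $N[v]\setminus N[c] = C\setminus V_c$, so since $c$ already dominates $N[c]$, it suffices to show $C\setminus V_c\subset N[D\setminus\{v\}]$. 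Note that $D$ dominates all of $V$: it dominates $\Vpassive$ by hypothesis and $\Vactive\cup\Vdone$ by Lemma \ref{lem:end-24}.

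Next I would fix an arbitrary $w\in C\cap V_a$ with $a\neq c$ and split into three cases according to the status of $V_a$. First, if $V_a\in\uppacks$, the $D$-representative of $V_a$, which is distinct from $v$ since $V_a\neq V_c$, dominates $w$ because $V_a$ is a clique by Lemma \ref{lem:balloon-clique}. Second, if $V_a\notin\uppacks$ but $V_a$ was relegated to $\Vdone$ in Step \ref{step:common-leg}, then two distinct packs $X,Y\in\uppacks$ share the leg $a$; their $D$-representatives $x,y$ are nonadjacent by the dominating-candidate condition and are both adjacent to $a$, so Lemma \ref{lem:two-dom-nv} gives that $\{x,y\}\subset D\setminus\{v\}$ dominates $N[a]\ni w$. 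Third, if $V_a\notin\uppacks$ still meets $\Vpassive$ after Step \ref{step:common-leg}, I would examine which branch of Step \ref{step:ttwo1} or Step \ref{step:ttwo2} (exactly one of them triggers on $V_a$, since $w\in \Ttwo{a}$) was taken.

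The main obstacle is this third case, because it involves tracking two consecutive branching decisions. If the branch taken in Step \ref{step:ttwo1}/\ref{step:ttwo2} was the one forcing the unique vertex $z\in D\cap V_{a,b}$ to dominate $\Ttwo{a}$, then $z\in D\setminus\{v\}$ dominates $w$ immediately. Otherwise $w$ survives in $\Vpassive$ past these two steps, so the cluster $C$ appears among $C_1,\ldots,C_d$ in Step \ref{step:ttwo3}; here the hypothesis that $V_c$ was not guessed to dominate any cluster is used crucially, as it forces the $1$-pack $B_j$ chosen for $C$ to differ from $V_c$. The cleanup inside Step \ref{step:ttwo3} (moving $B_j\setminus C$ to $\Vdone$) then forces $u := D\cap B_j$ to lie in $B_j\cap C$, and since $C$ is a clique by Lemma \ref{lem:Ttwo}, the vertex $u\in D\setminus\{v\}$ dominates $w$. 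Combining the three cases yields $C\setminus V_c\subset N[D\setminus\{v\}]$, and hence $D'$ dominates every vertex of $G$.
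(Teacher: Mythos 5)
Your overall strategy coincides with the paper's: reduce the claim to showing $C \setminus V_c \subseteq N[D \setminus \{v\}]$ via Corollary~\ref{cor:cluster-dom}, then for each $w \in C \cap \Ttwo{a}$ identify which vertex of $D$ other than $v$ is responsible for $w$. Your cases for $V_a \in \uppacks$, for $V_a$ cleared in Step~\ref{step:common-leg}, and for Steps~\ref{step:ttwo1}--\ref{step:ttwo3} all match the paper's proof, and your extra detail in the Step~\ref{step:ttwo3} subcase (that $B_j \cap \Vactive \subseteq C$ after the cleanup, so the representative of $B_j$ lies in the clique $C$) is a correct elaboration.

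However, your case analysis has a genuine gap: you overlook Step~\ref{step:tone-branch}. In its first branch that step moves \emph{all} of $V_a \setminus \Tone{a}{c'}$ to $\Vdone$ for the guessed $c'$, and this set contains $\Ttwo{a}$, hence possibly $w$. So the dichotomy you assert in your third case --- ``either the branch of Step~\ref{step:ttwo1}/\ref{step:ttwo2} forces $z \in D \cap V_{a,b}$ to dominate $\Ttwo{a}$, or $w$ survives in $\Vpassive$ past these two steps and $C$ appears among the $C_i$ in Step~\ref{step:ttwo3}'' --- is not exhaustive: $w$ may already sit in $\Vdone$ before Steps~\ref{step:ttwo1}/\ref{step:ttwo2} are reached (and indeed $V_a \cap \Vpassive$ may then be empty, so neither step need trigger at all, contradicting your claim that exactly one of them does). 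The conclusion still holds in that scenario, but for a reason your argument never supplies: by Lemma~\ref{lem:dominate-tone}, every vertex surviving in $V_{a,b} \cap \Vactive$ after that branch dominates $V_a \setminus \Tone{a}{c'} \supseteq \Ttwo{a} \ni w$, and the solution's vertex in $V_{a,b}$ is distinct from $v \in V_c$. This is exactly the missing case the paper's proof handles explicitly (it enumerates \emph{all} steps that move vertices from $\Vpassive$ to $\Vdone$, namely Steps~\ref{step:common-leg}, \ref{step:tzeroex}, \ref{step:tone-branch}, \ref{step:ttwo1}, \ref{step:ttwo2}, \ref{step:ttwo3}, and checks each); adding it repairs your proof.
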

\begin{proof}
  Since $D$ dominates $\Vpassive$, $D$ is a dominating set in $G$. Let $\{v\} = D \cap V_c$
  and let $C$ be the cluster containing $v$ (recall that
  $D \subset \Vactive$ and $V_c \cap \Vactive \subset \Ttwo{c}$). To prove that $D'$ is
  a dominating set in $G$ we need to ensure that $C \setminus V_c$ is dominated
  by $D \setminus \{v\}$ (recall Lemma \ref{cor:cluster-dom}).

  Take $w \in C \setminus V_c$, let $w \in \Ttwo{a}$. If $V_a \in \uppacks$,
  $w$ is dominated by a vertex from $D \cap V_a$. So let us assume that $V_a \notin \uppacks$.

  As Steps \ref{step:ttwo1}, \ref{step:ttwo2} and \ref{step:ttwo3} moved $\Ttwo{a} \cap \Vpassive$ to $\Vdone$, $w \in \Vdone$.
  We consider the possible steps in which vertex $w$ could have been placed placed in $\Vdone$.
  We moved vertices from $\Vpassive$ to $\Vdone$ in Step \ref{step:common-leg},
  Step \ref{step:tzeroex}, Step \ref{step:tone-branch}, Step \ref{step:ttwo1}, Step \ref{step:ttwo2}
  and Step \ref{step:ttwo3}.

  If $w$ was placed in $\Vdone$ in Step \ref{step:common-leg}, $D \setminus \{v\}$
  contains the two vertices in packs with leg $a$, and thus $w$ is dominated.

  Step \ref{step:tzeroex} does not touch the set $\Ttwo{a}$.

  If $w$ was placed in $\Vdone$ in Step \ref{step:tone-branch}, then the algorithm guessed that it is dominated by a vertex 
  from the $2$-pack $V_{a,b}$. As $v\notin V_{a,b}$, $D\setminus \{v\}$ dominates $w$.

  Consider Steps \ref{step:ttwo1}, \ref{step:ttwo2}, and \ref{step:ttwo3}. In the $1$-pack $V_a$,
  we either guessed to dominate whole $\Ttwo{a}$ by a vertex from the $2$-pack
  $V_{a,b} \in \uppacks$ or we guessed a $1$-pack $V_d$ ($d \neq c$) to dominate cluster $C$.
  As $v \notin V_{a,b}$ and $v \notin V_d$ respectively, $D \setminus \{v\}$ dominates $w$ in both cases.
\end{proof}
Lemma \ref{lem:clean-useless-clusters} implies that we can discard
those subcases where there exists a $1$-pack \(V_{c}\) which
satisfies the conditions of the lemma : $V_c\in\uppacks$, it was
not guessed to dominate any cluster in Step~\ref{step:ttwo3}, and
$V_c \cap \Vactive \subset \Ttwo{c}$. Indeed, if in such a subcase
there exists a solution, i.e., a dominating candidate $D$ that
dominates $\Vpassive$, by Lemma \ref{lem:clean-useless-clusters}
there exists a dominating set $D'$ not disjoint with $I$. By
Proposition \ref{prop:mds-vs-mids} ($I_D = D' \cap I$), there
exists an \midsname{} not disjoint with $I$, a contradiction to
the guess in Step \ref{step-Idisjoint}.
\begin{step}\label{step:clean-useless-clusters}
  If there exists a $1$-pack $V_c$ satisfying the conditions in Lemma \ref{lem:clean-useless-clusters},
  terminate the branch.
\end{step}

Let us conclude this section with the following lemma.
\begin{lemma}\label{lem:$1$-packs-final}
  After executing Steps \ref{step:up-$1$-packs}--\ref{step:clean-useless-clusters}:
  \begin{enumerate}
    \item the algorithm is in a safe state;
    \item if before Steps \ref{step:up-$1$-packs}--\ref{step:clean-useless-clusters} there existed
      a dominating candidate that dominated $\Vpassive$, then after Steps
      \ref{step:up-$1$-packs}--\ref{step:clean-useless-clusters}
      there exists one in at least one subcase
      or there exists an \midsname{} not disjoint with $I$;
    \item we branched into at most $2^{O(k \log k)}$ subcases;
    \item in every $1$-pack $V_a \notin \uppacks$, the set
      $V_a \cap \Vpassive$ is empty or is contained in one set $\Tone{a}{c}$;
    \item in every $1$-pack $V_a \in \uppacks$, the set
      $V_a \cap \Vactive$ is contained in one set $\Tone{a}{b}$ or
      in one cluster in $\Ttwo{a}$.
  \end{enumerate}
\end{lemma}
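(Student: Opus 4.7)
The plan is to verify the five claims in turn, leaning on the per-step analyses of Steps \ref{step:up-$1$-packs}--\ref{step:clean-useless-clusters} and on Lemma \ref{lem:end-24} for the safety invariants.

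Claim (3) is the easiest: summing up the branching factors, Step \ref{step:up-$1$-packs} contributes $O(k)$ choices per $1$-pack in $\uppacks$, Step \ref{step:tone-branch} contributes $O(k)$ choices per $1$-pack outside $\uppacks$, Steps \ref{step:ttwo1} and \ref{step:ttwo2} are only binary per pack, and Step \ref{step:ttwo3} contributes $O(k)$ choices per cluster for up to $k$ clusters. Since $|I|=O(k)$, the product is $2^{O(k\log k)}$.

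For claims (4) and (5), I would trace the fate of vertices in a $1$-pack $V_a$ case by case. For $V_a \notin \uppacks$: Step \ref{step:tzeroex} moves $\Tzerop{a}$, containing $\Tzero{a}$ and every ``orphan'' $\Tone{a}{c}$, into $\Vdone$; Step \ref{step:tone-branch}, when triggered, leaves in $\Vpassive$ only one surviving set (the guessed $\Tone{a}{c}$ in Case 1, or just $\Tone{a}{b}$ in Case 2); if not triggered, only $\Tone{a}{b}$ was ever left in $\Vpassive$ to begin with. Steps \ref{step:ttwo1}, \ref{step:ttwo2}, \ref{step:ttwo3} remove $\Ttwo{a}\cap\Vpassive$ entirely: Case~1 of the first two moves it to $\Vdone$ directly, while Case~2 defers to Step \ref{step:ttwo3}, where the guessed $B_i\in\uppacks$ takes over the relevant cluster $C_i$ and the remaining $(C_i\setminus B_i)\cap\Vpassive$ becomes done. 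For $V_a\in\uppacks$: Step \ref{step:up-$1$-packs} already restricts $V_a\cap\Vactive$ to one chosen set $T\in\{\Ttwo{a}\}\cup\{\Tone{a}{b}:b\in I\}$; if $T=\Tone{a}{b}$ there is nothing more to do, while if $T=\Ttwo{a}$ then Step \ref{step:clean-useless-clusters}, justified by Lemma \ref{lem:clean-useless-clusters}, forces $V_a$ to be selected as some $B_i$ in Step \ref{step:ttwo3}, after which $V_a\cap\Vactive\subset V_a\cap C_i$ lies in a single cluster.

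Claim (1), safety, is then verified step by step. Vertices moved from $\Vactive$ into $\Vdone$ are always safe by Lemma \ref{lem:end-24}; vertices moved from $\Vpassive$ into $\Vdone$ rely on Lemma \ref{lem:zeppelin-knows-locally}, which restricts which $2$-pack must dominate each $\Tzerop{a}$, $\Tone{a}{c}$, or $\Ttwo{a}$, combined either with retaining in $\Vactive$ only those vertices of the responsible pack which actually perform the required domination (Step \ref{step:tzeroex}) or with the explicit case split enforced by the branching (Steps \ref{step:tone-branch}, \ref{step:ttwo1}, \ref{step:ttwo2}, \ref{step:ttwo3}). For claim (2), I would follow an arbitrary pre-existing dominating candidate $D$ that dominates $\Vpassive$ and argue that it admits a consistent branch at every branching step; the only genuinely subtle case is Step \ref{step:clean-useless-clusters}, where termination is justified because Lemma \ref{lem:clean-useless-clusters} produces a dominating set intersecting $I$ and Proposition \ref{prop:mds-vs-mids} upgrades it to an \midsname{} not disjoint with $I$, which is exactly the alternative allowed by the claim.

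The main obstacle I anticipate is the bookkeeping in claim (2), specifically making sure that the distinct guesses $B_i$ in Step \ref{step:ttwo3} can be made consistently with $D$'s actual placement without any vertex of $D$ being moved into $\Vdone$. This would be handled by combining Lemma \ref{lem-packs-size-one} (at most one vertex of $D$ in each pack, so distinctness of $B_i$'s is automatic) with the paragraph preceding Step \ref{step:ttwo3} (each unresolved cluster $C_i$ forces $D$ to contain a vertex of $C_i$ via Lemma \ref{lem:zeppelin-knows-locally}), so that the guess $B_i$ can always be chosen as the pack through which $D$ enters $C_i$.
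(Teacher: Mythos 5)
Your proposal is correct and follows essentially the same route as the paper, whose proof of this lemma is itself a summary that defers to the inline justifications accompanying Steps \ref{step:up-$1$-packs}--\ref{step:clean-useless-clusters}; your per-claim tracing (branching counts for claim (3), the fate of $\Tzero{a}$, $\Tone{a}{c}$, $\Ttwo{a}$ for claims (4)--(5), and Lemma \ref{lem:end-24} plus the step-by-step justifications for claims (1)--(2)) matches that structure, just spelled out in more detail. Your resolution of the Step \ref{step:ttwo3} bookkeeping --- distinctness of the $B_i$ follows because a dominating candidate has one vertex per pack and the clusters are pairwise disjoint --- is also consistent with the paper's argument preceding that step.
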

\begin{proof}
  The first two claims were justified by the inline comments when steps were described.

  The third claim can be seen as follows.
  In Step \ref{step:up-$1$-packs}, in Step \ref{step:tzeroex} 
  and in Step \ref{step:clean-useless-clusters} we do not branch.
  In Step \ref{step:tone-branch} we have $O(k)$ subcases for each $1$-pack
  $V_a \notin \uppacks$. As we have $O(k)$ $1$-packs, the bound holds for this step.
  The bound on the number of subcases introduced by Steps \ref{step:ttwo1}, \ref{step:ttwo2}, \ref{step:ttwo3}
  has been justified after their descriptions.

  As for the fourth claim, note that after Step \ref{step:ttwo1} and Step \ref{step:ttwo2},
  the sets $\Ttwo{a}$ are contained in $\Vdone$. Step \ref{step:tzeroex} moved sets $\Tzero{a}$
  to $\Vdone$. Step \ref{step:tone-branch} reduced the number of sets $\Tone{a}{b}$ with
  passive vertices to at most one set.

  The fifth claim follows directly from branching in Step \ref{step:up-$1$-packs}
  and from cleaning in Step \ref{step:clean-useless-clusters}.
\end{proof}

\newcommand{\cspvars}{\ensuremath{\mathtt{Vars}}}
\newcommand{\cspcons}{\ensuremath{\mathtt{Cons}}}
\newcommand{\cspvalues}[1]{\ensuremath{\mathtt{Val}(#1)}}
\newcommand{\cspallow}[1]{\ensuremath{\mathtt{Allow}_{#1}}}

\subsection{Auxiliary CSP and dynamic programming}\label{ss:dp}

We now define an auxiliary CSP problem and see that the current state of the algorithm
is in fact an instance of this CSP.
\begin{definition}
  An instance of the auxiliary CSP consists of a set $\cspvars$ of variables,
  for each variable $x \in \cspvars$ a set of possible values $\cspvalues{x}$,
  and a set of constraints $\cspcons$. A constraint is a triple $C=(x_C,y_C,\cspallow{C})$,
  where $x_C,y_C \in \cspvars$, $x_C \neq y_C$ and $\cspallow{C} \subset \cspvalues{x_C} \times \cspvalues{y_C}$.
  The solution is an assignment $\phi$ that assigns to each $x \in \cspvars$ a value $\phi(x) \in \cspvalues{x}$
  such that for each constraint $C=(x_C,y_C,\cspallow{C})$ we have $(\phi(x_C),\phi(y_C)) \in \cspallow{C}$.
\end{definition}

If an instance of the auxiliary CSP problem has a certain simple
structure, then it can be solved in polynomial time.

\begin{lemma}\label{lem:solve-csp}
  If an auxiliary CSP instance has the property that for each $x
  \in \cspvars$ there are at most $2$ other variables such that
  there exists constraints bounding $x$ and these variables, then
  the instance can be solved in polynomial time.
\end{lemma}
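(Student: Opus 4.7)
The plan is to reduce the problem to solving independent subproblems on each connected component of a certain \emph{constraint graph} $H$, defined to have vertex set $\cspvars$ and an edge between two variables $x,y$ whenever some constraint $C \in \cspcons$ satisfies $\{x_C, y_C\} = \{x,y\}$. By hypothesis, every vertex of $H$ has degree at most $2$, so each connected component of $H$ is either an isolated vertex, a simple path, or a simple cycle. Since different components share no constraints, I can solve each component separately and combine the assignments. Before anything else I would merge parallel constraints: for every pair $\{x,y\}$ that is an edge of $H$, intersect the allowed-pair sets of all constraints between $x$ and $y$ (reversing ordered pairs where necessary), obtaining a single effective constraint. Isolated vertices are trivial: pick any value.

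For a path component $x_1 - x_2 - \cdots - x_\ell$ I would perform a standard left-to-right dynamic programming. For each index $i$ and each value $v \in \cspvalues{x_i}$, define $A[i,v]$ to be \emph{true} iff there exists a partial assignment $\phi$ to $x_1,\ldots,x_i$ with $\phi(x_i)=v$ satisfying all merged constraints between consecutive variables $x_1,\ldots,x_i$. The base case is $A[1,v] = \text{true}$ for every $v \in \cspvalues{x_1}$, and the transition is
\[
A[i,v] \;=\; \bigvee_{u \in \cspvalues{x_{i-1}}} \Bigl( A[i-1,u] \;\wedge\; (u,v) \in \cspallow{i-1,i} \Bigr),
\]
where $\cspallow{i-1,i}$ denotes the merged allowed set on the edge $x_{i-1}x_i$. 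A valid assignment exists on the path iff $A[\ell,v]$ is true for some $v$, and one can be reconstructed by standard back-pointers. The total work is polynomial in the sizes of the value sets and the path length.

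For a cycle component $x_1 - x_2 - \cdots - x_\ell - x_1$ I would iterate over all choices of $v^\ast \in \cspvalues{x_1}$, fix $\phi(x_1) := v^\ast$, and run the same path-DP on $x_2,\ldots,x_\ell$ but restricted to those values $u \in \cspvalues{x_2}$ with $(v^\ast,u) \in \cspallow{1,2}$ at the base, and finally accepting only those endpoint values $v \in \cspvalues{x_\ell}$ with $(v,v^\ast) \in \cspallow{\ell,1}$. Correctness is immediate: fixing the value of $x_1$ breaks the cycle into a path whose path-DP correctly handles every remaining constraint, and both constraints incident to $x_1$ are verified. This costs a factor of $|\cspvalues{x_1}|$ over the path algorithm, still polynomial. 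Since each component can be resolved independently, the overall running time is polynomial. The only mild subtlety, which I would be careful about, is the bookkeeping needed to correctly account for the ordered nature of the triples $(x_C,y_C,\cspallow{C})$ when merging parallel constraints and when looking up $\cspallow{i-1,i}$ and $\cspallow{\ell,1}$, but this is routine.
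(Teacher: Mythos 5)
Your proposal is correct and follows essentially the same route as the paper: merge parallel constraints, observe that the degree-$2$ hypothesis makes the constraint graph a disjoint union of isolated vertices, paths, and cycles, solve paths by a left-to-right sweep (your table $A[i,v]$ records exactly the surviving values in the paper's ``pruned lists'' $L_i$), and break each cycle by guessing the value of one of its variables. No substantive differences.
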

\begin{proof}
  Let \(\mathcal{C}\) be an auxiliary CSP instance on a set
  \(\cspvars\) of variables which has the stated property. Let
  \(\{x,y\}\subseteq \cspvars\) be a set of two variables such
  that there is more than one constraint involving \(x\) and
  \(y\), and let these constraints be
  \(\{(x,y,A_{1}),(x,y,A_{2}),\ldots,(x,y,A_{\ell})\}\). We may
  replace all these constraints by the single constraint
  \((x,y,\displaystyle{\bigcap_{i=1}^{\ell}A_{i}})\) to obtain an
  equivalent CSP instance. Also, one can merge two constraints
  \((x,y,A_{1})\) and \((y,x,A_{2})\) which differ only in the
  order of the variables, into a single constraint \((x,y,A_{12})\) in
  the natural manner. Therefore in the rest of the proof we
  assume, without loss of generality, that there is at most one
  constraint in the auxiliary CSP instance \(\mathcal{C}\) which
  involves any given subset of two variables.

  We represent \(\mathcal{C}\) as a graph \(\mathcal{G}\) on the
  vertex set \(\cspvars\) by adding, for each constraint
  \(C=(x,y,\cspallow{C})\), an edge labelled \(\cspallow{C}\)
  between the vertices \(x\) and \(y\). Observe that because of
  the special property of \(\mathcal{C}\), this graph has maximum
  degree at most \(2\), and so it is a collection of paths and
  cycles. For any vertex set \(X\subseteq V(\mathcal{G})\), we
  define the ``sub-instance'' of \(\mathcal{C}\) associated with
  \(X\) to be the CSP instance consisting of the variable set
  \(X\), the sets of possible values of the variables in \(X\),
  and all the constraints of \(\mathcal{C}\) which involve the
  variables in \(X\). Note that, in general, the sub-instance
  associated with a vertex set \(X\) may not be well-formed, in
  that it may contain constraints which involve variables which
  are \emph{not} in \(X\).

  Let \(A\subseteq V(\mathcal{G})\) be a set of vertices of
  \(\mathcal{G}\) such that the subgraph induced by \(A\) is a
  connected component of \(\mathcal{G}\). Observe that the
  connectivity of \(\mathcal{G}[A]\) ensures that for any variable
  \(x\in A\), the set
  \(\{y\in\cspvars\,\mid\,(x,y,\cspallow{C})\in\cspcons\}\) is a
  subset of \(A\). So the sub-instance associated with \(A\) is
  well-formed. Further, if \(A_{1},A_{2},\ldots,A_{\ell}\) are the
  vertex sets of all the connected components of \(\mathcal{G}\),
  and \(\phi_{1},\phi_{2},\ldots,\phi_{\ell}\) are solutions to
  the sub-instances of \(\mathcal{C}\) associated with
  \(A_{1},A_{2},\ldots,A_{\ell}\), respectively, then
  \(\phi=\phi_{1}\uplus\phi_{2}\ldots\uplus\phi_{\ell}\) is a
  solution of \(\mathcal{C}\). Conversely, if \(\phi\) is a
  solution of \(\mathcal{C}\), then for any \(1\le i\le\ell\),
  \(\phi\) restricted to the variable set \(A_{i}\) is clearly a
  solution of the sub-instance of \(\mathcal{C}\) associated with
  \(A_{i}\).

  If the connected component induced by the vertex set \(A\) is a
  path, say \((a_{1},a_{2},\ldots,a_{\ell})\), then we can find a
  solution for the sub-instance associated with \(A\), if it
  exists, by ``pruning the path''. We first associate, with each
  \(a_{i}\), a list \(L_{i}\) containing the set
  \(\cspvalues{a_{i}}\) of possible values of \(a_{i}\). For each
  \(2\le i\le\ell\) in this order, we go through the list
  \(L_{i}\) and delete all those values \(y\in L_{i}\) for which
  there is no \(x\in L_{i-1}\) such that
  \((x,y)\in\cspallow{C};(a_{i-1},a_{i},\cspallow{C})\in\cspcons\). Observe
  that after this step, for each value \(y\in L_{i}\) there is at
  least one value \(x\in L_{i-1}\) such that assigning the values
  \(x\) to \(a_{i-1}\) and \(y\) to \(a_{i}\) satisfies the
  constraint involving \(a_{i-1}\) and \(a_{i}\).
   
  If this procedure deletes all the values in any list \(L_{i}\),
  then there is no solution for the sub-instance associated with
  \(A\), and so also for the CSP instance
  \(\mathcal{C}\). Otherwise, this sub-instance has at least one
  solution. To find such a solution, pick any surviving value
  \(x_{\ell}\in L_{\ell}\). Now for each \(\ell-1\ge i\ge 1\), in
  this order, find a value \(x_{i}\in L_{i}\) such that assigning
  the values \(x_{i}\) to \(a_{i}\) and \(x_{i+1}\) to \(a_{i+1}\)
  satisfies the constraint involving \(a_{i}\) and
  \(a_{i+1}\). Such a value \(x_{i}\) always exists, and the
  assignment which gives the value \(x_{i}\) to \(a_{i}\) for each
  \(1\le i\le\ell\) satisfies all the constraints involving the
  variables of \(A\).

  If the connected component induced by the vertex set \(A\) is a
  cycle, say \((a_{1},a_{2},\ldots,a_{\ell},a_{1})\), then we
  guess a value --- say \(x\) --- for the variable \(a_{2}\) and
  check whether there is a solution for the sub-instance
  associated with \(A\) which gives the value \(x\) to
  \(a_{2}\). To do this, we delete the vertex \(a_{2}\) from \(A\)
  to obtain a path, and associate, with each remaining \(a_{i}\),
  a list \(L_{i}\) containing the set \(\cspvalues{a_{i}}\) of
  possible values of \(a_{i}\). From the list \(L_{1}\) we delete
  all those values \(y\) for which
  \((y,x)\notin\cspallow{C};(a_{1},a_{2},\cspallow{C})\in\cspcons\). Similarly,
  from the list \(L_{3}\) we delete all those values \(y\) for
  which
  \((x,y)\notin\cspallow{C};(a_{2},a_{3},\cspallow{C})\in\cspcons\). We
  now prune the path \((a_{3},a_{4},\ldots,a_{\ell},a_{1})\) in
  the same way as before, starting with these values for the lists
  \(L_{i}\).
  
  We solve for each connected component of \(\mathcal{G}\) in this
  manner. If any component does not have a solution, then we stop
  the processing and return NO as the answer. Otherwise we return
  the disjoint union of the satisfying assignments computed for
  each component.

  Since the possible set of values and the set of constraints are
  both part of the input, a straightforward implementation of the
  pruning operation takes \(O(n^{3})\) time over all component
  paths where \(n\) is the size of the input. Also, a value for a
  variable can be guessed in \(O(n)\) time, and so a simple
  implementation of the above algorithm solves the problem in
  \(O(n^{4})\) time.
  %
  %
\end{proof}

Before we start to encode the state of our algorithm, we need one more step.
\begin{step}\label{step:cleaning}
  Let $v \in \Vpassive$.
  Assume that $N(v) \cap \Vactive \subset X$ for one
  pack $X \in \uppacks$. Then $v$ can be dominated only by the single vertex
  from the solution from $X$, so move to $\Vdone$ the vertex $v$ and all vertices
  from $X \cap \Vactive$ that do not dominate $v$. Note that by Lemma \ref{lem:end-24}
  all vertices in $X \cap \Vactive$ are dominated by any dominating candidate, so we can
  move them directly to $\Vdone$ instead of $\Vpassive$.
\end{step}
Observe that after performing Step \ref{step:cleaning} exhaustively, each vertex from $\Vpassive$ has neighbours in at least two packs from $\uppacks$ (recall that by Step \ref{step:empty-active} each vertex in $\Vpassive$ has at least one neighbour in $\Vactive$). This can be streghtened to the following observation.
\begin{lemma}\label{lem:two-degree}
  Assume we have executed Step \ref{step:cleaning} exhaustively.
  Let $W$ be a pack not in $\uppacks$ and assume that $W \cap \Vpassive \neq \emptyset$.
  Then there exist two packs $Y, Z \in \uppacks; Y\neq Z$ such that every vertex $v \in W \cap \Vpassive$
  has got neighbours in $Y \cap \Vactive$, in $Z \cap \Vactive$ and no
  other active neighbours in other packs in $\uppacks$.
  Moreover, if a pack $Y' \in \uppacks$ shares a leg with $W$, then $Y' \in \{Y,Z\}$.
\end{lemma}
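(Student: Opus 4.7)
My plan is to exploit the fact that Step \ref{step:cleaning} has been applied exhaustively, so every $v \in W \cap \Vpassive$ already has active neighbours in at least two distinct packs of $\uppacks$. The task then reduces to identifying, uniformly over all $v \in W \cap \Vpassive$, exactly which two packs $Y, Z \in \uppacks$ can provide these active neighbours. I will split the argument by whether $W$ is a $2$-pack or a $1$-pack; in both cases the ``Moreover'' clause will follow for free, since every pack of $\uppacks$ sharing a leg with $W$ will turn out to be $Y$ or $Z$.

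For $W = V_{a,b}$ a $2$-pack not in $\uppacks$, Lemma \ref{lem:zeppelin-knows-locally} restricts active neighbours of any $v \in W$ to packs of $\uppacks$ with leg $a$ or leg $b$. The crux is to show that there is a \emph{unique} pack in $\uppacks$ with leg $a$, and likewise for leg $b$: if two packs of $\uppacks$ had leg $a$, Step \ref{step:common-leg} would have pushed all of $W$ into $\Vdone$, contradicting $W \cap \Vpassive \neq \emptyset$, and symmetrically for $b$. Setting $Y$ to be the unique pack in $\uppacks$ with leg $a$ and $Z$ the one with leg $b$ completes the case, with $Y \neq Z$ because otherwise $Y=Z$ would carry both legs and hence equal $V_{a,b}=W \notin \uppacks$.

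For $W = V_a$ a $1$-pack not in $\uppacks$, the main tool is Lemma \ref{lem:$1$-packs-final}(4), which pins $W \cap \Vpassive$ inside a single set $\Tone{a}{c}$ for one uniquely determined $c$. By the definition of $\Tone{a}{c}$, active neighbours of $v \in W \cap \Vpassive$ in $1$-packs other than $V_a$ can only lie in $V_c$; and by Lemma \ref{lem:zeppelin-knows-locally}, active neighbours of $v$ in $2$-packs must lie in $2$-packs with leg $a$. As in the previous case, Step \ref{step:common-leg} rules out two packs of $\uppacks$ with leg $a$, leaving a unique such pack $Y$, which must be a $2$-pack $V_{a,b}$ since $V_a \notin \uppacks$. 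Since $v$ needs active neighbours in at least two packs of $\uppacks$ and they all lie in $Y \cup V_c$, the pack $V_c$ must itself belong to $\uppacks$ and contribute an active neighbour of $v$; so $Z := V_c$ works uniformly for every $v \in W \cap \Vpassive$.

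The only delicate point is not any single claw-freeness argument but the bookkeeping needed to confirm that the same pair $(Y,Z)$ serves \emph{every} $v \in W \cap \Vpassive$: for the $2$-pack case this uniformity is forced combinatorially by Step \ref{step:common-leg}, while for the $1$-pack case it is precisely the content of Lemma \ref{lem:$1$-packs-final}(4), which locks the whole set $W \cap \Vpassive$ into one $\Tone{a}{c}$ with a single index $c$.
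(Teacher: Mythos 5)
Your proof is correct and follows essentially the same route as the paper's: split on whether $W$ is a $2$-pack or a $1$-pack, use Step~\ref{step:common-leg} to get uniqueness of the pack of $\uppacks$ on each relevant leg, invoke Lemma~\ref{lem:zeppelin-knows-locally} to confine $2$-pack adjacencies, and invoke Lemma~\ref{lem:$1$-packs-final} to confine $V_a\cap\Vpassive$ to a single $\Tone{a}{c}$. Your write-up is in fact slightly more explicit than the paper's on two minor points (why $V_c$ must itself lie in $\uppacks$, and why the ``Moreover'' clause follows), but the substance is identical.
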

\begin{proof}
  As Step \ref{step:cleaning} cannot be executed more, each vertex $v \in W \cap \Vpassive$
  has active neighbours in at least two packs in $\uppacks$. Thus, we need to prove
  that the active neighbours of $v$ are contained in only two packs from $\uppacks$.

  Firstly assume that $W$ is a $1$-pack, $W=V_a$.
  As $W$ was not moved to $\Vdone$ in Step \ref{step:common-leg},
  there exists exactly one pack in $\uppacks$
  with leg $a$, denote it by $V_{a,b}$ (it is not a $1$-pack, since it is not $V_a$).
  Moreover, by Lemma \ref{lem:$1$-packs-final}, $V_a \cap \Vpassive$ is contained
  in one set $\Tone{a}{c}$. Thus, $v$ has active neighbours only in $V_{a,b}$ and $V_c$.

  Now assume that $W$ is a $2$-pack, $W=V_{a,b}$.
  As $W$ was not moved to $\Vdone$ in Step \ref{step:common-leg},
  there exists exactly one pack in $\uppacks$ with leg $a$ (say $X_a$)
  and exactly one pack in $\uppacks$ with leg $b$ (say $X_b$). Observe that $X_a\neq X_b$ as otherwise $X_a=X_b=W$.
  Moreover, by Lemma \ref{lem:zeppelin-knows-locally}, $W$ does not have
  edges to any other pack in $\uppacks$. Thus, $v$ has active neighbours
  only in $X_a$ and $X_b$.
\end{proof}
Informally, Lemma \ref{lem:two-degree} implies that every pack
not in $\uppacks$ which still contains some nontrivial vertices
(i.e., those in $\Vpassive$) implies a constraint on only two
packs in $\uppacks$.

Using Lemma \ref{lem:two-degree} we now show how to encode the state of our algorithm after all the steps from previous sections have been performed. Recall that we have $\Vactive \subset \bigcup \uppacks$
and $\Vpassive \subset V \setminus (I \cup \bigcup\uppacks)$, as we had so
in Lemma \ref{lem:end-24} and we only performed moves from $\Vactive$ or $\Vpassive$ to $\Vdone$.
\begin{definition}
  The {\em{auxiliary CSP associated with partition $(\Vactive, \Vpassive, \Vdone)$}} is constructed as follows.
\begin{enumerate}
  \item For each pack $X \in \uppacks$ we introduce variable $x^X$ with set of values $\Vactive \cap X$.
  \item For each pair of packs $X, Y \in \uppacks$ with a common leg $a$ we introduce the constraint
    $$(x^X, x^Y, \{(v,w) \in (X \cap \Vactive) \times (Y \cap \Vactive): vw \notin E\}).$$
    This constraint is called an {\em{independence constraint}}.
  \item For each pack $W \notin \uppacks$ that has nontrivial vertices, i.e., $W \cap \Vpassive \neq \emptyset$
    take the two packs $Y$ and $Z$ from Lemma \ref{lem:two-degree} and we introduce the constraint
    $$(x^Y, x^Z, \{(v,w) \in (Y \cap \Vactive) \times (Z \cap \Vactive): W \cap \Vpassive \subset N[v] \cup N[w]\}).$$
    This constraint is called a {\em{dominating constraint}}.
\end{enumerate}
\end{definition}
The following Lemma formalizes the equivalence of the constructed auxiliary CSP and the current state of the algorithm.
\begin{lemma}
  There exists a dominating candidate $D$ that is a dominating set in $G$ if and only if
  the associated auxiliary CSP has got a solution.
\end{lemma}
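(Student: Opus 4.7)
\medskip
\noindent\textbf{Proof plan.} The plan is to prove the two directions separately using the setup we have already established. We map dominating candidates $D$ to CSP assignments $\phi$ by the natural rule $\phi(x^{X}) := $ the unique vertex in $D \cap X$, and conversely an assignment $\phi$ to the set $D := \{\phi(x^{X}) : X \in \uppacks\}$. In both directions the value sets $X \cap \Vactive$ are nonempty by Step~\ref{step:empty-active}, so all objects are well-defined.

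For the forward direction, assume $D$ is a dominating candidate that is a dominating set. Then $\phi(x^{X}) \in X \cap \Vactive$ by condition~(1) of Definition~\ref{def-dominating-candidate}, so $\phi$ assigns legal values. Independence constraints are satisfied by condition~(2) of that same definition. For a dominating constraint coming from a pack $W \notin \uppacks$ with $W \cap \Vpassive \neq \emptyset$, pick any $v \in W \cap \Vpassive$; since $D$ is a dominating set, $v$ is dominated by some $d \in D$. As $D \subset \Vactive$, the dominator $d$ lies in some pack of $\uppacks$, and by Lemma~\ref{lem:two-degree} the only packs in $\uppacks$ that contain active neighbours of vertices in $W \cap \Vpassive$ are $Y$ and $Z$. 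Hence $d \in \{\phi(x^{Y}),\phi(x^{Z})\}$, so $W \cap \Vpassive \subseteq N[\phi(x^{Y})] \cup N[\phi(x^{Z})]$, which is exactly the requirement of the dominating constraint.

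For the backward direction, assume $\phi$ satisfies all constraints and set $D := \{\phi(x^{X}) : X \in \uppacks\}$. The vertices $\phi(x^{X})$ lie in pairwise disjoint packs, so $|D| = |\uppacks|$ and $D$ has exactly one active vertex from each pack of $\uppacks$; the independence constraints provide exactly condition~(2) of Definition~\ref{def-dominating-candidate}, so $D$ is a dominating candidate. Since the partition $(\Vactive, \Vpassive, \Vdone)$ is safe (this was established in Lemma~\ref{lem:end-24} and every subsequent step was designed to preserve safety), $D$ automatically dominates $\Vactive \cup \Vdone$. It remains to dominate $\Vpassive$. For $v \in \Vpassive$, the invariant $\Vpassive \subseteq V \setminus (I \cup \bigcup \uppacks)$ places $v$ inside some pack $W \notin \uppacks$ with $W \cap \Vpassive \neq \emptyset$; the dominating constraint associated with $W$ forces $\phi(x^{Y})$ and $\phi(x^{Z})$ to jointly dominate $W \cap \Vpassive$, and in particular $v$. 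Therefore $D$ dominates all of $V$.

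The main obstacle is purely bookkeeping: one has to verify that throughout Sections~\ref{ss:overview}--\ref{ss:dp} we have preserved safety of the partition and the invariants $\Vactive \subset \bigcup \uppacks$ and $\Vpassive \subset V \setminus (I \cup \bigcup \uppacks)$, so that Lemma~\ref{lem:two-degree} is indeed applicable to every pack $W \notin \uppacks$ that introduces a dominating constraint. Once these invariants are in hand, the equivalence is immediate and the argument above is just a translation between the combinatorial and constraint-satisfaction formulations.
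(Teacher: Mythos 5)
Your proof is correct and follows essentially the same route as the paper's: translate between $D$ and $\phi$ via the unique vertex per pack, use condition (2) of the dominating-candidate definition for the independence constraints, Lemma~\ref{lem:two-degree} for the dominating constraints, and safety of the partition to upgrade domination of $\Vpassive$ to domination of all of $G$. The only difference is that you spell out, via Lemma~\ref{lem:two-degree}, why the dominator of a passive vertex must be one of the two designated values $\phi(x^{Y}),\phi(x^{Z})$ --- a detail the paper leaves implicit --- which is a welcome addition rather than a deviation.
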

\begin{proof}
  Let $D$ be a dominating candidate that is a dominating set in $G$.
  For each $x^X \in \cspvars$ define $\phi(x)$ to be the unique vertex in $D \cap X$.
  Since $D$ is a dominating candidate, $\phi$ satisfies all independence constraints.
  Since $D$ is a dominating set in $G$, in particular it dominates $\Vpassive$
  and $\phi$ satisfies all dominating constraints. Thus, $\phi$ is a solution
  to the auxiliary CSP instance.

  In the other direction, let $\phi$ be a solution to the auxiliary CSP instance.
  We prove that $D = \{\phi(x): x \in \cspvars\}$ is a dominating candidate that dominates $G$.

  By the definition of the auxiliary CSP instance, $D$ contains exactly one vertex from
  each pack in $\uppacks$, thus $D$ is compatible with $\uppacks$. The independence
  constraints imply that the second property from the dominating candidate definition
  is satisfied also.

  The dominating constraints imply that $D$ dominates $\Vpassive$. As the algorithm
  is in a safe state, this implies that $D$ dominates $G$.
\end{proof}

We have constructed the above CSP, but the multigraph associated
with it can have arbitrarily large degree.  The next section is
devoted to bounding the maximum degree of the associated
multigraph in order to use Lemma \ref{lem:solve-csp}.

\subsection{CSP degree reduction}\label{ss:degred}

In this last part of the algorithm we bound the maximum
degree 
of the multigraph associated with the auxiliary CSP problem by
\(2\), so that we can 
solve it in polynomial time as explained in Lemma
\ref{lem:solve-csp}.

Before we start, we need to do some cleaning.
\begin{step}\label{step:dom-guess}
 For each pack $W$ satisfying $W \notin \uppacks$
 and $W \cap \Vpassive \neq \emptyset$ and
 for each pack $X \in \uppacks$ that satisfies
 $N(X \cap \Vactive) \cap W \cap \Vpassive \neq \emptyset$
 guess whether the vertex in $X$ from the solution dominates
 something from $W$ or it dominates nothing from $W$.
 In both cases, move the vertices from $X \cap \Vactive$
 that do not satisfy the chosen case to $\Vdone$.
 Moreover, in the second case, apply Step \ref{step:cleaning} to pack $W$,
 as then $W \cap \Vpassive$ can be dominated by only one pack in $\uppacks$
 (Lemma \ref{lem:two-degree}).
\end{step}
Note that by Lemma \ref{lem:two-degree}, for each such $W$ there exist exactly two packs $X$.
There are $O(k^2)$ packs, thus the Step \ref{step:dom-guess} leads to 
$2^{O(k^2)}$ subcases.

After the above cleaning the following holds.
\begin{lemma}\label{lem:after-dom-guess}
  Assume that Step \ref{step:dom-guess} is performed exhaustively
  and let $C$ be a dominating constraint in the associated
  auxiliary CSP instance that corresponds to a pack $W \notin
  \uppacks$, $W \cap \Vpassive \neq \emptyset$.  Let $Y$ and $Z$
  be the packs asserted by Lemma \ref{lem:two-degree}.  Then
  each vertex in $(Y\cup Z) \cap \Vactive$ has at least one
  neighbour in $W$.
\end{lemma}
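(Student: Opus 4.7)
The plan is to argue by contradiction, showing that if the conclusion fails then $W \cap \Vpassive$ would have been emptied by Step~\ref{step:dom-guess} itself, contradicting the hypothesis that $W$ still contributes a dominating constraint. Concretely, fix a pack $W \notin \uppacks$ with $W \cap \Vpassive \neq \emptyset$, and let $Y, Z$ be as in Lemma~\ref{lem:two-degree}. By that lemma, every $v \in W \cap \Vpassive$ has active neighbours in both $Y$ and $Z$, so both $Y$ and $Z$ satisfied the condition $N(X \cap \Vactive) \cap W \cap \Vpassive \neq \emptyset$ in Step~\ref{step:dom-guess}, and consequently the algorithm made a binary guess for each of them.

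Suppose, toward a contradiction, that the conclusion of the lemma fails, say for $Y$: there exists $u \in Y \cap \Vactive$ with no neighbour in $W$. I will argue that this forces the guess for $Y$ to have been the second case (the vertex from $Y$ dominates nothing from $W$). Indeed, in the first case, Step~\ref{step:dom-guess} would have moved $u$ to $\Vdone$, because $u$ does not dominate any vertex of $W$. Hence only the second case can leave $u$ active in $Y$.

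But in the second case, Step~\ref{step:dom-guess} additionally applies Step~\ref{step:cleaning} to $W$. The second case also moved to $\Vdone$ every vertex of $Y \cap \Vactive$ having a neighbour in $W$, so after this move $N(v) \cap Y \cap \Vactive = \emptyset$ for every $v \in W \cap \Vpassive$. By Lemma~\ref{lem:two-degree} the active neighbours of any such $v$ lie in $Y \cup Z$ only, so now $N(v) \cap \Vactive \subseteq Z$. Step~\ref{step:cleaning} then moves $v$ to $\Vdone$ for every $v \in W \cap \Vpassive$, leaving $W \cap \Vpassive = \emptyset$. This contradicts our standing assumption that $W$ still produces a dominating constraint.

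The only subtle point, and the one I would double-check, is that Step~\ref{step:dom-guess} is applied \emph{exhaustively}, so in the argument above the cleaning triggered by a Case~2 guess has already been completed before we inspect $W$. By symmetry the same reasoning rules out Case~2 for $Z$. Therefore the algorithm must have guessed Case~1 for both $Y$ and $Z$, which means every surviving vertex in $(Y \cup Z) \cap \Vactive$ dominates some vertex of $W$, and hence (since $X$ and $W$ are disjoint packs) has at least one neighbour in $W$, as required.
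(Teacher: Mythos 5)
Your proof is correct and is essentially the paper's own argument, just unpacked: the paper states directly that since $W\cap\Vpassive\neq\emptyset$ both $Y$ and $Z$ must have guessed the "dominates something from $W$" case (because the other case triggers Step~\ref{step:cleaning} and empties $W\cap\Vpassive$), and hence only vertices with neighbours in $W\cap\Vpassive$ survive in $\Vactive$. Your contradiction-based writeup fills in the same reasoning in more detail (modulo a harmless notational slip where $X$ should read $Y$ or $Z$ in the last sentence).
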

\begin{proof}
  If $W \cap \Vpassive \neq \emptyset$, then both $Y$ and $Z$ guessed in Step \ref{step:dom-guess}
  to dominate something from $W$. Thus, only vertices with neighbours in $W \cap \Vpassive$
  survived in $\Vactive$ in Step \ref{step:dom-guess}.
\end{proof}

We now present the crucial structural lemma that allows us to reduce
the auxiliary CSP instance.
\begin{lemma}\label{lem:two-cliques}
  Let $a \in I$ and $X, W_1, W_2$ be three packs with leg $a$ satisfying
  $X \in \uppacks$, $W_1, W_2 \notin \uppacks$, $W_1 \cap \Vpassive \neq \emptyset$,
  $W_2 \cap \Vpassive \neq \emptyset$. Moreover, assume that the following property
  holds: for each pack $A \in \{X, W_1, W_2\}$, for each vertex $v \in A \cap (\Vactive \cup \Vpassive)$ there exists
  a vertex $n_v \in V \setminus (X \cup W_1 \cup W_2)$ such that
  $N(n_v) \cap (X \cup W_1 \cup W_2) \subset A$. Then $(X \cup W_1 \cup W_2) \cap (\Vactive \cup \Vpassive)$
  can be partitioned
  into two sets $K_1$ and $K_2$, such that $G[K_1]$ and $G[K_2]$ are cliques
  and if $v_1 \in K_1$, $v_2 \in K_2$ and $v_1$ and $v_2$ are in different packs,
  then $v_1v_2 \notin E$.
  Such sets $K_1$ and $K_2$ can be found in polynomial time.
\end{lemma}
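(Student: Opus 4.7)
All vertices of $S := (X \cup W_1 \cup W_2) \cap (\Vactive \cup \Vpassive)$ lie in $N(a)$, so applying claw-freeness with $a$ as centre immediately yields that $G[S]$ has no independent set of size three; equivalently, the complement of $G[S]$ is triangle-free. I would reformulate the desired partition $S = K_1 \uplus K_2$ as a signed $2$-colouring problem on $S$: introduce a ``different colour'' constraint between every intra-pack non-adjacent pair (forced by each $K_i$ being a clique in $G$) and a ``same colour'' constraint between every inter-pack adjacent pair (forced by the cross-pack condition of the lemma). A partition satisfying the lemma exists exactly when the associated signed graph $\Gamma$ on $S$ is balanced, and in that case BFS propagation along the edges of $\Gamma$ produces the required colour classes $K_1, K_2$ in polynomial time.

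The technical heart is thus to verify that $\Gamma$ is always balanced. I plan to argue by contradiction, picking a shortest imbalanced closed walk in $\Gamma$. Triangle-freeness of the complement of $G[S]$ already rules out purely intra-pack short imbalanced cycles. The remaining obstructions must involve a mix of both constraint types, and this is where the witness vertices are decisive: for each $v \in S$, the vertex $n_v$ lies outside $X \cup W_1 \cup W_2$ and has no neighbours in $S$ outside the pack of $v$, so it cleanly certifies which pack $v$ belongs to. A case analysis over the possible short imbalanced patterns---playing either $a$ as centre (for triples of mutually non-adjacent vertices of $S$) or a suitable $n_v$ as centre (to separate endpoints lying in different packs)---yields an induced $K_{1,3}$ in each case, contradicting claw-freeness of $G$.

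The polynomial-time construction is then straightforward: build $\Gamma$ in polynomial time from $G$, run BFS to $2$-colour each connected component while enforcing the sign of every traversed edge, and output the two colour classes. The main obstacle I anticipate is the mixed-type case of the imbalance argument, in particular configurations resembling an imbalanced $5$-cycle in $\Gamma$; there, one must carefully choose which witness $n_v$ to use as the claw centre, and simultaneously invoke claw-freeness at $a$, to exclude the $\overline{C_5}$-like structure which a triangle-free complement alone does not rule out. The hypothesis that each pack $A \in \{X,W_1,W_2\}$ admits witnesses for \emph{every} of its active/passive vertices is precisely what ensures that such a witness is always available for whichever cycle vertex we need to distinguish.
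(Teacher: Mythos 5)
Your reformulation as a balanced signed graph is a reasonable way to package the target, but the proposal has two genuine problems. First, the constraint set is incomplete: for $G[K_1]$ and $G[K_2]$ to be cliques you must also force \emph{inter-pack non-adjacent} pairs into different classes, not only intra-pack ones; with the constraints as you state them, balancedness of $\Gamma$ does not imply that the two colour classes are cliques, so the claimed equivalence fails in the ``if'' direction. Second, and more seriously, the heart of the argument --- that $\Gamma$ is balanced --- is only sketched, and the tools you invoke (claw-freeness at $a$, and the witnesses $n_v$) are not enough to carry it out. The configuration you yourself flag as the hard case, an induced $\overline{C_5}$, can in principle sit entirely inside one pack with no inter-pack edges at all: then $a$ as a claw centre gives you nothing (there is no independent triple in $\overline{C_5}$), and the witnesses $n_v$ give you nothing either (they only control neighbourhoods \emph{across} packs). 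Such a set admits no partition into two cliques, so some additional hypothesis must be ruled in to exclude it --- and indeed the lemma is only true because of the state the algorithm is in when it is applied.

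The missing ingredient is precisely what the paper's proof leans on: by Lemma~\ref{lem:two-degree} every vertex of $(W_1\cup W_2)\cap\Vpassive$ has a neighbour in $X\cap\Vactive$, and by Lemma~\ref{lem:after-dom-guess} every vertex of $X\cap\Vactive$ has neighbours in both $W_1\cap\Vpassive$ and $W_2\cap\Vpassive$. Hence, in the graph $H$ of inter-pack edges on $(X\cup W_1\cup W_2)\cap(\Vactive\cup\Vpassive)$, every connected component meets all three packs. Combining this with Lemma~\ref{lem:nv-nw-clique} applied to $v$ and $n_v$ (which shows the inter-pack neighbourhood of each vertex is a clique) and a shortest-path argument, the paper shows each component of $H$ induces a clique in $G$; and a claw centred at $a$ with one leaf per component rules out three or more components. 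The components themselves are then $K_1$ and $K_2$. Your plan never uses the ``every component meets all three packs'' fact, and without it the balancedness claim is simply not provable (and, absent the ambient guarantees from Steps~\ref{step:cleaning} and~\ref{step:dom-guess}, not even true). You would need to import Lemmas~\ref{lem:two-degree} and~\ref{lem:after-dom-guess} explicitly and show that every negative (non-adjacency) edge of $\Gamma$ joins vertices in different components of $H$ before the signed-graph machinery can close the argument.
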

\begin{proof}
  Let $V_H = (X \cup W_1 \cup W_2) \cap (\Vpassive \cup \Vactive)$ and
  let $H$ be a graph with vertex set $V_H$ and with edge set $E_H$
  consisting of those edges of $G[V_H]$ that have endpoints in
  different packs. We prove that the graph $H$ has at most two connected components,
  and a vertex set of each connected component of $H$ induces a clique in $G$.

  By Lemma \ref{lem:two-degree},
  every vertex in $(W_1 \cup W_2) \cap \Vpassive$ has a neighbour in $X \cap \Vactive$.
  By Lemma \ref{lem:after-dom-guess}, every vertex in $X \cap \Vactive$
  has a neighbour in $W_1 \cap \Vpassive$ and a neighbour in $W_2 \cap \Vpassive$.
  Thus, every connected component of $H$ intersects all three packs $X$, $W_1$ and $W_2$.

  Moreover, by Lemma \ref{lem:nv-nw-clique}, for each $v \in V_H$ we have
  that $G[N[v] \setminus N[n_v]]$ is a clique. Note that $N_H(v) \subset N[v] \setminus N[n_v]$. Thus we have a following observation: if a vertex $v\in V_H$ has two neighbours in the two other packs, then they are adjacent.

  We now prove the following claim. Let $C$ be a vertex set of a connected component in $H$
  and let $v \in C \cap X$ be an arbitrary vertex. Then $C \cap (W_1 \cup W_2) \subset N[v]$.
  By the contrary, assume that there exists $w \in W_1 \cap \Vpassive \cap C$, such that $vw \notin E$.
  Let $v = v_0, v_1, v_2, \ldots, v_k = w$ be the shortest path in $H$ between $v$ and $w$;
  if $vw \notin E$ then $k \geq 2$.
  If for some $i$ the vertices $v_{i-1}$, $v_i$, $v_{i+1}$ lie in three different packs
  $X$, $W_1$, $W_2$, by the previous observation they form a triangle: $v_{i-1}$ and $v_{i+1}$ are neighbours of $v_i$ and they lie in the two other packs, so $v_{i-1}v_{i+1}\in E$. Thus the path is not the shortest
  one. Therefore, the path oscillates between $X$ and $W_1$, i.e., $v_{2i} \in X$ and $v_{2i+1} \in W_1$.
  Let $u \in W_2 \cap \Vpassive$ be an arbitrary neighbour of $v$ in $H$. Then, by induction we prove that
  $v_iu \in E$ for every $i$: $v_0u = vu \in E$ and if $v_{i-1}u \in E$, then $v_i$
  and $u$ are neighbours of $v_{i-1}$ and they lie in different packs, thus $v_iu \in E$.
  Thus $wu, vu \in E$ and $u$, $v$, $w$ lie in different packs, so $vw \in E$ and the claim
  is proven.

  Now let $v_1, v_2$ be two vertices in the same connected component $C$ of $H$
  and assume $v_1$ and $v_2$ lie in the same pack. As $C$ has vertices in each pack
  $X$, $W_1$, and $W_2$, let $u$ be a common neighbour in $H$ of $v_1$ and $v_2$ that lie in a different
  pack (it exists by the previous claim). Recall than $N[u] \setminus N[n_u]$ induces a 
  clique and $v_1, v_2 \in N[u] \setminus N[n_u]$, thus $v_1v_2 \in E$. Thus $G[C]$ is a clique.

  Assume that there are three different connected components $C_1$, $C_2$, $C_3$ in $H$.
  Take $v_1 \in C_1 \cap X$, $v_2 \in C_2 \cap W_1$, $v_3 \in C_3 \cap W_2$. We have
  $av_1,av_2,av_3 \in E$ but $v_1v_2,v_2v_3,v_3v_1 \notin E$, a contradiction, as
  $\claw{a}{v_1}{v_2}{v_3}$ is a claw.

  Thus $H$ consists of one or two connected components. If one, we take $K_1 = V_H$ and
  $K_2 = \emptyset$. If two, we take $K_1$ and $K_2$ to be equal to the vertex sets of these
  components. This completes the proof.
  Note that the sets $K_1$ and $K_2$ can be computed in polynomial time,
  since they are simply the connected components of the graph $H$.
\end{proof}
Let us note that the conditions in Lemma \ref{lem:two-cliques} can be checked in polynomial
time: for each vertex $v \in (X \cup W_1 \cup W_2) \cap (\Vpassive \cup \Vactive)$
we simply check all possibilities for $n_v$.

Note that the above lemma gives us the following step.
\begin{step}\label{step:degree-template}
  For each triple of packs $X \in \uppacks$; $W_1,W_2 \notin \uppacks$ check whether
  the conditions of Lemma \ref{lem:two-cliques} are satisfied.
  If yes, compute sets $K_1$ and $K_2$ and guess whether the
  vertex in the solution from the pack $X$ is in $K_1$ or $K_2$. If the set $K_i$ is chosen,
  move vertices from $K_{3-i} \cap X$ to $\Vdone$ (not to $\Vpassive$, as
  Lemma \ref{lem:end-24} asserts that all dominating candidates dominate $\Vactive$),
  move vertices from
  $(W_1 \cup W_2) \cap K_i \cap \Vpassive$ to $\Vdone$ (they are guaranteed to be dominated by
  the vertex in $X$), and apply Step \ref{step:cleaning}
  to the vertices in $(W_1 \cup W_2) \cap K_{3-i} \cap \Vpassive$ (now they cannot be dominated
  by the vertex from $X$).
\end{step}
Let us note that the above step moves sets $W_1 \cap \Vpassive$ and $W_2 \cap \Vpassive$ to $\Vdone$.
\begin{lemma}\label{lem:after-degree-template}
  Assume Step \ref{step:degree-template} has been executed for sets $X$, $W_1$ and $W_2$.
  Then $(W_1 \cup W_2) \cap \Vpassive = \emptyset$, i.e., $W_1$ and $W_2$ no longer give raise to
  a dominating constraint in the auxiliary CSP.
\end{lemma}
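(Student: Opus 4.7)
The plan is to verify that after Step \ref{step:degree-template} runs on the triple $(X, W_1, W_2)$, every vertex of $(W_1 \cup W_2) \cap \Vpassive$ ends up in $\Vdone$. The step partitions the relevant vertices into $K_1 \uplus K_2$ according to Lemma \ref{lem:two-cliques}, picks an index $i \in \{1,2\}$ so that the solution vertex from $X$ lies in $K_i \cap X$, and then handles passive vertices in $K_i$ and in $K_{3-i}$ differently. I would split the analysis along this partition.

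The easy half is $v \in (W_1 \cup W_2) \cap K_i \cap \Vpassive$: the text of Step \ref{step:degree-template} explicitly moves these vertices to $\Vdone$, so there is nothing to check. The main obstacle is the other half, $v \in (W_1 \cup W_2) \cap K_{3-i} \cap \Vpassive$, where the step only promises to \emph{invoke} Step \ref{step:cleaning} on $v$, and I need to verify that the precondition of that cleaning step is actually met, namely that all remaining active neighbours of $v$ are contained in a single pack from $\uppacks$.

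I would handle this as follows. Assume $v \in W_1$ (the case $v \in W_2$ is symmetric). Apply Lemma \ref{lem:two-degree} to the pack $W_1 \notin \uppacks$ with $W_1 \cap \Vpassive \neq \emptyset$ to obtain two distinct packs $X', Y' \in \uppacks$ such that every active neighbour of $v$ lying in $\bigcup \uppacks$ belongs to $X' \cup Y'$. Since $W_1$ and $X$ share the leg $a$ and $X \in \uppacks$, the last sentence of Lemma \ref{lem:two-degree} forces $X \in \{X', Y'\}$; without loss of generality $X = X'$. Now Step \ref{step:degree-template} has already moved $K_{3-i} \cap X$ into $\Vdone$, so $X \cap \Vactive \subseteq K_i$. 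Because $v \in K_{3-i} \cap W_1$ and $W_1 \neq X$, the last clause of Lemma \ref{lem:two-cliques}---no edges between $K_1$ and $K_2$ across different packs---yields $N(v) \cap X \cap \Vactive = \emptyset$. Consequently $N(v) \cap \Vactive \subseteq Y'$, a single pack of $\uppacks$, which is exactly the precondition of Step \ref{step:cleaning}. That step therefore moves $v$ to $\Vdone$, and combining both halves proves $(W_1 \cup W_2) \cap \Vpassive = \emptyset$ as claimed.
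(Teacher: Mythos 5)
Your proof is correct and follows the same case split as the paper's own argument: vertices of $(W_1\cup W_2)\cap\Vpassive$ lying in $K_i$ are moved to $\Vdone$ directly by Step~\ref{step:degree-template}, and those in $K_{3-i}$ are disposed of by the invocation of Step~\ref{step:cleaning}. The only difference is that you explicitly verify the precondition of Step~\ref{step:cleaning} --- combining Lemma~\ref{lem:two-degree} with the no-cross-edges property of Lemma~\ref{lem:two-cliques} to show the surviving active neighbours lie in a single pack --- a detail the paper's proof asserts without elaboration; this added justification is sound.
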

\begin{proof}
  Before Step \ref{step:degree-template} is executed on $X$, $W_1$ and $W_2$, each vertex in
  $(W_1 \cup W_2) \cap \Vpassive$ was in $K_1$ or $K_2$. Assume that $K_i$ is chosen to contain
  the vertex from the solution in $X$. Then the vertices from $(W_1 \cup W_2) \cap \Vpassive \cap K_i$
  are moved to $\Vdone$, since they are dominated by any vertex in $X \cap \Vactive$.
  Moreover, the vertices from $(W_1 \cup W_2) \cap \Vpassive \cap K_{3-i}$ are moved to $\Vdone$
  in the execution of Step \ref{step:cleaning}, since now they can be dominated only by
  vertices from one particular pack in $\uppacks$.
\end{proof}
Let us now note that Step \ref{step:degree-template} cannot be executed many times.
\begin{lemma}\label{lem:num-degree-template}
  Step \ref{step:degree-template} can be executed at most $O(k^2)$ times,
  and thus all executions lead to at most $2^{O(k^2)}$ subcases.
\end{lemma}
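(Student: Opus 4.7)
The plan is to prove both parts of the lemma by a simple counting argument, using Lemma \ref{lem:after-degree-template} as the key structural fact. First I would argue that any \emph{meaningful} execution of Step \ref{step:degree-template} --- meaning one that actually branches and modifies the current state --- requires the chosen triple $(X,W_1,W_2)$ to satisfy the conditions of Lemma \ref{lem:two-cliques}, and in particular requires $W_1 \cap \Vpassive \neq \emptyset$ and $W_2 \cap \Vpassive \neq \emptyset$. If either of these is empty, the step has nothing to do (the associated dominating constraint does not exist and there are no passive vertices to reassign), so we may safely skip it.

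Next I would invoke Lemma \ref{lem:after-degree-template}: once Step \ref{step:degree-template} has been executed on a triple $(X,W_1,W_2)$, we have $(W_1\cup W_2)\cap \Vpassive = \emptyset$. Moreover, Step \ref{step:cleaning} and subsequent steps only move vertices into $\Vdone$ (the invariants never put vertices back into $\Vpassive$), so the packs $W_1$ and $W_2$ can never again play the role of a ``$W_i$'' slot in a future execution of Step \ref{step:degree-template}. In particular, each pack $W\notin\uppacks$ can participate as one of the $W_i$ in at most one meaningful execution of the step.

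Now I would bound the total number of packs. Since $|I|\le 2k$ and every pack is either a $1$-pack $V_a$ or a $2$-pack $V_{a,b}$ with $a,b\in I$, the total number of packs is at most $|I| + \binom{|I|}{2} = O(k^2)$. Each meaningful execution of Step \ref{step:degree-template} uses up two distinct packs outside $\uppacks$ (namely $W_1$ and $W_2$), so the number of such executions is at most $O(k^2)/2 = O(k^2)$.

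Finally, for the branching count, I would note that each execution of Step \ref{step:degree-template} branches into exactly two subcases, corresponding to the two choices of whether the solution vertex in $X$ lies in $K_1$ or in $K_2$. Combining this with the $O(k^2)$ bound on the number of executions yields $2^{O(k^2)}$ total subcases, as claimed. The argument is essentially a token/pigeonhole argument, with no real obstacle beyond correctly observing that Lemma \ref{lem:after-degree-template} makes each pack outside $\uppacks$ ``single-use'' in the role of $W_i$.
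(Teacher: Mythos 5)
Your proof is correct and takes essentially the same approach as the paper: both argue via Lemma \ref{lem:after-degree-template} that each execution empties $W_1\cap\Vpassive$ and $W_2\cap\Vpassive$, so each pack outside $\uppacks$ can be used at most once, and then bound the number of packs by $O(k^2)$. Your additional remarks (that vertices never re-enter $\Vpassive$, and that each execution contributes a factor of two) are correct elaborations of the same argument.
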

\begin{proof}
  If Step \ref{step:degree-template} is executed on packs $X$, $W_1$ and $W_2$, then
  $W_1 \cap \Vpassive$ and $W_2 \cap \Vpassive$ become empty. Thus
  each pack not in $\uppacks$ can be touched by Step \ref{step:degree-template}
  at most once. As there are $O(k^2)$ packs, the bound follows.
\end{proof}

We finish the algorithm with the following reasoning.
\begin{lemma}\label{lem:degree-execute}
  Assume in the auxiliary CSP instance there is a variable $x^X$
  such that there are at least three other variables $Y$ bounded
  with $X$ by a constraint
  (i.e., the variable $x^X$ has at least $3$ neighbours in the multigraph associated with
  the auxiliary CSP instance).
  Then there exists packs $W_1$ and $W_2$ such that the triple $(X,W_1,W_2)$ satisfy conditions
  for Lemma \ref{lem:two-cliques}, i.e., it is eligible for the reduction
  in Step \ref{step:degree-template}.
\end{lemma}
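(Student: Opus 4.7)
The CSP-neighbours of $x^X$ split into \emph{independence} neighbours (packs of $\uppacks$ sharing a leg with $X$) and \emph{dominating} neighbours (packs that appear as the second dominator of some $W\notin\uppacks$ with $W\cap\Vpassive\neq\emptyset$, via Lemma~\ref{lem:two-degree}). Step~\ref{step-stupid-uppacks} immediately caps the number of independence neighbours at one per leg of $X$. The key additional observation is that Step~\ref{step:common-leg} forbids \emph{any} dominating constraint raised through a leg $\ell$ on which $X$ is not the unique pack of $\uppacks$: if a second pack of $\uppacks$ shares leg $\ell$, then every pack outside $\uppacks$ having leg $\ell$ was moved to $\Vdone$ and therefore carries no passive vertex.

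Given that $x^X$ has at least three CSP-neighbours, I would run a short leg-by-leg count. If $X=V_{a,b}$ is a $2$-pack, having independence neighbours on \emph{both} legs would both cap the CSP-degree at $2$ and forbid any dominating constraint, contradicting the hypothesis; hence at least one leg, say $a$, is ``free'', and pigeonhole then forces two distinct packs $W_1,W_2\notin\uppacks$ with leg $a$ to raise dominating constraints. If $X=V_a$ is a $1$-pack, Lemmas~\ref{lem:$1$-packs-final} and~\ref{lem:tone-tone} pin $V_a\cap\Vactive$ inside a single set $\Tone{a}{b}$ or a single cluster of $\Ttwo{a}$, and together with Lemma~\ref{lem:zeppelin-knows-locally} this implies that any dominating constraint raised through a $1$-pack $V_c$ ($c\neq a$) would require an active vertex of $V_a$ to have a neighbour in $V_c\cap\Tone{c}{a}$, which is excluded by the disjointness of $\Ttwo{a}$ from $\Tone{a}{\cdot}$ and the fact that packs in $\uppacks$ with leg $c$ would have been blocked by Step~\ref{step:common-leg} otherwise; hence every dominating constraint through $X$ is raised through a $2$-pack $V_{a,c}\notin\uppacks$, and the same counting yields the two packs $W_1,W_2$.

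Having located $W_1=V_{a,c_1}$ and $W_2=V_{a,c_2}$ sharing leg $a$ with $X$, I would verify the witness condition of Lemma~\ref{lem:two-cliques} pack by pack. For every $2$-pack $A\in\{X,W_1,W_2\}$ with the other leg $b_A$, the vertex $b_A\in I$ itself is a witness for every $v\in A$: $b_A$ is adjacent to all of $A$, and by Lemma~\ref{lem:zeppelin-knows-locally} it sees no vertex of the other two packs (which carry legs in $\{a,c_1,c_2,b\}\setminus\{b_A\}$, none of them $b_A$). For a $1$-pack $A=V_a$ (which arises when $A=X$ or exceptionally when one of the $W_i$ equals $V_a$), the witness $n_v$ is chosen from the $\Tone$ or $\Ttwo$ structure of $V_a$ guaranteed by Lemma~\ref{lem:$1$-packs-final}: a neighbour of $v$ living in $V_b$ (if $V_a\cap\Vactive\subseteq\Tone{a}{b}$) or inside the cluster of $\Ttwo{a}$ fits, as long as the indices $c_1,c_2$ avoid $b$ respectively the $1$-packs meeting the cluster. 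This avoidance can be enforced by noting that the ``dangerous'' choices of $W_i$ (namely $V_{a,b}$, or the 2-packs meeting the cluster) collapse onto CSP neighbours already accounted for through $V_b$ (respectively through Step~\ref{step:ttwo3}), so the at-least-two count still survives after discarding them.

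The main obstacle is precisely this last refinement: in the $1$-pack case the cheap ``other leg'' witness is unavailable, and one must coordinate the choice of $W_1,W_2$ with the $\Tone$/$\Ttwo$ neighbourhood structure of $V_a$ to ensure that the chosen witnesses $n_v$ are not adjacent to $W_1$ or $W_2$. Once the witness condition is checked, Lemma~\ref{lem:two-cliques} applies to $(X,W_1,W_2)$, which is therefore eligible for the reduction in Step~\ref{step:degree-template}.
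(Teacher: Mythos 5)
Your plan follows the paper's proof essentially step for step: the same dichotomy between independence and dominating neighbours of $x^X$, the same use of Step~\ref{step-stupid-uppacks} and Step~\ref{step:common-leg} to cap independence constraints and kill dominating constraints on shared legs, the same pigeonhole argument extracting two packs $W_1,W_2$ on a common ``free'' leg, and the same witnesses for Lemma~\ref{lem:two-cliques} (the other leg of a $2$-pack, respectively a $\Tone{}{}$/cluster neighbour for a $1$-pack) together with the same distinct-second-dominator bookkeeping that keeps those witnesses away from $W_1$ and $W_2$. The one loose spot is your intermediate claim that a $1$-pack $X=V_a$ receives dominating constraints only through $2$-packs $V_{a,c}$ --- when $V_a\cap\Vactive\subseteq\Tone{a}{b}$ the $1$-pack $V_b$ can raise one, exactly as in the paper's Case~2.2.1 --- but your own ``discard the dangerous choices, the count survives'' refinement is precisely how the paper handles that case, so the argument is not affected.
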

\begin{proof}
  We consider several subcases. In the reasoning below, we often look at various packs
  $W \notin \uppacks$, such that $W \cap \Vpassive \neq \emptyset$ and
  $X \cap \Vactive$ can dominate $W$, i.e., $W$ gives a dominating constraint that
  involve $X$. By {\em{the second dominator}} for $W$ we mean the second pack $X' \in \uppacks$
  asserted by Lemma \ref{lem:two-degree}.

  \noindent \textbf{Case 1.} $X$ is a $2$-pack, $X = V_{a,b}$. Then
  $X$ can dominate only packs with leg $a$ or $b$ (Lemma \ref{lem:zeppelin-knows-locally})
  and can be connected by independence constraints to other packs with leg $a$ or $b$.
  Recall that by Step \ref{step-stupid-uppacks} there is at most one independence constraint
  per leg of $X$.

  \noindent \textbf{Case 1.1.} $X$ is connected by independence
  constraints to two other packs $X_a$ and $X_b$, where $X_a$
  has leg $a$, and $X_b$ has leg $b$.  By Step
  \ref{step:common-leg}, all packs not in $\uppacks$ with leg
  $a$ or $b$ were moved to $\Vdone$, thus these two independence
  constraints are the only constraints that involve $X$.

  \noindent \textbf{Case 1.2.} $X$ is connected by independence constraints to one pack $X_a$
  that shares leg $a$ with $X$. By Step \ref{step:common-leg}, all packs not in $\uppacks$
  with leg $a$ were moved to $\Vdone$.
  By the assumptions of the lemma, there are at least two packs $W_1$ and $W_2$ 
  that have leg $b$, are not in $\uppacks$ and $W_1 \cap \Vpassive \neq \emptyset$
  and $W_2 \cap \Vpassive \neq \emptyset$, i.e., $W_1$ and $W_2$ induce dominating constraints.
  Moreover, we can assume that the second dominators of $W_1$ and $W_2$ are different
  and different than $X_a$, as $X$ has at least three neighbours in the multigraph associated
  with the auxiliary CSP instance.

  \noindent \textbf{Case 1.2.1.} Both $W_1$ and $W_2$ are $2$-packs, $W_1 = V_{b,c}$, $W_2 = V_{b,d}$.
  Note that $a \neq c \neq d \neq a$. Thus, $X$, $W_1$ and $W_2$ satisfy conditions for Step
  \ref{step:degree-template}, where $a$ is the private neighbour of all the vertices in $X$,
  $c$ is the private neighbour for $W_1$ and $d$ for $W_2$.

  \noindent \textbf{Case 1.2.2.} $W_1$ is a $1$-pack, $W_1=V_b$ and $W_2$ is a $2$-pack, $W_2 = V_{b,d}$.
  Recall Lemma \ref{lem:$1$-packs-final}: $W_1 \cap \Vpassive \subset \Tone{b}{c}$ for
  some $1$-pack $V_c \in \uppacks$. In other words, the $1$-pack $V_c$ is the second dominator
  for $W_1$.
  As the second dominator of $W_1$ is different than $X_a$, $X_a \neq V_c$ and $c \neq a$.
  Note that there exists at most one pack $X_d \in \uppacks$ with leg $d$, as otherwise
  $W_2=V_{b,d}$ would be moved to $\Vdone$ by Step \ref{step:common-leg}.
  Moreover, $X_d$ is the second dominator for $W_2$.
  We infer that, as the second dominators for $W_1$ and $W_2$ are different,
  $X_d \neq V_c$ and $c \neq d$. Obviously $d \neq a$. Thus, by Lemma \ref{lem:zeppelin-knows-locally}, $V_c$ has no neighbours in $X$ nor $W_2$ and
  the triple $(X,W_1,W_2)$ satisfy the condition for Step
  \ref{step:degree-template}: the private neighbour for vertices in $X$ is $a$,
  for $W_2$ is $d$, and each vertex
  in $W_1 \cap \Vpassive \subset \Tone{b}{c}$ has a neighbour in $V_c$.

  \noindent \textbf{Case 1.3.} There are no independence constraints involving $X$, i.e.,
  $X$ is an alone $2$-pack in $\uppacks$. By the assumptions of the lemma,
  for at least one leg of $X$ (say $b$) we have at least two packs $W_1$ and $W_2$ sharing leg $b$
  with $W_1 \cap \Vpassive \neq \emptyset$, $W_2 \cap \Vpassive \neq \emptyset$.

  \noindent \textbf{Case 1.3.1} Both $W_1$ and $W_2$ are $2$-packs, $W_1 = V_{b,c}$, $W_2=V_{b,d}$.
  As $a, c, d$ are pairwise different, $X$, $W_1$ and $W_2$ satisfy conditions
  for Step \ref{step:degree-template} similarly as in Case 1.2.1.

  \noindent \textbf{Case 1.3.2} $W_1=V_b$ is a $1$-pack and $W_2=V_{b,d}$ is a $2$-pack.
  Similarly as in Case 1.2.2, $W_1 \cap \Vpassive \subset \Tone{b}{c}$ and $a,c,d$
  are pairwise different ($V_a \notin \uppacks$ as $X$ is alone in $\uppacks$).
  Thus $X$, $W_1$ and $W_2$ satisfy conditions for Step \ref{step:degree-template}.

  \noindent \textbf{Case 2.} $X=V_a$ is a $1$-pack. 

  \noindent \textbf{Case 2.1.} $X$ is connected by an independence
  constraint with a pack $X' = V_{a,b}$. Then, by Step
  \ref{step:common-leg}, all packs not in $\uppacks$ with leg
  $a$ were moved to $\Vdone$. Recall that by Lemma
  \ref{lem:clean-useless-clusters} the algorithm either guessed
  that the vertex in the solution from $V_a$ dominates some
  cluster, or is contained in $\Tone{a}{c}$ for some $1$-pack
  $V_c\notin \uppacks$. In the first case $V_a$ is not bounded
  by any dominating constraint. In the second case it is bounded
  by one constraint, induced by $V_c$. Thus, $X$ can be involved
  in at most two constraints.

  \noindent \textbf{Case 2.2.} $X$ is an alone $1$-pack in $\uppacks$, i.e. it does not share legs with other packs from $\uppacks$.
  Note that by Lemma \ref{lem:$1$-packs-final}, $X \cap \Vactive \subset \Tone{a}{b}$
  for some pack $V_b$ or $X \cap \Vactive \subset C$ for some cluster $C$.

  \noindent \textbf{Case 2.2.1.} $X \cap \Vactive \subset \Tone{a}{b}$.
   By the assumptions of the lemma, there exist two packs $W_1, W_2 \notin \uppacks$ with leg $a$
  that induce a dominating constraint involving $X$. Moreover, we can assume
  that the second dominator for $V_b$, $W_1$ and $W_2$ are pairwise different.
  Let $W_1 = V_{a,c}$, $W_2 = V_{a,d}$. Observe that $b\neq c\neq d\neq b$: clearly $c\neq d$ and $b$ must be different from both of them, because otherwise the second dominator of $V_b$ would be equal to the second dominator of $W_1$ or $W_2$. Therefore, by Lemma \ref{lem:zeppelin-knows-locally}, $V_b$ do not have neighbours in $W_1$ nor $W_2$. Thus $X$, $W_1$ and $W_2$ satisfy the conditions in Step \ref{step:degree-template}: for $W_1$ and $W_2$ we take $c$ and $d$ as private neighbours, and each vertex in $X \cap \Vactive$
  has a neighbour in $V_b$.

  \noindent \textbf{Case 2.2.2.} $X \cap \Vactive \subset C$ for some cluster
  $C$. Assume that $C \cap V_b \neq \emptyset$ and $C \cap V_c \neq \emptyset$
  for some $1$-packs $V_b$ and $V_c$ (recall that a cluster has vertices in at least
  three $1$-packs). Assume in contrary, that the claim does not hold. Then there are at least three
  packs $W_1, W_2, W_3 \notin \uppacks$ with leg $a$ --- no other $1$-pack gives raise to a dominating constraint involving $X$ as $X$ was guessed to dominate cluster $C$. Let $W_i = V_{a,d_i}$ for $i=1,2,3$.
  As there are at least three such packs, we can number them so that $d_1 \neq b$ and $d_2 \neq b$.
  Then $V_b$ does not have neighbours in $W_1$ and $W_2$ and $X$, $W_1$ and $W_2$
  satisfy the conditions of Step \ref{step:degree-template}: for $W_i$ we take $d_i$ as an
  universal private neighbour and each vertex in $X \cap \Vactive$ has a neighbour in cluster $C$
  in $V_b$.
\end{proof}
\begin{corollary}\label{cor:finish}
  If Step \ref{step:degree-template} cannot be performed, the
  multigraph associated with the auxiliary CSP instance has
  maximum degree at most $2$ and it can be solved in polynomial
  time as in Lemma \ref{lem:solve-csp}.
\end{corollary}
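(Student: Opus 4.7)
The plan is to derive this corollary directly from the two preceding lemmas by contraposition. I would begin by arguing that the maximum-degree claim is the contrapositive of Lemma~\ref{lem:degree-execute}: suppose, for contradiction, that after Step~\ref{step:degree-template} is no longer applicable, some variable $x^X$ of the auxiliary CSP still has at least $3$ neighbours in the associated multigraph. Then Lemma~\ref{lem:degree-execute} hands us packs $W_1, W_2$ such that the triple $(X, W_1, W_2)$ satisfies the hypotheses of Lemma~\ref{lem:two-cliques}, which means the triple is eligible for Step~\ref{step:degree-template}. This directly contradicts the assumption that Step~\ref{step:degree-template} cannot be performed, so every vertex of the associated multigraph must have degree at most $2$.

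With the maximum-degree bound in hand, the second half of the corollary is immediate: the condition ``for each $x \in \cspvars$ there are at most $2$ other variables such that there exist constraints bounding $x$ and these variables'' in Lemma~\ref{lem:solve-csp} is precisely the statement that the associated multigraph has maximum degree at most $2$. Hence Lemma~\ref{lem:solve-csp} applies and the auxiliary CSP instance can be solved in polynomial time, as claimed.

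There is no real obstacle here, as both facts are already packaged into the two lemmas just before the corollary; the only care needed is to verify that ``degree in the associated multigraph'' and ``number of other variables sharing a constraint with $x^X$'' genuinely mean the same thing. This is a matter of translation: each constraint (whether an independence constraint or a dominating constraint) in the auxiliary CSP involves exactly two variables and is encoded as a (possibly multiple) edge in the multigraph, so the number of neighbours of $x^X$ in the multigraph equals the number of distinct variables that appear together with $x^X$ in some constraint. Given this correspondence, the corollary follows by stringing together Lemma~\ref{lem:degree-execute} and Lemma~\ref{lem:solve-csp}.
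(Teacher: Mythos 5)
Your proposal is correct and matches the paper's (implicit) argument exactly: the paper states the corollary without proof, intending precisely the contrapositive reading of Lemma~\ref{lem:degree-execute} followed by an application of Lemma~\ref{lem:solve-csp}. Your extra care in checking that the degree condition in the multigraph coincides with the hypothesis of Lemma~\ref{lem:solve-csp} (counting distinct constrained variables rather than parallel edges) is a worthwhile clarification but does not change the route.
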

The above corollary finishes the proof of Theorem \ref{thm:ds-claw-free-fpt}.

\subsection{Summary}\label{ss:summary}

We end this section by repeating the main ideas of the algorithm.
This subsection should not be read as an introduction to the algorithm, but rather ---
as the whole algorithm is at the same time rather complex and rather technical --- as
a tool to help the reader who followed the details to grasp the large picture.

There are two crucial steps we begin with.  The first is noting
that we can look for an \midsname{} instead of a \mdsname{}
(Proposition \ref{prop:mds-vs-mids}) --- or rather, look for a
\mdsname{} but only in the branches containing a \midsname.  The
second is noticing that we can begin with the largest independent
set, and assume that our solution is disjoint from it (otherwise
we branch on the intersection --- this is Step \ref{step:1} and
Step \ref{step-Idisjoint}). 
Note
that this trick could be done with any other set with size bounded
by $f(k)$ that can be found in FPT-time, the fact that this is the
maximal independent set is not used here.

After these two steps we can introduce packs, $1$-packs and $2$-packs. We assume the
reader who read through the whole proof is familiar with the terms by now. One
important reason this is going to be useful is that our solution will contain at most
one vertex from each pack (this is Lemma \ref{lem-packs-size-one}) --- thus, we
have in some sense localized the solution --- there are few packs (few meaning
$f(k)$, independent of $n$), so we will be able to branch over the set of packs.
We use this idea immediately in steps \ref{step-guess-uppacks} and \ref{step-stupid-uppacks} to localize the solution even further.

To get a general idea of what happens next it is good to think about the auxiliary CSP
now. The idea is that for each pack containing a vertex of the solution we have 
up to $n$ ways to choose this vertex. We think of this as of choosing a valuation for
the packs (the values being the particular vertices), and we try to see what
constraints are imposed by the fact we are looking for a \midsname.

We obtain two types of constraints --- independence and domination.
The independence constraints are always binary (that is, they always tie together only
two packs). There are, however, too many of them --- note that when looking for a 
\midsname we have an independence constraint between any two $1$-packs. Here we use
a technical trick --- we relax our assumptions, and instead of looking for a \midsname
we look for a {\em dominating candidate} (see Definition 
\ref{def-dominating-candidate}), which basically means we drop the independence
constraints between $1$-packs.

One may ask here --- why do we not drop all the independence constraints, if it is so
easy? The answer is that assuming that the solution vertices from two packs that
share a leg are independent helps us in proving domination (for instance in the
justification of Step \ref{step:common-leg}), while we will be able to control the
remaining independence constraints in Lemma \ref{lem:degree-execute}.

The situation is more involved with domination constraints. As each vertex of the
graph has to be dominated, we have $n$ domination constraints. Moreover, 
{\em a priori} a vertex can be dominated from any of the packs --- thus the
constraints are not even binary to begin with. Thus, to even define the CSP graph,
we need to deal with this problem.

To deal with the domination constraints we introduce the partition of $V$ into the
sets $\Vactive$, $\Vpassive$ and $\Vdone$. Each vertex moved to $\Vdone$ means a
domination constraint removed, each vertex removed from $\Vactive$ is a possible value
of one variable removed, and --- at the same time --- the reduction of the set of 
possible dominating candidates (and thus the possibility of performing further
reductions).

The easy part are the vertices from $\uppacks$. After some preliminary steps we
were able to show (Lemma \ref{lem:end-24}) that they will be dominated by any
dominating candidate. Thus, they do not introduce any constraints (or, to look at it
in a different way, after discarding some values of the variables that can be proved
to be unnecessary, the domination constraints imposed by these vertices are trivial).

The medium-easy part are the vertices from $2$-packs. A vertex of a $2$-pack that would
introduce a constraint on more that two variables is automatically dominated --- this
is stated in Lemma \ref{lem:two-degree}, but follows from the simple observations
around Lemmata \ref{lem:zeppelin-knows-locally} and \ref{lem:two-dom-nv}, used in the
justification of Step \ref{step:common-leg}.

The difficult part are the vertices in $1$-packs that will be dominated by other
$1$-packs. Here a whole classification needs to
be developed, to check what can each $1$-pack vertex dominate, culminating in Lemma
\ref{lem:$1$-packs-final}, which strongly localizes the vertices in $1$-packs. It helps
to understand what actually made the $1$-packs so problematic. It is mainly that while
we can pretty well control what vertices can dominate a vertex from a
$2$-pack (they have to come from a pack that shares a leg with the $2$-pack, and
after Step \ref{step:common-leg} only two of them are left), the $1$-packs can actually
be all connected to one another, and as each has only one leg, it is more difficult to
find claws in them. And the structure is indeed more complicated than in the case of
$2$-packs.

It turns out, however, that if a $1$-pack has edges into at least two other
$1$-packs, we have enough information to form claws easily, and force a strong structure
(this is the $\Ttwo{}$ case, Lemma \ref{lem:Ttwo}) --- the clusters. We analyze the
clusters to show that they do not dominate each other (Corollary \ref{cor:cluster-dom}),
and thus, in particular, there cannot be more than $k$ of them, so we will be able to
branch upon which pack dominates each cluster (Step \ref{step:ttwo1}).
On the other hand if there is only one $1$-pack
adjacent to the given one, we can branch over all possible cases (Step 
\ref{step:up-$1$-packs}).

After reducing all the constraints to be binary we are almost done.

Now we bound the degree of each vertex by $2$
, which turns out to be rather simple, although
somewhat tedious. Instead of repeating similar arguments over and
over again, we show a general framework (in Lemma
\ref{lem:two-cliques} and Step \ref{step:degree-template}), and
then apply it multiple times in Lemma \ref{lem:degree-execute}.

\section{Hardness in $t$-claw-free graphs}\label{s:hardness}
In this section we prove Theorem \ref{thm:ds-t-claw-free-whard},
  i.e., we show that the \ds{} problem is \wth{} on graph
classes characterized by the exclusion of the \(t\)-claw as an
induced subgraph, for any \(t\ge 4\). This implies that the
problem is unlikely to have FPT algorithms on these classes of
graphs~\cite{fvs2}. We prove the hardness result for the class of
\(4\)-claw-free graphs; note that this implies the result for all
\(t\ge 4\). To prove that \ds{} is \wth{} on this class, we
present a parameterized reduction from the \rbds{} problem, which
is known to be \wth{}~\cite{downey-fellows}. A direct reduction eluded us,
however, and so we make use of an intermediate, coloured version
of the problem:

\parnamedefn{\crbds{}}{A bipartite graph \(G=(R\uplus B, E)\),
  \(k\in\mathbb{N}\), and a colouring function\\
  \(c:R\to\{1,2,\ldots,k\}\)}{\(k\)}{Does there exist a set
  \(D\subseteq R\) of \(k\) distinctly coloured vertices such that
  \(D\) is a dominating set of \(B\)?}

We call such a dominating set \(D\) a \emph{colourful} red-blue
dominating set of \(G\). This coloured version turns out to be at
least as hard as the original problem:

\begin{lemma}\label{lem:crbds-hard}
  The \crbds{} problem is \wth{}.
\end{lemma}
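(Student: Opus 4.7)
The plan is to give a parameterized reduction from \rbds{} (which is \wth{} by \cite{downey-fellows}) to \crbds{}, by ``colouring by replication''. Given an instance $(G=(R\uplus B,E),k)$ of \rbds{}, I would construct an instance $(G'=(R'\uplus B',E'),k,c)$ of \crbds{} as follows: take $R' = R \times \{1,2,\ldots,k\}$ with colouring $c((v,i)) = i$, let $B' = B$, and put $(v,i)u \in E'$ whenever $vu \in E$. In words, $G'$ consists of $k$ disjoint ``colour-copies'' of the red side, each connected to $B$ in exactly the same way as $R$ was, where the copies are pairwise non-adjacent (the graph stays bipartite). The parameter is preserved, and the construction clearly runs in polynomial time.

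For the forward direction, suppose $D = \{v_1, v_2, \ldots, v_k\} \subseteq R$ is a red-blue dominating set of the original instance. (If the minimum dominating set has size less than $k$, pad $D$ arbitrarily to size $k$; this is possible provided $|R|\ge k$, and otherwise the \rbds{} instance can be solved in polynomial time by brute force, so we may assume $|R|\ge k$.) Then $D' = \{(v_1,1),(v_2,2),\ldots,(v_k,k)\}$ is colourful, and each $(v_i,i)$ has precisely the same $B$-neighbourhood as $v_i$, so $D'$ dominates $B$. Hence the constructed instance is a \YES-instance of \crbds{}.

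For the reverse direction, suppose $D' \subseteq R'$ is a colourful red-blue dominating set of $G'$ of size $k$. Define $\pi \colon R' \to R$ by $\pi((v,i)) = v$, and set $D = \pi(D')$. Then $|D|\le k$, and for every $u \in B$ there is some $(v,i) \in D'$ with $(v,i)u \in E'$, which by construction means $v u \in E$ and $v \in D$. Thus $D$ is a red-blue dominating set of $G$ of size at most $k$, so the original instance is a \YES-instance of \rbds{}.

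These two directions together give a polynomial-time parameterized reduction from \rbds{} to \crbds{} preserving the parameter $k$, establishing that \crbds{} is \wth{}. There is no real obstacle here; the only subtlety is the corner case $|R|<k$, which I would dispose of by observing that the \rbds{} instance is then trivially solvable in polynomial time and can be mapped to a fixed \YES{} or \NO{} instance of \crbds{} accordingly.
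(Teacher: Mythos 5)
Your proposal is correct and is essentially identical to the paper's own proof: the same ``$k$ disjoint colour-copies of $R$'' construction, the same padding argument for the case $|R|<k$, and the same two directions of the equivalence (picking the $i$-th copy of $v_i$ forward, projecting copies back to $R$ in reverse). No issues.
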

\begin{proof}
  We reduce from the \rbds{} problem which is known to be
  \wth{}~\cite{downey-fellows}, and which is defined as follows:

  \parnamedefn{\rbds{}}{A bipartite graph \(G=(R\uplus B, E)\),
    \(k\in\mathbb{N}\)}{\(k\)}{Does there exist a set \(D\subseteq
    R\) of size \(k\) such that \(D\) is a dominating set of
    \(B\)?}

  Such a set \(D\) is called a \emph{red-blue dominating set} of
  \(G\). Observe that the above problem is equivalent to asking if
  there is a red-blue dominating set of size \emph{at most} \(k\),
  which is how this problem is usually phrased. If \(|R|<k\), then
  the problem instance is easily solved (say \YES{} if and only if
  there are no isolated vertices in \(B\)), so we can assume
  without loss of generality that \(|R|\ge k\). If there is a
  red-blue dominating set of size at most \(k\), we can always pad
  it up with enough vertices to obtain a red-blue dominating set
  of size exactly \(k\), and the converse is trivial.

  Given an instance \( (G=(R\uplus B, E),k ) \) of \rbds{}, we
  create a new graph \(G'\) whose vertex set consists of the set
  \(B\) and \(k\) copies \(R_{1},R_{2},\ldots,R_{k}\) of the set
  \(R\). For each vertex \(v\in R\), we make the neighbourhood of
  each copy of \(v\) in \(G'\) identical to the neighbourhood of
  \(v\) in \(G\); the edge set \(E'\) of \(G'\) can be thought of
  as \(k\) disjoint copies of the edge set of \(G\). We set
  \(R'=R_{1}\cup R_{2}\cup\cdots\cup R_{k}\). For each \(1\le i\le
  k\), the colouring function \(c\) maps all vertices in \(R_{i}\)
  to the colour \(i\). This completes the construction; the reduced
  instance is \((G'=(R'\cup B, E'),k,c)\). See
  Figure~\ref{fig:crbds-hardness}.
  
  \begin{figure}[t]
    \centering
    \includegraphics[clip,scale=0.4]{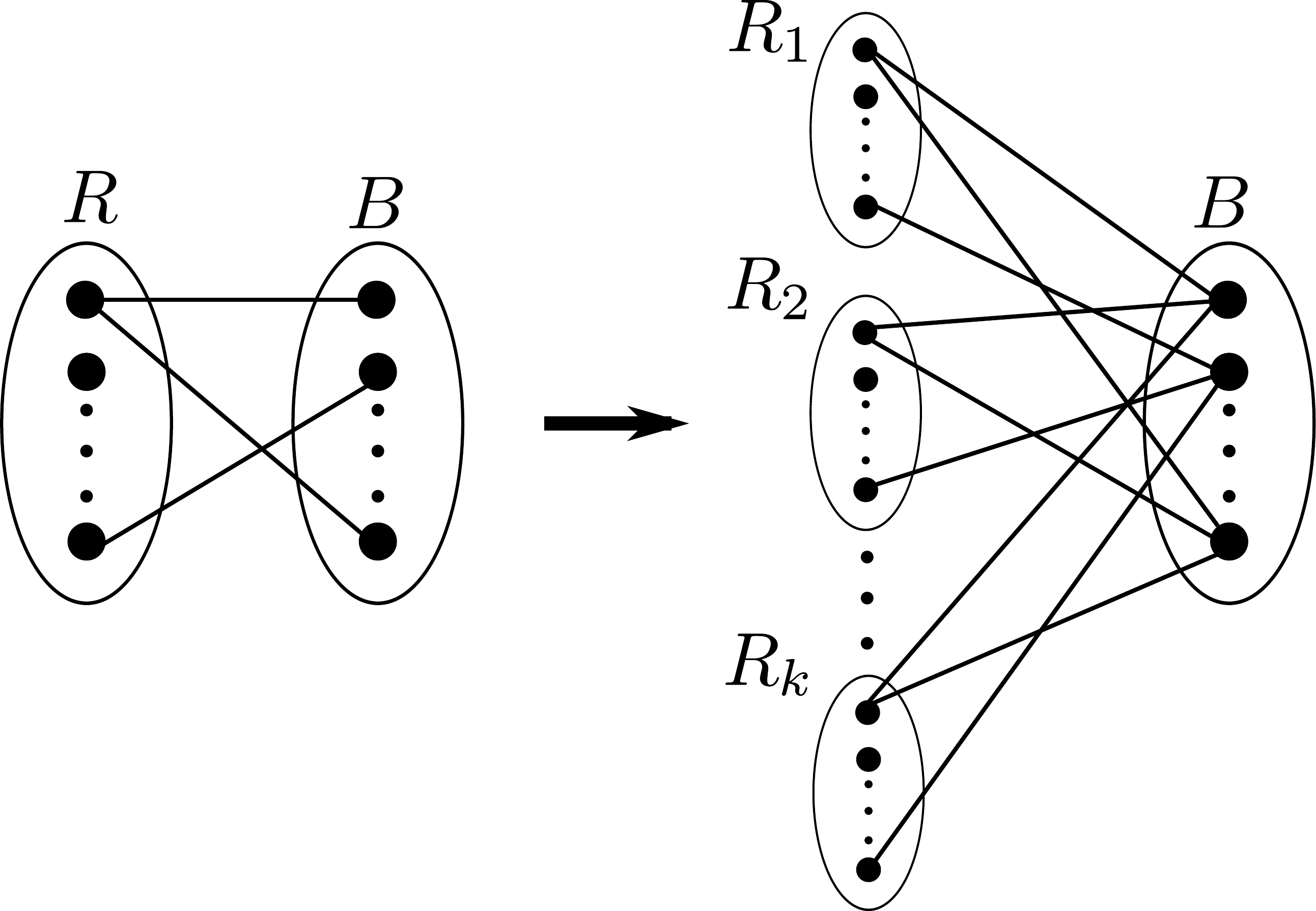}
    \caption{\label{fig:crbds-hardness}Reduction from \rbds{} to
      \crbds{}. Each set \(R_{i}\) is a copy of \(R\), and its
      vertices have a distinct colour.}
  \end{figure}

  If \((G,k)\) is a \YES{} instance of \rbds{}, then let
  \(D=\{v_{1},v_{2},\ldots,v_{k}\}\subseteq R\) be a dominating
  set of \(B\) of size \(k\). For \(1\le i,j\le k\), let
  \(v_{i}^{j}\) denote the copy of \(v_{i}\) in the set \(R_{j}\)
  in \(G'\). It is not difficult to verify that the set
  \(\{v_{i}^{i}\mid 1\le i\le k\}\) is a colourful red-blue
  dominating set of \(G'\) of size \(k\).

  Conversely, let \((G',k)\) be a \YES{} instance of
  \crbds{}. Then there exists a set of vertices
  \(D=\{v_{1},v_{2},\ldots,v_{k}; v_{i}\in R_{i}\}\) which
  dominates all vertices in \(B\), in \(G'\). Let \(D'=\{v\in
  R\mid D\text{ contains a copy of }v\}\). Then \(D'\) contains at
  most \(k\) vertices, and it is straightforward to verify that
  \(D'\) dominates \(B\) in \(G\). 
\end{proof}
We are now ready to show the main result of this section:
\begin{lemma}\label{lem:c4-free-ds-hard}
  The \ds{} problem restricted to \(4\)-claw-free graphs is
  \wth{}.
\end{lemma}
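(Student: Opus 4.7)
The plan is to give a parameterized reduction from \crbds{}, which is W[2]-hard by Lemma~\ref{lem:crbds-hard}. Given an instance $(G = (R \uplus B, E), k, c)$ with $R$ partitioned into color classes $R_1, \ldots, R_k$, I would construct a $4$-claw-free graph $H$ as follows. Start with $R$, turn each color class $R_i$ into a clique, and add a fresh selector vertex $r_i$ adjacent to all of $R_i$. For each blue vertex $b \in B$, attach a small gadget that encodes the demand that $b$ is adjacent in $G$ to a picked vertex, with the shape of the gadget tuned to avoid creating any induced $K_{1,4}$. The target dominating set size is $k' = k + $ (the total extra budget consumed by the gadgets), still an FPT-reduction provided each gadget has size bounded polynomially in $k$ or in the local degree.

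The intended correspondence runs as follows. Any dominating set of size $k'$ in $H$ must include at least one vertex of $R_i \cup \{r_i\}$ for each $i$ (to dominate $r_i$); with $k'$ tuned so that the budget is tight after the gadget picks, exactly one vertex per such set is chosen, and any "wasted" $r_i$-pick can be replaced by an arbitrary vertex of $R_i$ (nonempty WLOG) without loss. Thus the solution includes exactly one $v_i \in R_i$ per color class. Since $R_i$ is a clique and $v_i$ is adjacent to $r_i$, all of $R$ and all selectors are dominated, and the demand at each $b$ translates --- through the blue gadget --- to the existence of some $i$ with $v_i \in N_G(b)$. This is precisely the \crbds{} condition, and the converse direction is straightforward.

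The main obstacle is the design of the blue gadgets so that $H$ remains $4$-claw-free. In the naive construction, where each $b$ is attached directly to $N_G(b)$, induced $K_{1,4}$'s arise (a) at a red vertex $v \in R_i$ with at least three blue neighbors --- three pairwise non-adjacent blues plus any vertex from the clique $R_i \cup \{r_i\} \setminus \{v\}$; and (b) at a blue vertex $b$ whose $R$-neighborhood spans at least four color classes --- one representative per class forms a pairwise non-adjacent set. To defuse both, I would expand each $b$ into a path-like auxiliary structure in which each auxiliary vertex connects to vertices of at most two color classes (and hence has at most three pairwise non-adjacent neighbors overall), together with a bounded number of pendants that force the domination budget to be spent in the intended way. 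The bulk of the proof is then a case analysis checking that no vertex of $H$ --- red, selector, blue, or auxiliary --- is the center of an induced $K_{1,4}$, and verifying that dominating each gadget is equivalent to some color pick falling in $N_G(b)$. This verification is mechanical but tedious, and it is the technical crux of the argument.
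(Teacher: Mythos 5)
Your overall strategy---reduce from \crbds{}, use one selector vertex per colour class to force exactly one pick from each $R_i$, and argue via budget tightness---matches the paper's. But there is a genuine gap: the blue-side gadget, which you yourself identify as ``the technical crux,'' is never constructed, and the sketch you give does not actually work. You correctly list the two sources of induced $K_{1,4}$'s: (a) a red vertex $v\in R_i$ with three pairwise non-adjacent blue-side neighbours plus a vertex of the clique $R_i\cup\{r_i\}$, and (b) a blue vertex whose red neighbourhood meets four colour classes. Your proposed fix---expanding each $b$ into a path-like structure whose vertices each touch at most two colour classes---only attacks (b). It does nothing for (a): if $v$ is adjacent in $G$ to three blue vertices $b_1,b_2,b_3$, then for the domination semantics to survive, $v$ must be adjacent in $H$ to at least one auxiliary vertex in each of the three gadgets; these three vertices live in distinct gadgets (hence are presumably non-adjacent to one another and to $r_i$), so $v$ together with them and $r_i$ is still an induced $K_{1,4}$. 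Since a red vertex can have arbitrarily many blue neighbours, no amount of subdividing individual blue gadgets removes this claw; you would have to make vertices of \emph{different} blue gadgets adjacent to each other. Moreover, gadgets with pendants consume budget, which complicates the tightness argument you rely on (you must rule out gadget-internal picks doing unintended domination), and none of this is worked out.

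The paper's construction sidesteps all of this with one observation you are missing: make \emph{all} of $B$ a single clique and \emph{all} of $R$ a single clique (not just each colour class), keep the original $R$--$B$ edges, and add the $k$ selectors. Then every vertex's neighbourhood is a union of at most three cliques ($R$ is a clique, $B$ is a clique, and a selector's neighbourhood is one colour class), so the independence number of every neighbourhood is at most $3$ and the graph is $4$-claw-free with no gadgets at all. Making $B$ a clique is harmless precisely because the $k$ disjoint sets $R_i\cup\{v_i\}$ each need a dominator and the budget is exactly $k$, so no vertex of $B$ can ever be in the solution; making $R$ a clique is harmless because the selectors still force one pick per colour class, and it gives domination of $R$ for free. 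Until you either adopt this idea or exhibit a concrete gadget that handles case (a), the reduction is not established.
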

\begin{proof}
  We reduce from the \crbds{} problem, which we show to be \wth{}
  in Lemma~\ref{lem:crbds-hard}. Given an instance \((G=(R\uplus
  B,E),k,c)\) of \crbds{}, we construct an instance of \ds{} on
  \(4\)-claw-free graphs as follows. We add all possible edges
  among the vertices in \(B\) so that \(B\) induces a clique. In
  the same way, we make \(R\) a clique, and for each colour class
  (set of vertices for which \(c\) assigns the same colour)
  \(R_{i};1\le i\le k\) of \(R\), we add a new vertex \(v_{i}\)
  and make \(v_{i}\) adjacent to all the vertices in \(R_{i}\). We
  remove all colours from the vertices, and this completes the
  construction. See Figure~\ref{fig:ds_4claw-free_hardness}.

  \begin{figure}[t]
    \centering
    \includegraphics[clip,scale=0.4]{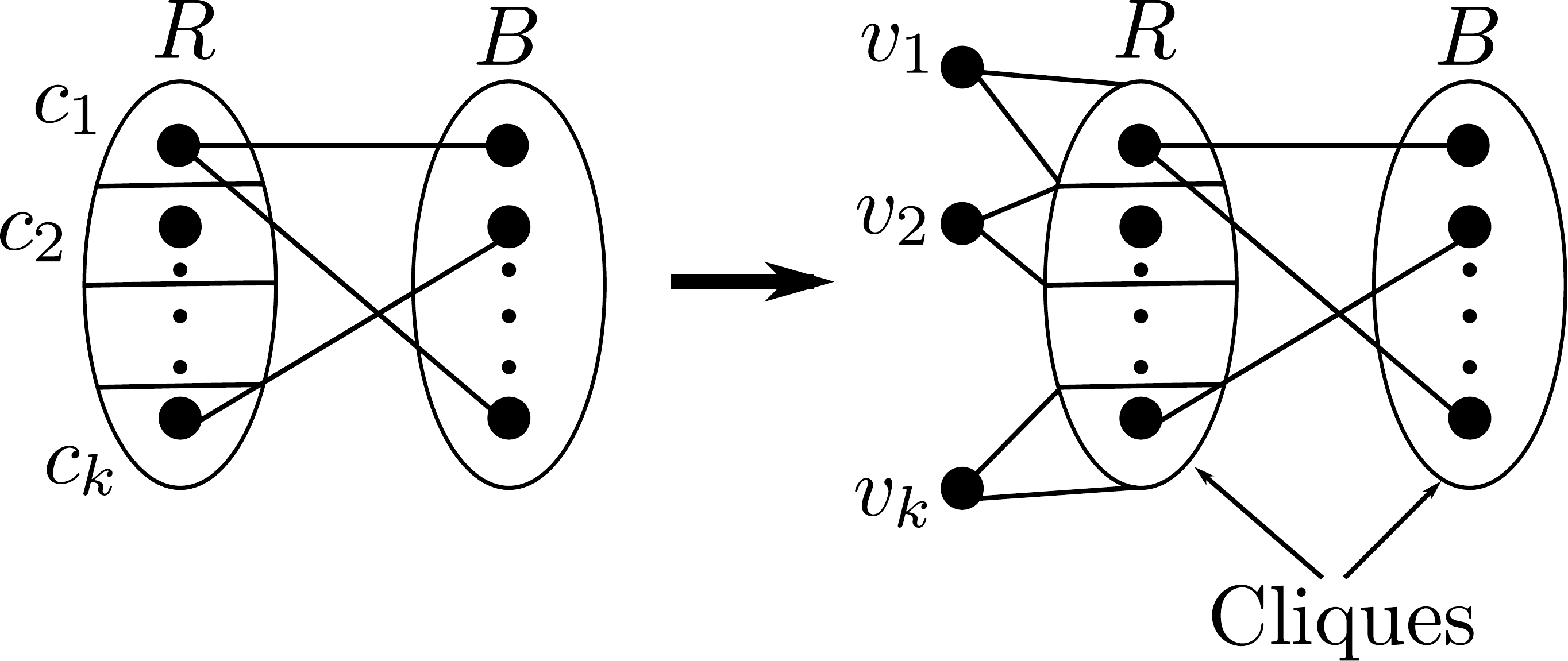}
    \caption{\label{fig:ds_4claw-free_hardness}Reduction from \crbds{} to
      \ds{} on $4$-claw-free graphs. The sets \(R,B\) are both
      made cliques, and a new vertex is made global to each colour
      class.}
  \end{figure}

  Let \(G'\) be the graph obtained. It is easy to verify that the
  neighbourhood of each vertex in \(G'\) is a union of at most
  three vertex-disjoint cliques, and so \(G'\) is a
  \(4\)-claw-free graph; \((G',k)\) is the reduced instance of
  \ds{} on \(4\)-claw-free graphs.

  If \((G,k,c)\) is a \YES{} instance of \crbds{}, then let
  \(D=\{u_{1},u_{2},\ldots,\newline u_{k}; u_{i}\in R_{i}\}\) be a
  colourful dominating set of \(B\) of size \(k\). Since we did
  not delete any edge in constructing \(G'\) from \(G\), the set
  \(D\) dominates all of \(B\) in \(G'\). Since we made the set
  \(R\) a clique in \(G'\), the set \(D\) dominates all of \(R\)
  in \(G'\). Since each new vertex that we added to \(G\) is
  adjacent to every vertex in some colour class, the set \(D\)
  dominates all the newly added vertices in \(G'\) as well. Thus
  \(D\) is a dominating set of \(G'\), of size \(k\).

  Conversely, if \((G',k)\) is a \YES{} instance of \ds{}, then
  let \(D=\{u_{1},u_{2},\ldots,u_{k}\}\) be a dominating set of
  \(G'\) of size \(k\) in \(G'\). Since the neighbourhood in
  \(G'\) of each new vertex \(v_{i}\) is the set \(R_{i}\),
  \(D\cap(R_{i}\cup\{v_{i}\})\ne\emptyset\). Since the sets
  \(R_{i}\cup\{v_{i}\};1\le i\le k\) are pairwise vertex-disjoint,
  \(D\) contains exactly one vertex from each set
  \(R_{i}\cup\{v_{i}\}\), and no other vertex. Suppose
  \(D\cap(R_{i}\cup\{v_{i}\})=v_{i}\) for some \(i\). Then we can
  replace \(v_{i}\) with an arbitrary vertex \(x\in R_{i}\), in
  \(D\), and this \(D\) would still be a dominating set of
  \(G'\). This is because the neighbourhood \(R_{i}\) of \(v_{i}\)
  is a clique, and so \(x\in R_{i}\) dominates all of
  \(R_{i}\). Thus we can assume without loss of generality that
  \(D\) contains no vertex \(v_{i}\). Thus \(D\subseteq R\) is a
  set of \(k\) vertices, one from each set \(R_{i}\), that
  dominates all vertices in \(G'\). Since we did not modify any
  adjacency between the sets \(R\) and \(B\) to construct \(G'\)
  from \(G\), it follows that in \(G\) the set \(D\) dominates all
  vertices in \(B\). Hence \(D\) is a colourful red-blue
  dominating set of \(G\) of size k.
\end{proof}

In the \name{Connected Dominating Set} (resp. \name{Dominating
  Clique}) problem, the input consists of a graph \(G\) and
\(k\in\mathbb{N}\), the parameter is \(k\), and the question is
whether \(G\) has a dominating set \(D\) of size at most \(k\)
such that the subgraph of \(G\) induced by the set \(D\) is
connected (resp. a clique). Observe that the reduction in
Lemma~\ref{lem:c4-free-ds-hard} ensures that if the reduced graph
\(G'\) has a dominating set of size at most \(k\), then it has a
dominating set \(D'\) of size at most (in fact, exactly) \(k\)
which induces a clique in \(G'\). Thus the above reduction also
shows that

\begin{corollary}\label{cor:cds-domclique-4claw-free-hard}
  The \name{Connected Dominating Set} problem and the
  \name{Dominating Clique} problem are \wth{} when restricted to
  \(4\)-claw-free graphs.
\end{corollary}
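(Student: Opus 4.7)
The plan is to reuse verbatim the reduction constructed in the proof of Lemma~\ref{lem:c4-free-ds-hard}. We will show that the same reduction, from \crbds{} to \ds{} on $4$-claw-free graphs, in fact also reduces \crbds{} to \name{Dominating Clique} and to \name{Connected Dominating Set} on $4$-claw-free graphs. Since every clique is connected, it is enough to establish the equivalence for \name{Dominating Clique}; the result for \name{Connected Dominating Set} then follows by a sandwich argument, because any dominating clique is a connected dominating set, and any connected dominating set is in particular a dominating set.

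First I would verify the forward direction. Given a \YES{} instance $(G=(R\uplus B,E),k,c)$ of \crbds{} with colourful red-blue dominating set $D=\{u_{1},\ldots,u_{k}\}$ with $u_i\in R_i$, the proof of Lemma~\ref{lem:c4-free-ds-hard} already shows that $D$ is a dominating set of $G'$ of size $k$. The key extra observation is that in the construction the set $R$ is turned into a clique in $G'$; hence $D\subseteq R$ induces a clique in $G'$, so $D$ is a dominating clique (and therefore also a connected dominating set) of size $k$.

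For the backward direction the argument is essentially trivial: if $G'$ admits a dominating clique of size at most $k$, then in particular it admits a dominating set of size at most $k$, so by Lemma~\ref{lem:c4-free-ds-hard} the instance $(G,k,c)$ is a \YES{} instance of \crbds{}. The same argument works with ``dominating clique'' replaced by ``connected dominating set''. Thus the reduction of Lemma~\ref{lem:c4-free-ds-hard} is simultaneously a parameterized reduction from \crbds{} to each of \name{Dominating Clique} and \name{Connected Dominating Set} on $4$-claw-free graphs.

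The only step that could cause concern is making sure the ``hardest'' direction of an equivalence is not lost when we strengthen the problem from \ds{} to \name{Dominating Clique}: one must exhibit a dominating set in $G'$ that actually induces a clique. But this is guaranteed for free, because our constructed dominating set lies entirely inside $R$, and $R$ was explicitly made a clique in the construction. Combining this with the \wth ness of \crbds{} from Lemma~\ref{lem:crbds-hard} yields Corollary~\ref{cor:cds-domclique-4claw-free-hard}.
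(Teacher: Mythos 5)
Your proposal is correct and matches the paper's own argument: the paper likewise observes that the reduction of Lemma~\ref{lem:c4-free-ds-hard} already produces, in the forward direction, a dominating set of size exactly \(k\) contained in the clique \(R\) (hence a dominating clique, and a fortiori a connected dominating set), while the backward direction is immediate since any dominating clique or connected dominating set is in particular a dominating set. Your write-up merely spells out the sandwich argument that the paper leaves implicit.
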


\begin{remark}\label{rem:all-claw-free}
  Observe that if a graph \(G\) contains a \(t'\)-claw \(T'\) for
  any \(t'\in\mathbb{N}\), \(G\) also contains a \(t\)-claw \(T\)
  for each \(t\le t';t\in\mathbb{N}\). Indeed, each such \(T\)
  occurs in \(G\) as an induced subgraph of \(T'\). Taking the
  contrapositive, a \(t\)-claw-free graph is also \(t'\)-claw-free
  for all \(t'\ge t;t,t'\in\mathbb{N}\). It follows that the
  hardness results stated in Lemma~\ref{lem:c4-free-ds-hard} and
  Corollary~\ref{cor:cds-domclique-4claw-free-hard} extend to
  \(t\)-claw-free graphs for all \(t\ge 4\).
\end{remark}

\section{The \clique{} problem in claw-free
  graphs}\label{s:clique}
In this section we prove Theorem \ref{thm:clique-fpt},
i.e., we give an FPT algorithm for the \clique{}
problem in $t$-claw-free graphs.
  
  The (decision version of the) Maximum Clique problem takes as
input a graph \(G\) and a positive integer \(k\), and asks whether
\(G\) contains a clique (complete graph) on at least \(k\)
vertices as a subgraph. This is one of Karp's original list of 21
NP-complete problems~\cite{Karp1972}, and the standard
parameterized version \clique{}, defined below, is a fundamental
\(W[1]\)-complete problem~\cite{downey-fellows}. The \(W[1]\)-hardness of
\clique{} implies that the problem is unlikely to have FPT
algorithms~\cite{fvs2}.

The classical decision variant of this problem remains NP-hard on
claw-free graphs~\cite[Theorem 5.4]{FaudreeFlandrinRyjacek1997}.
In this section we show that, in contrast, the problem becomes
easier from the point of view of parameterized complexity when we
restrict the input to claw-free graphs.

\begin{lemma}\label{lem:clique-fpt}
  For any \(t\in\mathbb{N}\), the \clique{} problem is FPT on
  \(t\)-claw-free graphs, and can be solved in
  \( (k+t-2)^{(t-1)(k-1)} n^{O(1)} \) time.
\end{lemma}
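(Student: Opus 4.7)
The plan is to combine the $t$-claw-free structure with a classical Ramsey estimate. The key structural observation is that in a $t$-claw-free graph $G$, for every vertex $v$ the induced subgraph $G[N(v)]$ has independence number at most $t-1$; otherwise $v$ together with any $t$-element independent subset of $N(v)$ would induce a copy of $K_{1,t}$.

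The algorithm iterates over every vertex $v \in V(G)$, treating it as a candidate member of the sought $k$-clique, and then tries to find a $(k-1)$-clique inside $H := G[N(v)]$: combining it with $v$ yields a $k$-clique containing $v$, and every $k$-clique of $G$ contains some such $v$. I split into two cases based on the size of $H$. If $|V(H)| \ge R(k-1,t)$, Ramsey's theorem forces $H$ to contain either a $(k-1)$-clique or an independent set of size $t$; the latter is excluded by the structural observation, so a $(k-1)$-clique exists and the algorithm may safely output YES. Otherwise $|V(H)| < R(k-1,t)$, and I simply enumerate all $(k-1)$-subsets of $V(H)$ and test each one for being a clique; there are at most $\binom{|V(H)|}{k-1} \le R(k-1,t)^{k-1}$ candidates, each tested in polynomial time.

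The quantitative ingredient is the Erd\H{o}s--Szekeres bound $R(k-1,t) \le \binom{k+t-3}{k-2} \le (k+t-2)^{t-1}$, whence $R(k-1,t)^{k-1} \le (k+t-2)^{(t-1)(k-1)}$. Multiplying by the outer factor $n$ from the choice of $v$ and polynomial bookkeeping gives the claimed running time $(k+t-2)^{(t-1)(k-1)} n^{O(1)}$.

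I do not expect any serious obstacle: the argument is a single structural observation plus one invocation of Ramsey's theorem, the only quantitative step being the standard binomial estimate. If one wishes to actually exhibit a witness $k$-clique rather than merely decide, the Ramsey existence step can be replaced by the standard constructive Erd\H{o}s--Szekeres proof, which iteratively picks a vertex $u$ of the current graph and recurses either into $N(u)$ or into the complement of $N[u]$, whichever remains large enough to re-invoke Ramsey; the recursion depth is bounded by $(k-1)+(t-1)$, contributing only polynomial overhead.
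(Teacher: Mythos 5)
Your proposal is correct and follows essentially the same route as the paper: both use the bound that a $t$-claw-free graph has independence number less than $t$ in every neighbourhood, invoke Ramsey's theorem to answer \YES{} immediately when some neighbourhood is large, and otherwise brute-force over $(k-1)$-subsets of the (now bounded) neighbourhoods, with the same binomial estimate yielding the stated running time. The only cosmetic difference is that you apply Ramsey with parameters $(k-1,t)$ to find a $(k-1)$-clique in $N(v)$, while the paper uses $(t,k)$ to find a $k$-clique there directly; both are valid.
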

\begin{proof}
  We use Ramsey's theorem for graphs, which states that for any
  two positive integers \(i,c\), there exists a positive integer
  \(\mathcal{R}(i,c)\) such that any graph on at least
  \(\mathcal{R}(i,c)\) vertices contains either an independent set
  on \(i\) vertices or a clique on \(c\) vertices (or both) as an
  induced subgraph. Further, it is known~\cite{Jukna2001} that
  \(\mathcal{R}(i,c)\le {{i+c-2}\choose {c-1}}\). Setting
  \(i=t,c=k\), it follows that if a graph on at least
  \({{k+t-2}\choose{k-1}}={{k+t-2}\choose{t-1}}\le(k+t-2)^{t-1}\)
  vertices does not contain an independent set of size \(t\), then
  it must contain a clique on \(k\) vertices.

  Let \(G\) be a \(t\)-claw-free input graph for the \clique{}
  problem, and let \(v\) be any vertex in \(G\). Since \(G\) is
  \(t\)-claw-free, the neighbourhood of \(v\) contains no
  independent set of size \(t\). If \(v\) has degree at least
  \((k+t-2)^{t-1}\), it then follows from Ramsey's theorem that
  the neighbourhood of \(v\) contains a clique on \(k\)
  vertices. Hence, if any vertex in \(G\) has degree
  \((k+t-2)^{t-1}\) or more, our FPT algorithm returns \YES{};
  this check can clearly be done in polynomial time.

  Assume therefore that every vertex in the input graph has degree
  less than \((k+t-2)^{t-1}\). Our algorithm iterates over each
  vertex \(v\) of degree at least \(k-1\), and checks if its
  neighbourhood \(N(v)\) contains a clique of size
  \(k-1\). Observe that this procedure will find a \(k\)-clique in
  \(G\) if it exists.

  To check if \(N(v)\) contains a clique of size \(k-1\), the
  algorithm enumerates all \((k-1)\)-sized subsets of \(N(v)\) and
  checks whether any of these subsets induces a complete subgraph
  in \(G\). There are \({\vert N(v)\vert \choose
    {k-1}}\le{{(k+t-2)^{t-1}}\choose{k-1}}\le
  (k+t-2)^{(t-1)(k-1)}\) such subsets, and these can be enumerated
  in \(O((k+t-2)^{(t-1)(k-1)})\) time~\cite{Ehrlich1973}. For each
  subset, it is sufficient to check if all \({{k-1}\choose 2}\le
  k^{2}\) possible edges are present, which, given an adjacency
  matrix for \(G\), can be done in \(O(k^{2})\) time. Putting all
  these together, our algorithm solves the problem in
  \((k+t-2)^{(t-1)(k-1)} n^{O(1)} \) time.
\end{proof}

\section{Conclusions}\label{s:conclusions}
We derive an FPT algorithm for the \ds{} problem parameterized by
solution size, on graphs that exclude the claw \(K_{1,3}\) as an
induced subgraph. Our algorithm starts off using a maximum
independent set of the input graph, known to be computable in
polynomial time~\cite{minty:indset,sbihi:indset}. We show that it
is sufficient to look for an \emph{independent} dominating set of
the prescribed size. Our algorithm then uses the claw-freedom of
the input graph to implement reduction rules which narrow down the
possible ways in which a small dominating set could be present in
the graph. Once these rules have been exhaustively applied, we are
left with a graph and a set of constraints which must be satisfied
by every dominating set of small size, where the constraints are
highly structured in that they define an underlying graph of small
degree. We then use dynamic programming on this underlying
graph to retrieve the dominating set (or to find that no such
dominating set could exist). The algorithm uses $2^{O(k^2)}
n^{O(1)}$ time and polynomial space to check if a claw-free graph
on \(n\) vertices has a dominating set of size at most \(k\).

The most general class of graphs for which an FPT algorithm was
previously known for this parameterization of \ds{} is the class
of \(K_{i,j}\)-free graphs, which exclude, for some fixed
\(i,j\in\mathbb{N}\), the complete bipartite graph \(K_{i,j}\) as
a (not necessarily induced) subgraph~\cite{domset-philip}. To the
best of our knowledge, \emph{every} other class for which an FPT
algorithm was previously known for this parameterization of \ds{}
can be expressed as a subset of \(K_{i,j}\)-free graphs for
suitably chosen values of \(i\) and \(j\). If \(i=1\), then
\(K_{i,j}\)-free graphs are graphs of bounded degree, on which the
\ds{} problem is easily seen to be FPT. For the interesting case
when \(i,j\ge 2\), the class of claw-free graphs and any class of
\(K_{i,j}\)-free graphs are not comparable with respect to set
inclusion: a \(K_{i,j}\)-free graph can contain a claw, and a
claw-free graph can contain a \(K_{i,j}\) as a subgraph. In this
paper, we thus break new ground: we \emph{extend} the range of
graphs over which this parameterization of \ds{} is known
to be fixed-parameter tractable, beyond graph classes which can be
described as \(K_{i,j}\)-free.

In addition to this main result, we also show that the \ds{}
problem is \wth{} (and therefore unlikely to have FPT algorithms)
in \(t\)-claw-free graphs for any \(t\ge 4\), and that the
\clique{} problem is FPT in \(t\)-claw-free graphs for any
\(t\in\mathbb{N}\).

In the version of this paper which we submitted to
ArXiv~\cite{DBLP:journals/corr/abs-1011-6239}, we had stated:

``These results open up many new challenges. The most immediate
open question is to get a faster FPT algorithm with a more
reasonable running time; ideally, an algorithm that runs in
\(O^{\star}(c^{k})\) time for some small constant \(c\). Another
open problem, and perhaps of greater significance, is to find a
polynomial kernel for the problem in claw-free graphs, or to show
that no such kernel is likely to exist.''

Both these problems were later solved by Hermelin et
al.~\cite{HermelinMnichLeeuwenWoeginger2011}. Building on the
structural characterization for claw-free graphs developed
recently by Chudnovsky and
Seymour~\cite{chudnovsky:clawfree1,chudnovsky:clawfree2,chudnovsky:clawfree3,chudnovsky:clawfree4,chudnovsky:clawfree5,chudnovsky:clawfree6},
they derive an FPT algorithm for the \kDS problem on claw-free
graphs which runs in \(9^{k}n^{O(1)}\) time. They also show that
the problem has a polynomial kernel on \(O(k^{4})\) vertices on
claw-free graphs.

As mentioned above, \(K_{i,j}\)-free and claw-free graphs are two
largest classes for which we now have FPT algorithms for
\ds{}. For what other classes of graphs, not contained in these
two classes, is the problem FPT?  Finally, is there an even larger
class, which subsumes \emph{both} claw-free and \(K_{i,j}\)-free
graphs, for which the problem is FPT?

\paragraph{Acknowledgements.} We would like to thank anonymous referees for their valuable comments.

\bibliographystyle{plain}
\bibliography{domset-clawfree}

\end{document}